\documentclass[ejs,noshowframe]{imsart}

%% Packages
\RequirePackage{amsthm,amsmath,amssymb}
\RequirePackage[comma, sort&compress]{natbib}
\RequirePackage[colorlinks,citecolor=blue,urlcolor=blue]{hyperref}
\usepackage{mathabx}
\usepackage{mathtools}
\usepackage{enumitem}
\usepackage{varwidth}
\usepackage[linesnumbered, ruled, noline]{algorithm2e}

%\arxiv{2010.00000}
\startlocaldefs
%%%%%%%%%%%%%%%%%%%%%%%%%%%%%%%%%%%%%%%%%%%%%%
%%                                          %%
%% Uncomment next line to change            %%
%% the type of equation numbering           %%
%%                                          %%
%%%%%%%%%%%%%%%%%%%%%%%%%%%%%%%%%%%%%%%%%%%%%%
%\numberwithin{equation}{section}
%%%%%%%%%%%%%%%%%%%%%%%%%%%%%%%%%%%%%%%%%%%%%%
%%                                          %%
%% For Axiom, Claim, Corollary, Hypothesis, %%
%% Lemma, Theorem, Proposition              %%
%% use \theoremstyle{plain}                 %%
%%                                          %%
%%%%%%%%%%%%%%%%%%%%%%%%%%%%%%%%%%%%%%%%%%%%%%

\newtheorem{theorem}{Theorem}
\newtheorem{assumption}{Assumption}

\newtheorem{lemma}[theorem]{Lemma}

\theoremstyle{remark}

\newtheorem{claim*}{Claim}
\DeclareMathOperator*{\argmax}{arg\,max}

\DeclareMathOperator*{\sgn}{\text{sgn}}

\numberwithin{equation}{section}
\numberwithin{theorem}{section}
%% \numberwithin{lemma}{section}
%% \numberwithin{proposition}{section}
%% \numberwithin{corollary}{section}
\numberwithin{example}{section}
%\numberwithin{definition}{section}
%% \numberwithin{remark}{section}

\numberwithin{figure}{section}

       %for ancestors
       %for descendants
     % for district
       %  for children
       % for parents
      % for neighbours NOTE   \nei not \ne
       %for Markov blanket
       %for siblings
       %for half-trek-reachable nodes (dependencies)
\newcommand{\cR}{\mathcal{R}}
\newcommand{\cA}{\mathcal{A}}
\newcommand{\cM}{\mathcal{M}}
\newcommand{\bz}{\mathbf{z}}
\newcommand{\btheta}{{\boldsymbol \theta}}

\newcommand{\lfsr}{\text{\emph{lfsr}}}
\newcommand{\BHdir}{\text{BH}_{\text{dir}}}
\newcommand{\STSdir}{\text{STS}_{\text{dir}}}
\newcommand{\dBHdir}{\text{dBH}_{\text{dir}}}

\newcommand{\FDRdir}{\text{FDR}_{\text{dir}}}
\newcommand{\FDRdirhat}{\widehat{\text{FDR}}_{\text{dir}}}
\newcommand{\FDR}{\text{FDR}}
\newcommand{\ETD}{\text{ETD}}
%----------------------------------------------------------
% Reference macros
%----------------------------------------------------------

\newcommand{\assumpref}[1]{Assumption~\ref{assump:#1}}
\newcommand{\assumpsref}[1]{Assumptions~\ref{assump:#1}}
\newcommand{\assumpssref}[1]{\ref{assump:#1}}
\newcommand{\figref}[1]{Figure~\ref{fig:#1}}
\newcommand{\figsref}[1]{Figures~\ref{fig:#1}}
\newcommand{\figssref}[1]{\ref{fig:#1}}

\newcommand{\secref}[1]{Section~\ref{sec:#1}}

\newcommand{\algref}[1]{Algorithm~\ref{alg:#1}}
\newcommand{\appref}[1]{Appendix~\ref{app:#1}}

\newcommand{\lemref}[1]{Lemma~\ref{lem:#1}}

\newcommand{\thmref}[1]{Theorem~\ref{thm:#1}}

\newcommand{\thmsref}[1]{Theorems~\ref{thm:#1}}
\newcommand{\thmssref}[1]{\ref{thm:#1}}
\newcommand{\tabref}[1]{Table~\ref{tab:#1}}

%%%%%%%%%%%%%%%%%%%%%%%%%%%%%%%%%%%%%%%%%%%%%%
%% Please put your definitions here:        %%
%%%%%%%%%%%%%%%%%%%%%%%%%%%%%%%%%%%%%%%%%%%%%%
\endlocaldefs

\begin{document}
\begin{frontmatter}
\title{Adaptive Procedures for Directional False Discovery Rate Control}
%\title{A sample article title with some additional note\thanksref{t1}}
\runtitle{Adaptive directional FDR control}
%\thankstext{T1}{A sample additional note to the title.}

\begin{aug}
%%%%%%%%%%%%%%%%%%%%%%%%%%%%%%%%%%%%%%%%%%%%%%%
%% Only one address is permitted per author. %%
%% Only division, organization and e-mail is %%
%% included in the address.                  %%
%% Additional information can be included in %%
%% the Acknowledgments section if necessary. %%
%% ORCID can be inserted by command:         %%
%% \orcid{0000-0000-0000-0000}               %%
%%%%%%%%%%%%%%%%%%%%%%%%%%%%%%%%%%%%%%%%%%%%%%%
\author{\fnms{Dennis}~\snm{Leung}\ead[label=e1]{dennis.leung@unimelb.edu.au}}
\and
\author{\fnms{Ninh}~\snm{Tran}\ead[label=e2]{ninht@student.unimelb.edu.au}}
%%%%%%%%%%%%%%%%%%%%%%%%%%%%%%%%%%%%%%%%%%%%%%
%% Addresses                                %%
%%%%%%%%%%%%%%%%%%%%%%%%%%%%%%%%%%%%%%%%%%%%%%
\address{School of Mathematics and Statistics\\
University of Melbourne, Victoria, Australia\\
\printead{e1,e2}}
\runauthor{Leung and Tran}
\end{aug}

\begin{abstract}
In multiple hypothesis testing, it is well known that  adaptive procedures can enhance  power via incorporating information about the number of true nulls present. Under independence, we establish that  two adaptive false discovery rate (FDR) methods, upon augmenting sign declarations,   also offer \emph{directional} false discovery rate ($\FDRdir$) control in the strong sense. Such $\FDRdir$ controlling properties are appealing, because adaptive procedures have the greatest potential to   reap substantial gain in power  when the underlying parameter configurations contain little to no true nulls, which are precisely settings where the $\FDRdir$ is an arguably  more meaningful error rate to be controlled  than the FDR.
\end{abstract}

\begin{keyword}[class=MSC]
\kwd[Primary ]{62F03}
\kwd[; secondary ]{62C12}
\end{keyword}

\begin{keyword}
\kwd{Multiple testing}
\kwd{adaptive procedures}
\kwd{false discovery rate}
\kwd{directional inference}
\end{keyword}

\end{frontmatter}
%%%%%%%%%%%%%%%%%%%%%%%%%%%%%%%%%%%%%%%%%%%%%%
\section{Introduction} \label{sec:intro}

Consider independent observations $z_1,\dots,z_m \in \mathbb{R}$, where each ``$z$-value'' $z_i$ is a noisy measurement of an effect parameter $\theta_i \in \mathbb{R}$. We suppose $m$ is quite large, and will use the notational  shorthand $[m] \equiv \{1, \dots, m\}$ in the sequel. For testing the multiple point null hypotheses
%For a large $m$, \citet{benjamini1995controlling}  introduced what has now become a  standard  paradigm for exploring the  prominent effects, which is to test the point null hypotheses
\begin{equation} \label{ptnull}
H_i: \theta_i = 0, \quad i \in [m],
\end{equation}
\citet{benjamini1995controlling} proposed  their now-celebrated BH procedure to control 
 the \textit{false discovery rate}  (FDR), 
 \[
 \mathbb{E} \Bigg[ \frac{\sum^m_{i=1} \textbf{1}(\text{$H_i$ is rejected and $\theta_i = 0$})}{1 \vee \sum^m_{i=1} \textbf{1}( \text{$H_i$ is rejected} ) }\Bigg],
 \] 
below a target level $q \in (0,1)$.
However, many statisticians, such as  \citet{tukey1962future, tukey1991philosophy}  and \citet{gelman2000type}, consider  testing the point nulls in \eqref{ptnull}   futile, because the effects in reality, however small, are rarely exactly zero. Instead, they  argue that one should test the direction/sign of the effect by declaring either $\theta_i > 0$ or $\theta_i < 0$ as a discovery, or making no declaration about $\theta_i$ at all if there is insufficient evidence to support either direction. Under this  new paradigm, a generic  discovery procedure consists of two components:   

\begin{enumerate}[label=(\roman*)]
\item  $\cR \subseteq  \{1, \dots, m\}$, the set of rejected indices for which sign declarations (discoveries) are made,  and 
\item $\widehat{sgn}_{i}$, the positive or negative sign declared for each $i \in \cR$. (Note that $\widehat{sgn}_{i} \neq 0$.)
\end{enumerate}
We denote such a procedure, or its associated decisions, by $(\widehat{sgn}_{i})_{i \in \cR}$. Its error rate analogous to the FDR  is the \emph{directional} false discovery rate ($\FDRdir$), defined  as 
%\begin{multline*} \label{FDRdir}
\begin{equation} \label{FDRdir_def}
\FDRdir \Bigl[(\widehat{sgn}_{i})_{i \in \cR}\Bigr]
\equiv \mathbb{E}\left[\frac{ \sum_{i \in \cR} {\bf 1}(\text{sgn}(\theta_i) \neq  \widehat{sgn}_{i} )|}{1 \vee |\cR|} \right], 
\end{equation}
%\end{multline*}
and its power  can be  measured by the expected number of true discoveries (ETD)
\[
\ETD \Bigl[(\widehat{sgn}_{i})_{i \in \cR}\Bigr]  \equiv \mathbb{E}\left[ \sum_{i \in \cR} {\bf 1}(\text{sgn}(\theta_i) =  \widehat{sgn}_{i} )\right],
\]
where for any $x \in \mathbb{R}$, $\text{sgn}(x) \equiv \textbf{1} (x > 0)-\textbf{1} (x < 0)$. This paper treats the control of  $\FDRdir$  for the sign discoveries of the $\theta_i$'s; we don't  exclude the possibility that some effect parameters can indeed be zero, so a false discovery amounts to declaring $\theta_i <0$ when the truth is $\theta_i \geq 0$, or vice versa.\footnote{Some  works call \eqref{FDRdir_def} the \emph{mixed} directional false discovery rate, to emphasize the  two types of errors involved: those from declaring any sign at all for $\theta_i$ when $\theta_i = 0$ and those from declaring an opposite sign for $\theta_i$ when $\theta_i \neq 0$. }

Methods for controlling the $\FDRdir$  are surprisingly scant for the simple testing problem above. To our best knowledge, under  the independence among $z_1, \dots, z_m$ and some standard  assumptions  (\assumpref{symm} and \assumpssref{mlr} below), the only known procedure  that can provably control the $\FDRdir$ under a target level $q\in (0, 1)$ in the strong sense, i.e. irrespective of the configuration of $\theta_i$'s, is what we call the \emph{directional} Benjamini and Hochberg ($\BHdir$) procedure  proposed in \citet[Definition 6]{benjamini2005false}. The $\BHdir$ procedure first decides on the set of rejected indices among $[m]$ by applying the standard  BH procedure at level $q$ to the  two-sided $p$-values 
constructed from $z_1, \dots, z_m$, and then declares the sign of each rejected $\theta_i$ as $\text{sgn}(z_i)$.   \citet[Procedure 6]{guo2015stepwise} proposed another procedure almost  identical to the $\BHdir$,  except that the screening  BH step is applied at level $2q$ instead of $q$. This latter procedure, however,  can only control the $\FDRdir$ under its intended target level $q$ when all $\theta_1, \dots, \theta_m$ are nonzero. \citet[Procedures 7-9]{guo2015stepwise}, \citet{zhao2018controlling}  and \citet{guo2010controlling}   consider extensions of the current problem that involve either specific patterns of dependence  or  multidimensional directional decisions.

We will expand the repertoire of available methods for $\FDRdir$ inference with strong  theoretical guarantee. In the FDR literature,  it is known that \emph{adaptive}  methods \citep{benjamini2006adaptive, benjamini2000adaptive} that  incorporate a data-driven estimate of the proportion of true nulls 
\begin{equation} \label{null_prop_def}
\pi \equiv \frac{|\{i: \theta_i = 0\}|}{m}
\end{equation}
 into their procedures have the potential to improve upon the power offered by the vanilla BH procedure. Using martingale arguments, we prove that under independence, two  adaptive methods can also provide strong $\FDRdir$ control upon the augmentation of sign declarations. The first is  \cite{storey2004strong}'s adaptive FDR procedure, which can be seen as an adaptive variant of the BH procedure  (\secref{Storeydir}). The second is a specific  procedure belonging to  a more recent line of methods driven by a technique called ``data masking'' first introduced in \citet{lei2018adapt} (\secref{zdirect}). We also numerically demonstrate  their competitive power performances in \secref{simul}.

Adaptive procedures that can offer $\FDRdir$ guarantees are particularly important for settings whose underlying parameter configurations contain little to no true nulls. Arguably, if most $\theta_i$'s in question are non-zero, the $\FDRdir$ is more meaningful as an error measure compared to the FDR because querying about their signs matters more. Moreover, such ``non-sparse-signal'' settings are precisely those in which adaptive procedures can reap substantial gain in power; see \citet[Section 3.1]{storey2004strong}, where their adaptive FDR procedure demonstrates greater improvements in power over the BH procedure as $\pi$ decreases.  

\subsection{Notation and assumptions}
 
For $a, b \in \mathbb{R}$, we let $a \wedge b \equiv \min(a, b)$ and $a \vee b \equiv \max(a, b)$. For any two subsets $\mathcal{A}, \mathcal{B} \subset [m]$, $\mathcal{A}\subsetneq (\supsetneq) \mathcal{B}$ means $\mathcal{A}$ is a strict subset (superset) of $\mathcal{B}$. $U(\cdot; a, b)$ denotes a uniform density on the interval $[a, b]\subseteq  \mathbb{R}$. $\mathbb{E}_{{\boldsymbol \theta}}[\cdot]$ means a (frequentist) expectation with respect to fixed values of $\theta_1, \dots, \theta_m$. For each $i \in [m]$ and given $\theta_i = \theta$,  $F_{i,\theta}(\cdot)$ denotes the  distribution function of $z_i$ with density $f_{i,\theta}(\cdot) \equiv F_{i,\theta}'(\cdot) > 0$ with respect to the Lebesgue measure on $\mathbb{R}$ (so $F_{i,\theta}(\cdot)$ is implicitly assumed to be smooth and strictly increasing). $\Phi(\cdot)$ and $\phi(\cdot)$ denote the standard normal distribution and density functions, respectively. Additionally, the following assumptions will be made for the two  main theoretical results in this paper (\thmsref{Storeydircontrol} and \thmssref{zdirectcontrol}):

\begin{assumption}\label{assump:symm}
    The null distribution of $z_i$ is known and symmetric around zero, i.e. $F_{i,0}(-z) = 1 - F_{i,0}(z)$ and $f_{i,0}(-z) = f_{i,0}(z)$ for any $z \in \mathbb{R}$. 
\end{assumption}

\begin{assumption}\label{assump:mlr}
The family of densities $\{f_{i,\theta}(\cdot)\}_{\theta \in \mathbb{R} }$ satisfies the monotone likelihood ratio (MLR) property, i.e. for any given $\theta < \theta^* $ and $z < z^* $, 
$
  \frac{f_{i,\theta^*}(z)}{f_{i,\theta}(z)} \leq    \frac{f_{i,\theta^*}(z^*)}{f_{i,\theta}(z^*)} 
$.
\end{assumption}
These are not necessarily the weakest assumptions for our theorems to hold, but are standard enough so as not to distract from the key ideas of the proofs.  Essentially,  \assumpref{mlr} guarantees that  $z_i$ becomes ``stochastically larger'' as $\theta_i$ increases, and 
two examples satisfying  \assumpref{mlr} are the normal distributions $N(\theta,\sigma_i^2)$ for a fixed variance $\sigma_i^2$ and the noncentral $t$-distributions $NCT(\theta,v_i)$ for a fixed degree $v_i$ \citep[Section 3]{kruskal1954}. Moreover, the symmetry condition on  $F_{i,0}$ in \assumpref{symm} is not crucial; it is included primarily to streamline our presentation. If this condition isn't satisfied, instead of $z_i$, one can alternatively consider the transformed statistic $\Phi^{-1}(F_{i,0}(z_i))$ whose density has the form
\begin{equation}\label{transformed_density_of_zi}
\frac{f_{i, \theta}\left(  F^{-1}_{i, 0} \big(\Phi(z)\big) \right) }{
f_{i, 0}\left(F^{-1}_{i, 0} \big(\Phi(z)\big) \right) }\phi(z).
\end{equation} 
 As a function in $z$, \eqref{transformed_density_of_zi} boils down to the symmetric density function $\phi(z)$ when $\theta_i = 0$. The density \eqref{transformed_density_of_zi} also maintains the MLR property, provided that the base density $f_{i, \theta}(\cdot)$ satisfies the MLR property.

\section{Directional control with \cite{storey2004strong}'s adaptive  procedure} \label{sec:Storeydir}

Compute the two-sided $p$-values
\begin{equation} \label{2sided_pv}
p_i \equiv 2 F_{i,0}(- |z_i|), i \in [m]
\end{equation}
 from the null distributions $F_{i, 0}$, and for any $t \in [0,1]$, let
$
\cR (t) \equiv \{i: p_i \leq t\}
$
be the set of rejected indices $i$ defined by $p_i \leq t$; \algref{Storeydir} is a  sign-augmented version of  \cite{storey2004strong}'s adaptive  procedure  for $\FDRdir$ control, which we call the ``$\STSdir$'' for short.

\begin{algorithm}[h]
\caption{The $\STSdir$ procedure at target $\FDRdir$ level $q \in (0,1)$}
\label{alg:Storeydir}
\KwData{ $z_1, \dots, z_m$}
\KwIn{$\FDRdir$ target $q \in (0, 1)$, the two-side $p$-values $\{p_i\}_{i=1}^m$ from \eqref{2sided_pv}, and  $\{\text{sgn}(z_i)\}_{i=1}^m$; }

For a fixed tuning parameter $\lambda \in (0,1)$, compute   $\hat{\pi}(\lambda) \equiv \frac{| \{ i: p_i > \lambda \}|+1}{(1 - \lambda) m}$ as an estimate for $\pi$\;

Compute  $ t^{\lambda}_{q} \equiv \sup \left\{ t \in [0,1] :   \widehat{\FDR}_{\lambda}(t) \leq q \right\}$, where
\[
                    \widehat{\FDR}_{\lambda}(t) \equiv 
                        \begin{cases}
                            \frac{\hat{\pi}(\lambda)mt}{  |\cR(t)| \vee 1   }, &\text{if $t \leq \lambda$} \\
                            1, &\text{if  $t > \lambda$} 
                        \end{cases} ;
\]

Compute the rejection set $\mathcal{R}(t^{\lambda}_{q}) \equiv \{ i : p_i \leq t^{\lambda}_{q} \}$ \;
\KwOut{Sign discoveries  $(\text{sgn}(z_i))_{i \in \mathcal{R}(t^{\lambda}_{q}) }$.}
\end{algorithm}

 $\mathcal{R}(t^{\lambda}_{q})$ is precisely the rejection set produced by the  adaptive procedure in \citet[Theorem 3]{storey2004strong} that  was  proved to  offer strong FDR control for testing the point nulls in \eqref{ptnull}, under the assumptions that the null $p$-values are independent and uniformly distributed.
%, which is automatically satisfied by how the two-sided $p$-values are constructed  in \eqref{2sided_pv}, \assumpref{symm} and the independence among $z_1, \dots, z_m$.  
Roughly speaking, compared to the  BH procedure, which essentially uses 
\begin{equation} \label{BH_FDR_est}
\frac{m t}{|\cR(t)|\vee 1}
\end{equation}
 as an estimate for the FDR incurred by rejecting any $p_i$ below a given threshold $t$, \citet{storey2004strong}'s FDR estimate $ \widehat{\FDR}_{\lambda}(t)$ adjusts \eqref{BH_FDR_est} by the factor $\hat{\pi}(\lambda)$. This factor serves as a conservative estimate for the unknown null proportion $\pi \in [0, 1]$ in \eqref{null_prop_def}. If $\pi$ is close to zero, i.e. there are very few true nulls in the problem,  \citet{storey2004strong}'s procedure could produce substantially more rejections compared to the BH procedure. Our first result is that, by simply augmenting the rejections with sign declarations as in the output of \algref{Storeydir}, the procedure also offers strong $\FDRdir$ control under independence; we note that by the construction of the $p$-values in \eqref{2sided_pv} and \assumpref{symm}, it must be the case that $\text{sgn}(z_i) \neq 0$ for any $i \in \mathcal{R}(t^{\lambda}_{q})$:

\begin{theorem}[$\FDRdir$ control of $\STSdir$] \label{thm:Storeydircontrol}
%Let $R(t^{\lambda}_q) = | \mathcal{R}(t^{\lambda}_q) |$ and $S(t^{\lambda}_q)=| \{ i \in [m] :  p_i \leq t^{\lambda}_q \text{ and } \sgn(z_i) \neq \sgn(\theta_i) \}|$. 
 Under  \assumpsref{symm} and \assumpssref{mlr}, as well as the independence among $z_1, \dots, z_m$, \algref{Storeydir} controls the $\FDRdir$ at level $q \in (0, 1)$, i.e. by letting $\mathcal{S}(t) \equiv \{i : \text{sgn}(\theta_i) \neq  \text{sgn}(z_i)    \text{ and }  i \in \cR(t) \}$.
\begin{equation*}
    \FDRdir \left[ (\text{sgn}(z_i))_{i \in \mathcal{R}(t^{\lambda}_q)} \right] \equiv \mathbb{E}_{\boldsymbol{\theta}} \left[ \frac{ |\mathcal{S}(t^{\lambda}_q)| }{|\cR(t^{\lambda}_{q})| \vee 1} \right]\leq \Big(1 - \mathbb{E}_{\boldsymbol{\theta}}[\lambda^{|\mathcal{S}(1)|}]\Big) q  \leq  q.
\end{equation*}
\end{theorem}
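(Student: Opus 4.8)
The plan is to first linearize the random ratio defining $\FDRdir$ using the defining inequality of the Storey threshold, and then to exhibit a backward supermartingale whose optional stopping at $t^{\lambda}_{q}$ collapses everything to a single computation at the fixed level $\lambda$. To begin, I would record the elementary threshold inequality $\hat{\pi}(\lambda)\, m\, t^{\lambda}_{q}\le q\,(|\cR(t^{\lambda}_{q})|\vee 1)$, which holds because $\widehat{\FDR}_{\lambda}$ is right-continuous in $t$ and $t^{\lambda}_{q}$ is its final sub-$q$ crossing. Writing $V(t)\equiv|\mathcal{S}(t)|$ and $W(\lambda)\equiv|\{i:p_i>\lambda\}|$, and using $\hat{\pi}(\lambda)m=(W(\lambda)+1)/(1-\lambda)$, this immediately yields
\[
\FDRdir\bigl[(\text{sgn}(z_i))_{i\in\cR(t^{\lambda}_{q})}\bigr]\ \le\ q\,\mathbb{E}_{\boldsymbol\theta}\!\left[\frac{V(t^{\lambda}_{q})\,(1-\lambda)}{t^{\lambda}_{q}\,(W(\lambda)+1)}\right],
\]
so it remains to bound the last expectation by $1-\mathbb{E}_{\boldsymbol\theta}[\lambda^{|\mathcal{S}(1)|}]$.

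The key technical lemma is that, for each $i$, the \emph{directional-error $p$-value} $s_i\equiv p_i$ if $\text{sgn}(z_i)\neq\text{sgn}(\theta_i)$ and $s_i\equiv\infty$ otherwise is super-uniform in a sharp sense: setting $G_i(t)\equiv\mathbb{P}_{\boldsymbol\theta}(s_i\le t)$, the ratio $G_i(t)/t$ is non-decreasing on $(0,1]$, with $G_i(t)=t$ exactly when $\theta_i=0$. I would prove this by rewriting $\{s_i\le t\}$ as a one-sided tail event for $z_i$ (using \assumpref{symm} to turn $p_i\le t$ into $|z_i|\ge c_t$ for the corresponding magnitude threshold $c_t$) and then invoking \assumpref{mlr}: the monotone likelihood ratio forces the ratio of $F_{i,\theta_i}$ to $F_{i,0}$ (and, for the opposite sign, of the survival functions) to be monotone, which is precisely the asserted monotonicity. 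Two consequences, used below, are immediate: writing $\epsilon_i\equiv\mathbb{P}_{\boldsymbol\theta}(\text{sgn}(z_i)\neq\text{sgn}(\theta_i))=G_i(1)$, monotonicity gives $G_i(\lambda)\le\epsilon_i\lambda$; and since $\epsilon_i\le G_i(\lambda)+\mathbb{P}_{\boldsymbol\theta}(p_i>\lambda)$, it follows that $\mathbb{P}_{\boldsymbol\theta}(p_i>\lambda)\ge\epsilon_i(1-\lambda)$.

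Next I would run a backward-in-$t$ supermartingale argument on $(0,\lambda]$. The subtlety here, and the step I expect to be the main obstacle, is that $t^{\lambda}_{q}$ is a functional of the full $p$-value configuration (through $|\cR(\cdot)|$ and $\hat{\pi}(\lambda)$), whereas $V(t)$ depends on the \emph{signs} of the rejected coordinates; a naive empirical-process filtration renders one of the two non-adapted. I would resolve this by using the filtration $\mathcal{F}_t$ that reveals, for each $i$, the full pair $(p_i,\text{sgn}(z_i))$ whenever $p_i>t$, but only the sign $\text{sgn}(z_i)$ together with the bare fact $p_i\le t$ when $p_i\le t$. Under this choice $t^{\lambda}_{q}$ becomes a genuine stopping time, $V(t)$ is adapted, and a one-line computation of $\mathbb{E}_{\boldsymbol\theta}[V(t-dt)\mid\mathcal{F}_t]$ using the monotonicity of $G_i(t)/t$ shows that $V(t)/t$ is a backward supermartingale. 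Since $(1-\lambda)/(W(\lambda)+1)$ is measurable with respect to the coarsest algebra $\mathcal{F}_\lambda$, it passes through the conditional expectations, and optional stopping gives
\[
\mathbb{E}_{\boldsymbol\theta}\!\left[\frac{V(t^{\lambda}_{q})\,(1-\lambda)}{t^{\lambda}_{q}\,(W(\lambda)+1)}\right]\ \le\ \mathbb{E}_{\boldsymbol\theta}\!\left[\frac{V(\lambda)\,(1-\lambda)}{\lambda\,(W(\lambda)+1)}\right].
\]

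Finally I would evaluate the right-hand side at the fixed level $\lambda$. Because $\mathbf{1}(s_i\le\lambda)$ and $\mathbf{1}(p_i>\lambda)$ are mutually exclusive for each $i$ and independent across $i$, the leave-one-out identity $\mathbb{E}_{\boldsymbol\theta}[\mathbf{1}(s_i\le\lambda)/(W(\lambda)+1)]=G_i(\lambda)\,\mathbb{E}_{\boldsymbol\theta}[1/(W^{(i)}(\lambda)+1)]$, where $W^{(i)}(\lambda)$ omits index $i$, combined with the generating-function formula $\mathbb{E}_{\boldsymbol\theta}[1/(W^{(i)}(\lambda)+1)]=\int_0^1\prod_{j\neq i}(1-\mathbb{P}_{\boldsymbol\theta}(p_j>\lambda)(1-u))\,du$, turns the bound into a sum I can telescope. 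Substituting $G_i(\lambda)\le\epsilon_i\lambda$ and $\mathbb{P}_{\boldsymbol\theta}(p_j>\lambda)\ge\epsilon_j(1-\lambda)$ and recognizing the resulting expression as $\int_0^1\Phi'(u)\,du$ for $\Phi(u)\equiv\prod_i(1-\epsilon_i(1-\lambda)(1-u))$ yields exactly $1-\prod_i(1-\epsilon_i(1-\lambda))$. Since $|\mathcal{S}(1)|=\sum_i\mathbf{1}(\text{sgn}(z_i)\neq\text{sgn}(\theta_i))$ has independent terms, $\prod_i(1-\epsilon_i(1-\lambda))=\mathbb{E}_{\boldsymbol\theta}[\lambda^{|\mathcal{S}(1)|}]$, so the three displays chain together to give $\FDRdir\le q\,(1-\mathbb{E}_{\boldsymbol\theta}[\lambda^{|\mathcal{S}(1)|}])\le q$, as claimed.
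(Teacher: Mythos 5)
Your proposal is correct, and its skeleton coincides with the paper's: your threshold inequality $\hat{\pi}(\lambda)\,m\,t^{\lambda}_{q}\le q\,(|\cR(t^{\lambda}_{q})|\vee 1)$ is what the paper obtains by splitting on $\widehat{\FDR}_{\lambda}(\lambda)\ge q$ versus $<q$ (since $t^{\lambda}_{q}\le\lambda$ always, your unified version absorbs both cases; one quibble: attainment of the supremum follows from the monotonicity of $t\mapsto|\cR(t)|$ together with continuity of $t\mapsto\hat{\pi}(\lambda)mt$, not from right-continuity of $\widehat{\FDR}_{\lambda}$); your monotonicity of $G_i(t)/t$ is exactly \lemref{condprobs}, which the paper proves from the same MLR mechanism (\lemref{mlrexploit} plus the mean value theorem); and your backward supermartingale with optional stopping at $t^{\lambda}_{q}$, pulling the $\mathcal{F}_{\lambda}$-measurable factor $(1-\lambda)/(W(\lambda)+1)$ through, is \lemref{supermartingale} and the ensuing step verbatim — your filtration reveals $\text{sgn}(z_i)$ instead of the error indicator $\mathbf{1}(\text{sgn}(z_i)\neq\text{sgn}(\theta_i))$, but since the $\theta_i$'s are fixed parameters the two carry the same information for this purpose, and the conditional bounds of \lemref{condprobs} apply under either conditioning. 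Where you genuinely depart is the terminal evaluation at level $\lambda$: the paper bounds $W(\lambda)+1\ge|\mathcal{S}(1)|-|\mathcal{S}(\lambda)|+1$ pathwise, shows $|\mathcal{S}(\lambda)|$ given $|\mathcal{S}(1)|$ is stochastically dominated by $\text{Binomial}(|\mathcal{S}(1)|,\lambda)$, and invokes the exact binomial identity $\mathbb{E}[Y/(n-Y+1)]=\lambda(1-\lambda^n)/(1-\lambda)$ with its monotonicity in $\lambda$ (\lemref{exresult}, proved by induction); you instead keep $W(\lambda)$ itself, use the leave-one-out independence identity together with $\mathbb{E}[1/(N+1)]=\int_0^1\mathbb{E}[u^N]\,du$, and telescope via $\Phi'(u)$ for $\Phi(u)=\prod_i\bigl(1-\epsilon_i(1-\lambda)(1-u)\bigr)$. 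I checked the closing computation: substituting $G_i(\lambda)\le\epsilon_i\lambda$ and $P(p_j>\lambda)\ge\epsilon_j(1-\lambda)$ — both valid consequences of your lemma — gives $\int_0^1\Phi'(u)\,du=1-\prod_i\bigl(1-\epsilon_i(1-\lambda)\bigr)$, which equals $1-\mathbb{E}_{\boldsymbol{\theta}}[\lambda^{|\mathcal{S}(1)|}]$ by independence, i.e.\ exactly the paper's constant, without ever conditioning on $|\mathcal{S}(1)|$. Your route buys a direct, induction-free computation (no pathwise counting inequality, no stochastic-dominance-plus-moment lemma), at the cost of leaning on the generating-function representation; the paper's route is more modular, isolating a reusable binomial fact. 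Both rely equally on independence across coordinates, so neither is more general, and both deliver the same sharp intermediate bound $\bigl(1-\mathbb{E}_{\boldsymbol{\theta}}[\lambda^{|\mathcal{S}(1)|}]\bigr)q\le q$.
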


The proof of \thmref{Storeydircontrol} in \appref{Storeydir} relies on the optional stopping time theorem for supermartingales, which extends the original arguments by \citet[Theorem 3]{storey2004strong} based on martingales.
The choice of $\lambda$ entails a bias-variance trade-off in the estimation of $\pi$, and \cite{storey2004strong} fixes $\lambda = 0.5$ in their simulations. Alternatively, similarly to \citet[Section 6]{storey2004strong}, we provided a method in \appref{auto_lambda} for automatically selecting $\lambda$ based on the observed data $z_1, \dots, z_m$ to achieve a balance between bias and variance. Since \thmref{Storeydircontrol} is only valid on the premise of a fixed $\lambda$, using the provided method of automatically selecting $\lambda$ may not guarantee  $\FDRdir$ control. Regardless, our simulations in \secref{simul} demonstrate that this data-driven choice of $\lambda$ typically leads to empirically robust control of  $\FDRdir$.

\section{Directional control with  data masking: ZDIRECT} \label{sec:zdirect}

Early FDR   methods, such as  \citet{storey2004strong}'s adaptive procedure covered in \secref{Storeydir}, process data in a pre-determined manner to decide on the  set of rejected hypotheses.
% Recently, the   ``data-masking" technique proposed by \citet{lei2018adapt} has opened up new avenues for FDR testing. 
In contrast to the conventional belief that the design of a testing procedure should not be affected by the observed data's patterns to avoid data snooping, \citet{lei2018adapt} recently showed that, as long as the data are initially ``masked'' as a trade-off, one can iteratively interact with the gradually revealed data in a legitimate way to devise  valid FDR procedures. The flexibility of this masking technique  not only allows one to adapt to  information about the null proportion  in \eqref{null_prop_def} implied by the data, but also side information provided by suitable external covariates that are present; see \citep{lei2021,tian2021,leung2022zap,chao2021,yurko2020} for a series of follow-up works, including the ZAP (finite) algorithm proposed by one of the present authors  \citep[Algorithm 2]{leung2022zap}. 

Motivated by the construct of ZAP, 
we propose a  masking algorithm that is augmented with sign declarations for the rejected $\theta_i$'s, and is proven to provide strong    $\FDRdir$ control.
To facilitate our presentation, for each $i \in [m]$, we first define \begin{equation} \label{ui_def}
u_i \equiv F_{i,0}(z_i),
\end{equation}
the  probability integral transform  of $z_i$ under the null  that takes values in the unit interval $(0, 1)$,
as well as its \emph{reflection}
 \begin{equation}\label{reflection}
\widecheck{u}_i \equiv (0.5 - u_i ) I(u_i \leq 0.5) + (1.5 - u_i)  I(u_i > 0.5) 
\end{equation}
 about the midpoint $0.25$  of the left sub-interval $(0, 0.5)$, or about  the midpoint at $ 0.75$ of the right sub-interval $(0.5, 1)$, depending on whether $u_i \leq 0.5$ or $u > 0.5$. Moreover, we  let
 \begin{equation} \label{u_prime_def}
u_i' \equiv
  \begin{cases} 
u_i\wedge \widecheck{u}_i    & \text{if } u_i , \widecheck{u}_i  \in [0, 0.5]\\
u_i\vee \widecheck{u}_i    & \text{if } u_i , \widecheck{u}_i  \in (0.5, 1)
  \end{cases},
\end{equation}
which, for any given $i$, is the value between $u_i$ and $\widecheck{u}_i$ closer to the endpoints of the unit interval. In particular, $u_i'$ only provides  partial knowledge about the value of $u_i$: If $u_i' \leq 0.5$,  $u_i$ may be either $u_i'$ or $0.5 - u_i'$; if $u_i' >0.5$, $u_i$ may be either $u_i'$  or $1.5 - u_i'$. 

With these preparations, we are now in a position to describe our  iterative testing procedure for the directional inference of $\theta_1, \dots, \theta_m$ with a target $\FDRdir$ level $q \in (0, 1)$, whose  steps are listed out in \algref{zdirect}; we call it ``ZDIRECT'' as it  interacts with the $z$-values via  their one-to-one transformations $u_i$'s  for $\FDRdir$ control. Essentially, the algorithm iterates through  steps $t =0, 1, \dots$ to construct a strictly decreasing  sequence of subsets 
\begin{equation*} \label{decreasing_masked_sets}
[m] = \mathcal{M}_0 \supsetneq \mathcal{M}_1   \supsetneq \mathcal{M}_2 \cdots
\end{equation*}
based on the data, and from each $\cM_t$, it forms the candidate ``acceptance'' and ``rejection'' sets
\begin{multline*}
\mathcal{A}_t \equiv \{i: i \in \cM_t \text{ and }  u_i \in (0.25, 0.75)\} \text{ and }\\
 \mathcal{R}_t \equiv \{i: i \in \cM_t \text{ and } u_i\in (0, 0.25]  \cup  [0.75, 1)\};
\end{multline*}
the algorithm terminates at step $\hat{t} = \min\{t:\FDRdirhat (t) \leq q \}$ as soon as  the $\FDRdir$ estimate
\[
\FDRdirhat (t)  \equiv  \frac{1 + |\mathcal{A}_t|}{|\mathcal{R}_t|\vee1}
\]
falls below $q$, and declares the sign discoveries $(\text{sgn}(z_i))_{i \in \mathcal{R}_{\hat{t}}}$.  Since $\mathcal{R}_{\hat{t}} \subset \{i: u_i\in (0, 0.25]  \cup  [0.75, 1)\}$, it must be the case that $\text{sgn}(z_i) \neq 0$ for any $i \in \mathcal{R}_{\hat{t}}$, under \assumpref{symm}. Moreover, the sets $\cM_t$ are  shrunk  in such a way that these two conditions must be respected (line 6 in \algref{zdirect}):
\begin{enumerate}
%\item [(C1)] $\cM_0$ can be any subset in $[m]$ determined based on the ``masked" dataset $\{u_i'\}_{i=1}^m$ only.
\item[(C1)] $\cM_{t +1}$ must be constructed based only on the partial data  $\{\tilde{u}_{i, t}\}_{i \in [m]}$ available at step $t$, where 
 for each $i \in [m]$, we  define 
  \begin{equation} \label{U_tilde}
\widetilde{u}_{i, t}   \equiv 
 \begin{dcases*}
        u_i  & if $i \not\in  \mathcal{M}_t$\\
        u_i'& if $i \in  \mathcal{M}_t$
        \end{dcases*},
\end{equation}
which may reveal the true value of $u_i$ depending on whether $u_i \in \mathcal{M}_t$. Essentially,  any human/computer routine who shrinks $\cM_t$ to $\cM_{t+1}$ can only know that the true value of $u_i$ is one of  two possibilities  if $i$ is still in the ``masked'' set $\mathcal{M}_t $.  
 \item[(C2)] 
$\mathcal{M}_{t+1}$ must be a strict subset of $\mathcal{M}_t$
to ensure  the algorithm  does  terminate. 
\end{enumerate}

\begin{algorithm}[h]
\caption{The ZDIRECT procedure at target $\FDRdir$ level $q \in (0,1)$}
\label{alg:zdirect}
\KwData{ $z_1, \dots, z_m$}
\KwIn{$\FDRdir$ target $q \in (0, 1)$, the initial  set $\mathcal{M}_0 = [m]$
% chosen only based on  $\{u_i'\}_{i \in [m]}$
 ; }

\For{t= 0,1 \dots, }{
 Find the candidate ``acceptance set'' $\mathcal{A}_t \equiv \{i: i \in \cM_t \text{ and }  u_i \in (0.25, 0.75)\}$\;
 Find the candidate ``rejection set''  $\mathcal{R}_t \equiv \{i: i \in \cM_t \text{ and } u_i\in (0, 0.25]  \cup  [0.75, 1)\}$\;
Compute $\FDRdirhat (t)  \equiv  \frac{1 + |\mathcal{A}_t|}{|\mathcal{R}_t|\vee1}
$\;
  \eIf{$\FDRdirhat (t)  > q$}{
%   Update $s_{l,t+1}$ and $s_{r,t+1}$ while respecting the two conditions in \thmref{ZAPfiniteControl}.
%    E.g. Apply \algref{updateThreshold} in \appref{EM_finite_update}
 construct $ \cM_{t+1} \subsetneq  \cM_{t}  $ using only the partially masked data $\{\tilde{u}_{i, t}\}_{i \in [m]}$ from \eqref{U_tilde}\;
   }{
      Set $\hat{t}  = t$; break\;
  }
 }
\KwOut{Sign discoveries  $(\text{sgn}(z_i))_{i \in \mathcal{R}_{\hat{t}}}$.}
\end{algorithm}

 \thmref{zdirectcontrol}, which is proven in \appref{FDRdirpf}, states that ZDIRECT provides strong $\FDRdir$ control under the assumptions in this paper: 
\begin{theorem}[$\FDRdir$ control of ZDIRECT]\label{thm:zdirectcontrol}
Under  \assumpsref{symm} and \assumpssref{mlr}, as well as the independence between $z_1,\dots, z_m$, \algref{zdirect} controls the $\FDRdir$ at level $q \in (0, 1)$. Specifically, if 
%\[
%dFDP(t) = \frac{|\{i : \text{sgn}(\theta_i) \neq  \text{sgn}(z_i)    \text{ and }  i \in \mathcal{R}_{t} \}|}{ |\mathcal{R}_{t}|}
%\]
%\[
%\mathbb{E}[\frac{|\{i : \text{sgn}(\theta_i) \neq  \text{sgn}(z_i)    \text{ and }  i \in \mathcal{R}_{t} \}|}{ |\mathcal{R}_{t}|} ] \leq q
% \]
%denotes the directional false discovery proportion incurred at a given step $t$ and 
the algorithm terminates at step $\hat{t} = \min\{t: \FDRdirhat (t) \leq q\}$, we have
\begin{equation} \label{FDRdircontrol}
\mathbb{E}_{{\boldsymbol \theta}}\left[ \frac{|\{i : \text{sgn}(\theta_i) \neq  \text{sgn}(z_i)    \text{ and }  i \in \mathcal{R}_{\hat{t}} \}|}{1 \vee |\mathcal{R}_{\hat{t}}|} \right] \leq q.
\end{equation}
\end{theorem}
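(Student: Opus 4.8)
The plan is to follow the knockoff-style template of \citet{barber2019knockoff}, reducing the claim to a statement about a supermartingale driven by a family of conditionally sub-fair ``coins''. Write $\mathcal{S} \equiv \{i : \text{sgn}(\theta_i) \neq \text{sgn}(z_i)\}$ for the (frequentist-)fixed set of indices whose declared sign is erroneous, and for each step $t$ let $R_t \equiv |\mathcal{R}_t \cap \mathcal{S}|$ and $A_t \equiv |\mathcal{A}_t \cap \mathcal{S}|$ be the numbers of such indices among the current rejection and acceptance candidates. The numerator of the directional false discovery proportion in \eqref{FDRdircontrol} is exactly $R_{\hat t}$, while $A_t \le |\mathcal{A}_t|$. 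Since the algorithm terminates at the first $\hat t$ with $\FDRdirhat(\hat t) = (1 + |\mathcal{A}_{\hat t}|)/(|\mathcal{R}_{\hat t}| \vee 1) \le q$, we have $|\mathcal{R}_{\hat t}| \vee 1 \ge (1 + |\mathcal{A}_{\hat t}|)/q$, so that
\begin{equation*}
\frac{R_{\hat t}}{1 \vee |\mathcal{R}_{\hat t}|} \le q\,\frac{R_{\hat t}}{1 + |\mathcal{A}_{\hat t}|} \le q\,\frac{R_{\hat t}}{1 + A_{\hat t}}.
\end{equation*}
Taking $\mathbb{E}_{\boldsymbol\theta}$, it therefore suffices to establish the single inequality $\mathbb{E}_{\boldsymbol\theta}[R_{\hat t}/(1 + A_{\hat t})] \le 1$.

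The engine of the argument is the masking geometry together with \assumpsref{symm} and \assumpssref{mlr}. First I would observe that $u_i'$, and hence $\text{sgn}(z_i)$, is always revealed, whereas for a masked index the reflection $\widecheck{u}_i$ pairs the ``extreme'' rejection value of $u_i$ with the ``central'' acceptance value lying in the \emph{same} half of $(0,1)$; thus conditioning on the masked data leaves exactly one hidden bit per masked index, namely whether it is a rejection or an acceptance candidate. For $i \in \mathcal{S}$, say with $\text{sgn}(z_i) = -1$ and hence $\theta_i \ge 0$, the density of $u_i = F_{i,0}(z_i)$ equals the likelihood ratio $f_{i,\theta_i}/f_{i,0}$ evaluated at $F_{i,0}^{-1}(u_i)$, which \assumpref{mlr} forces to be nondecreasing in $u_i$; since the rejection value is the smaller member of the masked pair, the reject bit has conditional probability at most $\tfrac12$, with the $\theta_i = 0$ case (where \assumpref{symm} makes $u_i$ uniform) giving exactly $\tfrac12$. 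The symmetric argument covers $\text{sgn}(z_i) = +1$. Hence, conditional on the masked information, the reject indicators $\{b_i\}_{i \in \mathcal{S}}$ are independent and sub-fair.

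With these ingredients I would run the Barber--Cand\`es optional-stopping argument on the masked filtration $\mathcal{F}_t$ generated by the revealed history $\cM_0, \dots, \cM_t$, the always-known scores $\{u_i'\}$ and signs, and the aggregate counts $(|\mathcal{R}_s|, |\mathcal{A}_s|)_{s \le t}$. The decisive structural point is that each $\cM_{t+1} \subsetneq \cM_t$ is built from the masked data $\{\widetilde{u}_{i,t}\}$ alone, so the algorithm never selects which index to unmask using its hidden bit; consequently the bits of the indices still in $\cM_t \cap \mathcal{S}$ remain conditionally sub-fair, $\hat t$ is a genuine stopping time for $\mathcal{F}_t$, and $R_t/(1 + A_t)$ can be realized as a supermartingale whose value at the initial fully-masked step has expectation at most $1$ by the sub-fair-coin property. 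Optional stopping then yields $\mathbb{E}_{\boldsymbol\theta}[R_{\hat t}/(1 + A_{\hat t})] \le 1$, completing the reduction.

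I anticipate the main obstacle to be this last step: specifying $\mathcal{F}_t$ so that (a) the observable aggregate counts do not leak information that breaks the conditional sub-fairness of the still-masked bits, (b) $\hat t$ is adapted, and (c) $R_t/(1+A_t)$ is genuinely a supermartingale under the masked-data-only unmasking rule. This is precisely the delicate bookkeeping that \citet{barber2019knockoff} carry out for the knockoff filter, and transporting it to the reflection-based masking and to the directional (rather than null-based) error event is the crux; by comparison, the MLR/symmetry coin bound and the termination-rule reduction above are routine.
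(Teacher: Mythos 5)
Your proposal is correct and takes essentially the same route as the paper's proof: the identical reduction via the stopping rule to showing $\mathbb{E}_{\boldsymbol\theta}\bigl[|\mathcal{R}_{\hat t}\cap \mathcal{S}|/(1+|\mathcal{A}_{\hat t}\cap \mathcal{S}|)\bigr]\le 1$, the identical MLR-plus-symmetry bound showing the hidden accept/reject bit of each masked pair is (sub-)fair, and the identical optional-stopping machinery --- the ``delicate bookkeeping'' you defer to \citet{barber2019knockoff} is discharged in the paper by directly invoking \citet[Lemma 2]{lei2018adapt} (\lemref{leiResult}), whose filtration $\mathcal{G}_t=\sigma\{\mathcal{G}_{-1},\mathcal{C}_t,(b_i)_{i\notin\mathcal{C}_t},\sum_{i\in\mathcal{C}_t}b_i\}$ reveals the individual bits of unmasked and non-null indices rather than only the aggregate counts $(|\mathcal{R}_s|,|\mathcal{A}_s|)$ you list, which is exactly what makes $\cM_{t+1}$ adapted and $\hat t$ a stopping time. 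The only real deviation is cosmetic: your error set $\mathcal{S}=\{i:\text{sgn}(z_i)\neq\text{sgn}(\theta_i)\}$ is random rather than ``fixed'' as you call it (though masked-data-measurable, since the reflection preserves the half of $(0,1)$ and hence reveals $\text{sgn}(z_i)$), and you verify sub-fairness pointwise given $u_i'$, whereas the paper works with the bit-dependent random null set $\hat{\mathcal{H}}_0=\{i:\text{sgn}(z_i)E_i\neq\text{sgn}(\theta_i)\}$ and conditions only on the meager $\mathcal{G}_{-1}=\sigma\{\hat{\mathcal{H}}_0\}$, checking the coin bound through aggregate region probabilities such as $P_{\theta_i>0}(u_i\in[0.5,0.75))\ge P_{\theta_i>0}(u_i\in(0,0.25])$; the two sets agree on rejection candidates (where $E_i=+1$), so both give the same numerator and a valid lower bound $1+|\mathcal{A}_{\hat t}|\ge 1+|\mathcal{A}_{\hat t}\cap\cdot|$ for the denominator.
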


%Essentially, \algref{zdirect} iteratively constructs a sequence of non-increasing candidate rejection sets $\cR_t$ from $\cM_t$, until the ratio $\FDRdirhat (t)$, which estimates the $\FDRdir$ that would be incurred by making the sign declarations $(\text{sgn}(z_i))_{i \in \mathcal{R}_{t}}$, drops below $q$. In particular, the sets $\cM_t$ are  shrunk based on the data in such a way that these two conditions are respected:
%\begin{enumerate}
%%\item [(C1)] $\cM_0$ can be any subset in $[m]$ determined based on the ``masked" dataset $\{u_i'\}_{i=1}^m$ only.
%\item[(C1)] $\cM_{t +1}$ must be constructed based on the partial data  $\{\tilde{u}_{i, t}\}_{i \in [m]}$ available at step $t$ only.   Essentially,  any human/computer routine who shrinks $\cM_t$ to $\cM_{t+1}$ can only know that the true value of $u_i$ is one of  two possibilities  if $i$ is still in the ``masked'' set $\mathcal{M}_t $.  
% \item[(C2)] 
%$\mathcal{M}_{t+1}$ must be a strict subset of $\mathcal{M}_t$
%to ensure  the algorithm  does  terminate. 
%\end{enumerate}
Apart from the final sign declarations,  \algref{zdirect} inherits much of its structure from the ZAP (finite) algorithm in \citet{leung2022zap}, but is situated in the more general ``without-threshold'' framework \citep[Section 6.1]{lei2018adapt} that does not explicitly involve any thresholding functions.  In fact, a quantity like $\FDRdirhat (t)$ has been used as an FDR estimate in \citet{leung2022zap}. 
 To also make sense of it as a suitable $\FDRdir$ estimate for the sign discoveries of $(\text{sgn}(z_i))_{i \in \mathcal{R}_{t}}$, 
 note that an $i \in \mathcal{R}_t$ constitutes a false discovery if either
\begin{equation} \label{FD1}
u_i \in (0, 0.25] \text{ and } \text{sgn}(\theta_i) \geq 0,
\end{equation}
or
\begin{equation}  \label{FD2}
u_i \in [0.75, 1) \text{ and } \text{sgn}(\theta_i) \leq 0.
\end{equation}
By the continuity of $F_{i,0}$, each $u_i$ is  uniformly distributed when $\theta_i = 0$. 
 Suppose we are in an error-prone scenario where all  $\theta_i$'s are  either exactly or very close to zero, so that all $u_i$'s stochastically behave  much like uniform random variables. For  a given $i \in \cM_t$, the event $\{u_i \in (0, 0.25] \}$ and $\{u_i \in (0.25, 0.5] \}$ should be approximately equally likely, so  the set size $|\{i: i \in \cM_t \text{ and } u_i \in (0.25,  0.5]\}|$ is a reasonable estimate of the number of false discoveries by way of \eqref{FD1}. Analogously, $|\{i: i \in \cM_t \text{ and } u_i \in [0.5, 0.75)\}|$ serves as an estimate of the number of false discoveries by way of \eqref{FD2}. As such, $|\mathcal{A}_t|$ makes sense as an  estimate of the number of false discoveries in  $\mathcal{R}_t$, where the additive ``1'' in the numerator of  $\FDRdirhat (t) $ is a theoretical adjustment factor to make it conservative enough. In fact, this concept is akin to how the FDR estimate for the \emph{knockoff filter}  for variable selection in linear regressions \citep{barber2015controlling} can also serve as an $\FDRdir$ estimate when sign declarations are augmented \citep{barber2019knockoff}.

However, our specific methodology for updating $\cM_t$, as described next, considerably differs from existing data masking algorithms. Notably, we rely solely on $z_1, \dots, z_m$ as the available data for our problem, without harnessing external covariate information. This poses a greater challenge for boosting power, but our simulations in \secref{simul} demonstrate that ZDIRECT remains competitive in terms of power for $\FDRdir$ control when compared to other existing methods.

\subsection{Shrinking the masked sets $\cM_t$} \label{sec:oracle}

To achieve  power, we  aim to shrink $\cM_t$ in accordance with  conditions (C1)-(C2) in such a way that \algref{zdirect}  can mimic  the \emph{optimal discovery procedure} (ODP) under a Bayesian formulation\footnote{More precisely, it is an \emph{empirical Bayes} formulation, because $G(\cdot)$ is assumed to be unknown.} of the problem, which imposes the additional assumption that the effects are  random and independently generated by a common prior distribution, i.e.
\begin{equation} \label{prior}
\theta_i   \sim_{i.i.d.} G( \theta), \quad i \in [m],
\end{equation}
for an unknown distribution function $G(\cdot)$. For  a given target level $q \in (0, 1)$, the ODP, denoted by $ (\widehat{sgn}^{ODP}_i)_{i \in \cR^{ODP}}$, is  defined to be the procedure with the properties that
\[
\FDRdir \Bigl [(\widehat{sgn}^{ODP}_{i})_{i \in \cR^{ODP}}\Bigr]  \leq q
\]
and 
\begin{equation*}
\ETD \Bigl[ (\widehat{sgn}^{ODP}_{i})_{i \in \cR^{ODP}}\Bigr] 
 \geq 
\ETD \Bigl[ (\widehat{sgn}_{i})_{i \in \cR}\Bigr]
\end{equation*}
for any procedure $(\widehat{sgn}_i)_{i \in \cR}$  with $\FDRdir \Bigl[ (\widehat{sgn}_{i})_{i \in \cR}\Bigr]  \leq q$, where the expectation operator defining all the $\FDRdir$ and ETD quantities just mentioned is with respect to \emph{both} the randomness of the  data $\{z_i\}_{i \in [m]}$ and that of the parameters $\{\theta_i \}_{i \in [m]}$ under  \eqref{prior} (i.e. different from the frequentist $\mathbb{E}_{{\boldsymbol \theta}}[\cdot]$ operator). In other words, the ODP is the best procedure in terms of maximizing $\ETD$, among all that can control $\FDRdir$ under a target level. 

In order to shrink $\cM_t$ in a manner that \algref{zdirect} can mimic the ODP, we have to  better understand the latter's operational characteristics; to that end, we shall first intuitively grasp what the best course of action for a directional decision maker should be  under the Bayesian assumption  \eqref{prior}. If s/he were
 obliged to unambivalently
declare a non-zero sign for a \emph{specific} $\theta_i$ based on  $z_i$, the optimal strategy   is clearly the one outlined in \tabref{OptSign} based on the posterior probabilities $P(\theta_i < 0 |z_i)$ and $P(\theta_i > 0 |z_i)$, whose associated probability of making a false  discovery can be calculated as
\begin{equation} \label{lfsr}
\lfsr_i = P(\theta_i \leq 0 |z_i) \wedge P(\theta_i \geq 0 |z_i)
\end{equation}
and must be no larger than the probability of false discovery made by any other strategy. 
In the literature, the quantity in \eqref{lfsr} is known as the \emph{local false sign rate} for $i$ \citep[p.279]{stephens2017false}, and  a smaller $\lfsr_i$ suggests higher confidence in the sign declaration for  $\theta_i$ prescribed by  \tabref{OptSign}. Now, if s/he were to make non-zero sign declarations for the largest possible subset of parameters from $\{\theta_i\}_{i \in [m]}$ with $\FDRdir$ control in mind,  the intuition would be to prioritize making sign declarations for those $i$'s with the smallest local false sign rates, each using \tabref{OptSign}'s strategy. This is in fact what the  ODP does,  as stated in \thmref{ODP} below. We note that the ODP is not implementable in practice as the underlying prior $G(\cdot)$ is, by assumption, unknown  for computing the local false sign rates.

\begin{theorem}[Operational characteristics of the ODP under Bayesian formulation]\label{thm:ODP}
Assume the prior in \eqref{prior}, and that conditional on $\{\theta_i \}_{i \in [m]}$, $z_1, \dots, z_m$ are independent with respective distributions $F_{1,\theta_1}, \dots, F_{m,\theta_m}$. 
For  a given level $q \in (0, 1)$, the optimal discovery procedure $ (\widehat{sgn}^{ODP}_i)_{i \in \cR^{ODP}}$ must be such that 
%suppose $ (\widehat{sgn}^{ODP}_i)_{i \in \cR^{ODP}}$ is the optimal discovery procedure (ODP) such that
%\[
%\FDRdir \Bigl [(\widehat{sgn}^{ODP}_{i})_{i \in \cR^{ODP}}\Bigr]  \leq q
%\]
%and 
%\begin{equation*}
%\ETD \Bigl[ (\widehat{sgn}^{ODP}_{i})_{i \in \cR^{ODP}}\Bigr] 
% \geq 
%\ETD \Bigl[ (\widehat{sgn}_{i})_{i \in \cR}\Bigr]
%\end{equation*}
%for any procedure $(\widehat{sgn}_i)_{i \in \cR}$  with $\FDRdir \Bigl[ (\widehat{sgn}_{i})_{i \in \cR}\Bigr]  \leq q$. 
%It  must be true that
\begin{enumerate} [label=(\roman*)]
\item  $\text{lfsr}_i \leq \text{lfsr}_j$  for any $i \in \mathcal{R}^{ODP} \text{ and } j \in  [m] \backslash \mathcal{R}^{ODP}$, and
\item  For each $i \in \cR^{ODP}$, $\widehat{sgn}_i^{ODP}$ is declared in accordance with  \tabref{OptSign}.
\end{enumerate}
%Here, the expectation  defining the $\FDRdir$ and ETD  are with respect to both the data $\{z_i\}_{i \in [m]}$ and random parameters $\{\theta_i \}_{i \in [m]}$ under  \eqref{prior}.
\end{theorem}

\begin{table}[t]
  \centering
    \begin{tabular}{ | c  | p{6cm} |}
    \hline
    If &  Then \\ \hline
    $ P(\theta_i <  0 |z_i ) < P(\theta_i > 0 |z_i )$ & declare $\theta_i > 0$  \\ \hline
    $ P(\theta_i > 0  |z_i ) < P(\theta_i < 0 |z_i )$  & declare $ \theta_i < 0$ \\ \hline
    $ P(\theta_i > 0  |z_i ) = P(\theta_i < 0 |z_i )$ &  declare either $ \theta_i > 0$ or  $\theta_i < 0$\\  \hline
    \end{tabular}
  \caption{Optimal sign declaration strategy for a given $i$.}
  \label{tab:OptSign}
\end{table}

  The proof of  \thmref{ODP} is in \appref{lfsrOMTpf}, which extends the arguments in \citet[Theorem 2.1]{heller2021optimal} on the optimal FDR procedure for the point null testing problem in  \eqref{ptnull}. While the ODP cannot be operationalized in practice, one can make ZDIRECT mimic its characteristic that indices with the smallest local false sign rates get rejected first: At each step $t$, we aim to get rid of exactly one element from the masked set $\mathcal{M}_t $ that  has potentially the largest local false sign rate, since only elements remaining in the next $\mathcal{M}_{t+1} $ may  be  rejected.  In what follows, let 
 \begin{equation*} \label{z_tilde}
 z_i' \equiv  F_{i, 0}^{-1} ( u_i'), \quad \widecheck{z}_i \equiv  F_{i, 0}^{-1} ( \widecheck{u}_i) \quad \text{ and } \quad 
\widetilde{z}_{i, t}   \equiv 
 \begin{dcases*}
       z_i & if $i \not\in  \mathcal{M}_t$\\
             z_i'  & if $i \in  \mathcal{M}_t$
        \end{dcases*},
\end{equation*}
 which convert $u_i'$, $\widecheck{u}_i$ and $\widetilde{u}_{i, t} $ back onto their original $z$-value scale. Specifically,  we estimate the local false sign rates as
\begin{equation}\label{lfsrestimate}
\widehat{\lfsr}_{i, t} =
 \min\left(\frac{ \int_{\theta \leq 0} f_{i,\theta}( z_i')d \hat{G}_t( \theta) }{ \int f_{i,\theta}( z_i') d \hat{G}_t( \theta) }, \frac{ \int_{\theta \geq 0}  f_{i,\theta}( z_i')d \hat{G}_t( \theta) }{ \int  f_{i,\theta}( z_i') d \hat{G}_t( \theta) } \right) \text{ for each } i \in \mathcal{M}_t,
\end{equation}
where $\hat{G}_t (\cdot)$ is an estimate of $G(\cdot)$ based on the partial dataset $\{\tilde{z}_{i, t}\}_{i \in [m]}$ 
, or equivalently, $\{\tilde{u}_{i, t}\}_{i \in [m]}$ from \eqref{U_tilde}.
The index to be unmasked from $\cM_t$ is then 
\begin{equation}\label{stephenindex}
    \hat{i}_t = \argmax_{i \in \mathcal{M}_t} \widehat{\lfsr}_{i,t}, 
\end{equation}
which has the largest estimated local false sign rate; that the estimates in \eqref{lfsrestimate} are evaluated at the $z_i'$'s presumes that any given  masked element in $\mathcal{M}_t$ may come from the rejection set $\mathcal{R}_t$.

Now we describe how we get $\hat{G}_t(\cdot)$ based on the partial  data. Since the  prior $G(\cdot)$ is unknown, we will model it to have a \emph{unimodal} mixture density 
\begin{equation} \label{halfuniform}
g(\cdot ;{\bf w}) = w_0 \delta_0(\cdot) + \sum_{k = -K, \dots, -1, 1, \dots, K} w_k h_k(\cdot) 
\end{equation}
proposed in \cite{stephens2017false}, where ${\bf w} \equiv (w_{-K}, \dots, w_{-1}, w_0,  w_1, \dots, w_K)$ are mixing probabilities that sum to $1$,  $\delta_0(\cdot)$ denotes the delta function at zero, and $h_k$'s are uniform densities of the forms
\[
 h_k( \cdot) = 
  \begin{cases} 
  U(\cdot; 0, a_k ) & \text{if }  k = 1, \dots, K \\
 U(\cdot; a_k , 0)     & \text{if } k = -1, \dots, -K
  \end{cases}
\]
for predetermined endpoints $a_1, \dots, a_K  > 0$ and $a_{-1}, \dots, a_{-K} < 0$. The log-likelihood of \eqref{halfuniform} with respect to the partial data $\{\tilde{z}_{i, t}\}_{i \in [m]}$ at step $t$ can then be computed as 
\begin{equation} \label{logLike}
L_t({\bf w})=
\sum_{i \in [m] }\log \left[w_0  l_{0, i, t} +  \sum_{k = -K, \dots, -1, 1, \dots, K} w_k  l_{k, i, t}   \right] 
\end{equation}
where  $ l_{k, i, t} $ are the likelihoods of the  mixture components of the forms
\[
 l_{0, i, t} = 
  \begin{cases} 
   f_{i,0}(z_i) + f_{i,0}(\widecheck{z}_i)& \text{if }  i \in \mathcal{M}_t \\
   f_{i,0}(z_i)      & \text{if }    i \in [m] \backslash \mathcal{M}_t
  \end{cases}
\] 
and
\begin{equation} \label{component_log_lik}
 l_{k, i, t} = 
  \begin{cases} 
   \int  \left[ f_{i,\theta}(z_i) +  f_{i,\theta}(\widecheck{z}_i) \right ] h_k(\theta) d \theta & \text{if }  i \in \mathcal{M}_t \\
   \int f_{i,\theta}(z_i) \cdot h_k( \theta) d \theta     & \text{if }    i \in [m] \backslash \mathcal{M}_t
  \end{cases} \text{ for } \quad k =-K, \dots, -1,  1, \dots, K.
\end{equation}
When $f_{i, \theta}$ is from the family of  normal distributions $N(\theta, \sigma_i^2)$, the quantities in \eqref{component_log_lik} have closed-form expressions; if $f_{i, \theta}$ is from the family of noncentral $t$-distributions $NCT(\theta, \nu_i)$, methods for approximating  the quantities in \eqref{component_log_lik} are discussed in \appref{compLike}.  
From \eqref{logLike}, the density of $\hat{G}_t$ is taken as
$
\hat{g}_t (\cdot) = g(\cdot; \hat{{\bf w}}_t ),
$
where $\hat{{\bf w}}_t$ solves 
 the penalized maximum likelihood estimation:
\begin{equation}\label{optim}
\max_{\substack{{\bf w} : \\\sum_k w_k = 1, \\ w_k \geq 0  \forall k}} \left[ L_t({\bf w}) +  \sum_{k = -K}^K  ({\lambda_k - 1})\log   w_k \right].
\end{equation}
Above, the last term is a Dirichlet penalty with tuning parameters $\lambda_k >0$. 
%The details of selecting $K$, $a_k$ and the tuning parameters $\lambda_k$ are also deferred to \appref{compLike}.

{\bf Remarks}.  Adaptivity  is implicitly built into our algorithm, since the null probability $w_0$ in our modeling density \eqref{halfuniform}   is the Bayesian analogue of the frequentist null proportion in \eqref{null_prop_def}. By striving to mimic the operational characteristic of the ODP, it also allows adaptivity to other features, such as the asymmetry in the distribution of the $z_i$'s; in the FDR literature, it is well known that \emph{local false discovery rate} approach based on $z$-values can further boosts testing power by leveraging distributional asymmetry \citep{storeyleekdai2007optimal, sun2007oracle}, and the same discussion can carry  over to $\FDRdir$ control with local false sign rates.  We stress that although \eqref{halfuniform} may well be misspecified as a density for the hypothetical prior $G(\cdot)$, strong frequentist $\FDRdir$ control is guaranteed by \eqref{FDRdircontrol} in \thmref{zdirectcontrol}. Moreover,  our choice of it as a working model  carries two main advantages:
\begin{enumerate} [label=(\alph*)]
\item  \emph{Speed}: The iterative updates of $\cM_t$ for most existing FDR data-masking algorithms  are  computationally expensive, as they usually employ beta mixture models that require the EM algorithm \citep{dempster1977maximum} for estimation. On the contrary,  as explained in \citet[Supplementary material]{stephens2017false}, an optimization problem with the form  in  \eqref{optim} is convex and can be solved by fast and reliable interior point methods.  
\item   \emph{Appropriate flexibility}:  \citet{stephens2017false} argues for the plausibility of unimodality  in many real applications since most effects are close to zero while larger effects are decreasingly likely; by increasing the number of components $K$ and expanding the supporting intervals defined by $a_k$, \eqref{halfuniform} can approximate \emph{any unimodal distribution about zero} \citep[p.158]{feller_vol2_1971}. On the other hand, as heuristically discussed in \citet[Section 5]{leung2022zap}, having a too-flexible model could overfit the partial data $\{\tilde{z}_{i, t}\}_{i \in [m]}$, which one is constrained to work with to adhere to data masking, only to underfit the original data $\{z_{i, t}\}_{i \in [m]}$. \citet[Section 3.1.4]{stephens2017false} also advocates for unimodality as a form of ``regularization'' because it prevents density estimates from concentrating in small pockets. Our simulation results in \secref{simul} show that ZDIRECT can still be competitive against other practical methods  even when the unimodality is misspecified.
\end{enumerate}

% simulation section
\section{Simulations} \label{sec:simul}

\subsection{Simulation setups} \label{sec:simulation_setup_prototype}
We simulate $m = 1000$ independent $z_i \sim N(\theta_i, 1)$ values where $\theta_1, \dots, \theta_m$ are independently generated from a mixture density of the form 
\begin{equation} \label{simul_prior}
g(\theta) = w \delta_0 (\theta) + (1 - w)g_1(\theta),
\end{equation}
where $\delta_0(\cdot)$ is the delta function at zero for the nulls, 
and $g_1(\theta)$ is an  ``alternative''  density  which is itself a mixture of two normal components of the form
\[
g_1(\theta) = (1 - v) \phi( \theta + \xi) + v \phi(\theta - \xi).
\]
The simulation parameters controlling different aspects are chosen as follows.
\begin{enumerate} [label=(\alph*)]
\item $w$ (\emph{signal sparsity parameter}): Takes  one of the values in $\{0.8, 0.5, 0.2, 0\}$. Letting $w = 0.8$ renders a setting with approximately $80\%$ of the $\theta_i$'s equal to $0$, and taking $w = 0$ renders a setting with all $\theta_i$'s being non-zero. 
\item $\xi$ (\emph{signal size parameter}): Takes one of the values in  $\{0.5, 1, 1.5, 2, 2.5\}$. It influences the absolute value of an effect $\theta_i$ if it is non-zero.
\item $v$ (\emph{asymmetry parameter}): Takes one of the values in  $\{0.5, 0.75, 1\}$. It controls the proportion of the alternative $\theta_i$'s generated from the positively centered normal component $\phi( \theta - \xi)$; a larger $v$ makes $g(\theta)$ more asymmetric.
\end{enumerate}
How the different combinations of  $w$, $\xi$ and $v$ change the shape of $g(\theta)$ is illustrated in \figref{gplot}; note that many of these $g(\theta)$ are evidently not unimodal.

\begin{figure*}[h]
\centering
\includegraphics[scale = 0.6]{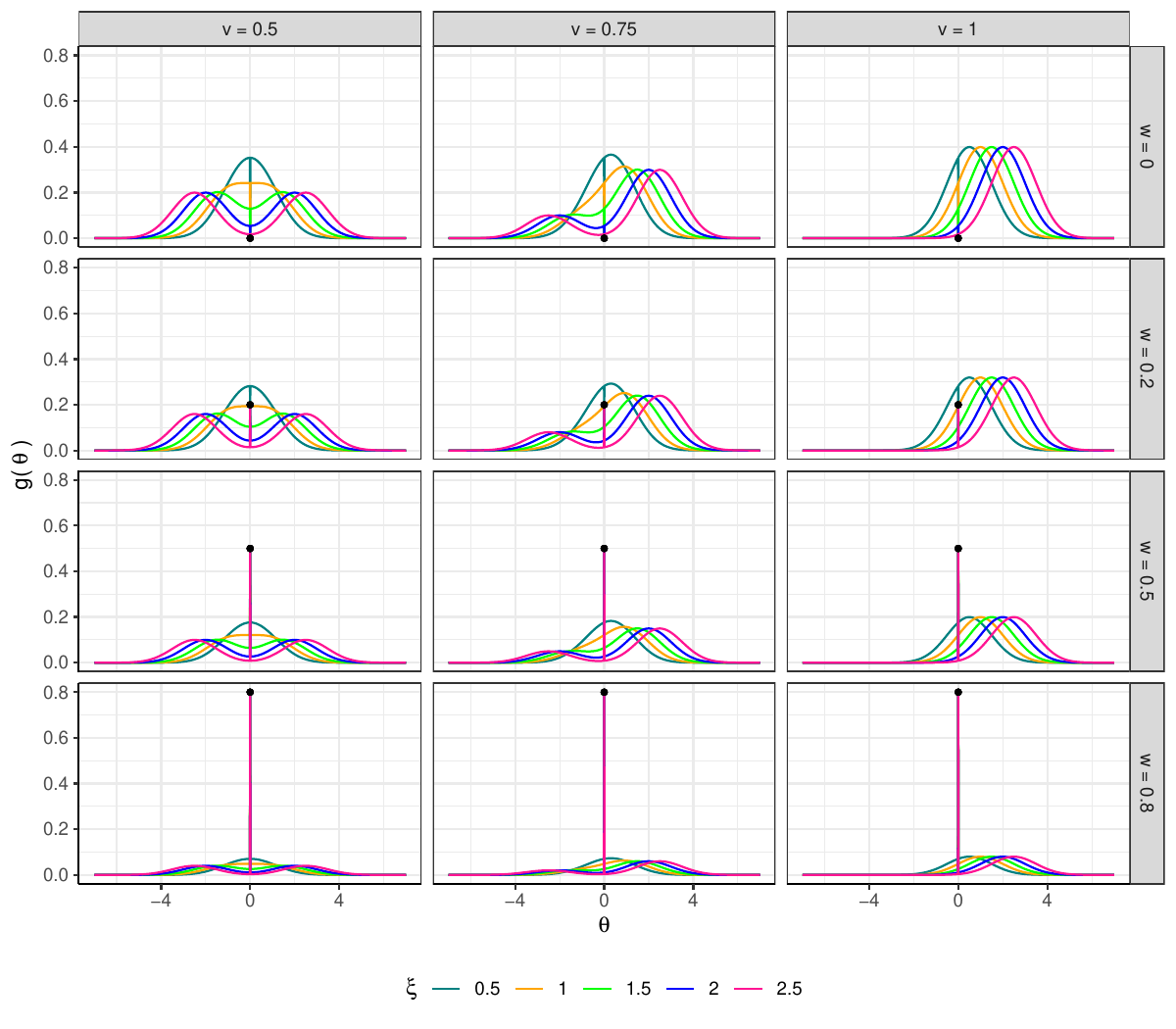}
\caption{
Plots of the effect generating density $g(\theta)$ for every possible combination of $w$, $\xi$ and $v$ considered in \secref{simul}.
 }
\label{fig:gplot}
\end{figure*}

\subsection{Methods compared} \label{sec:methods_compared_sim}

We compare the following seven methods for $\FDRdir$ control:

\begin{enumerate} [label=(\alph*)]
\item BH$_\text{dir}$: The directional BH procedure proposed by \citet{benjamini2005false}. We note that the validity of  BH$_\text{dir}$ is originally proved  under the assumption that the family $\{F_{i, \theta}(\cdot)\}_{\theta \in \mathbb{R}}$ is stochastically increasing, which is implied by the MLR property in \assumpref{mlr} \citep[Lemma 3.4.2 $(ii)$]{lehmann2005testing}.
\item LFSR: A  computationally simpler substitute for the ODP based on the oracle $\lfsr_i$'s in \eqref{lfsr}; inspired by  \citet{sun2007oracle}'s earlier work on optimal FDR control. Its implementation details are deferred to \appref{imple}. The  implementation of the ODP described in \thmref{ODP} entails solving a complex infinite integer problem to compute a rejection threshold for the $\lfsr_i$'s \citep{heller2021optimal}. In comparison, LFSR is  simpler to compute whilst often having comparable power. Like the ODP, LFSR offers $\FDRdir$ control under the Bayesian formulation in \eqref{prior}, but it can only serve as a hypothetical benchmark for other methods since it also requires oracle knowledge of the true generating prior in \eqref{simul_prior}.

\item ASH (``{\bf a}daptive {\bf sh}rinkage'', \citet{stephens2017false}): A procedure that is almost the same as LFSR in its implementation, except that the oracle $g(\cdot)$ from \eqref{simul_prior} is replaced by an estimated $g(\cdot; \hat{\bf w})$ based on the unimodal model in \eqref{halfuniform}, where $\hat{\bf w}$ is a penalized maximum likelihood estimate of ${\bf w}$  with respect to the full data $\{z_i\}_{i \in [m]}$ obtained by the R package \texttt{ashr}; all tuning parameters involved are chosen to be the default described in \citet[Supplementary information]{stephens2017false}. This procedure may risk violating the desired $\FDRdir$ target  if the unimodal density \eqref{halfuniform} is  too far from the true prior density of the $\theta$'s.
\item GR (\citet[Procedure 6]{guo2015stepwise}): The $\FDRdir$ testing procedure mentioned in \secref{intro} that provides (frequentist) control of the $\FDRdir$ under the target level $q$ when all the $\theta_i$'s are non-zero, i.e. $\pi = 0$; see \citet[Theorem 5 and its proof]{guo2015stepwise}.

\item $\STSdir$:  \algref{Storeydir} by setting $\lambda = 0.5$.

\item aSTS$_\text{dir}$: The  $\STSdir$ procedure, except with an automatic (data-driven) choice for $\lambda$ as described in \appref{auto_lambda}. It has no proven theoretical control of the $\FDRdir$.

\item ZDIRECT: \algref{zdirect}, by initializing $\cM_1 = \{i: u_i' \leq 0.2 \text{ or } u_i' \geq 0.8\}$ based on the $\{\tilde{u}_{i, 0}\}_{i=1}^m = \{\tilde{u}_i'\}_{i=1}^m$, which respects condition (C1). Thereafter, $\cM_t$ is updated as in \secref{oracle}, where the optimization in \eqref{optim} is performed using a solver in the R package \texttt{Rmosek} \citep{mosek}. The points $a_k$ are picked to give a large and dense grid; in particular, for the positive supports, the minimum and maximum are set as $a_1 = 10^{-1}$ and $a_K = 2 \sqrt{  \max_i z_i'^2 - 1}$, with the rest set as $a_{k+1} = \sqrt{2} a_k \leq a_K$ based on the multiplicative factor $\sqrt{2}$ (and so, $K$ is implicitly determined). The negative supports are set by taking $a_{-1} = -a_1, \dots, a_{-K} = -a_K$. This grid follows the recommendation of \citet{stephens2017false} except $a_K$ is determined with $z_i'$'s instead of $z_i$ to observe the masking condition (C1). Moreover, we set $\lambda_k = 0.8$ for all $k = -K, \dots, 0, \dots, K$. This further regularizes the estimation by encouraging sparsity in the estimates of the mixing proportions ${\bf w}$, and provides consistently good performance. Lastly, to speed up the algorithm,  $\hat{g}_t(\cdot)$ is only re-estimated by \eqref{optim} for every $\lceil m/200 \rceil$ steps, i.e., the same $\hat{g}_t(\cdot)$ is used $\lceil m/200 \rceil$ times to update the candidate rejection and acceptance sets before the algorithm terminates.

\end{enumerate}

\subsection{Results}
The empirical $\FDRdir$ and power of the different methods implemented for the target $\FDRdir$ level  $q = 0.1$ are evaluated with 1000 sets of repeatedly generated  $\{z_i, \theta_i\}_{i \in [m]}$. The results are shown in \figref{result}, where the power is shown as the \emph{true positive rate}, defined as 
$
\mathbb{E}\left[\frac{ \sum_{i \in \cR} {\bf 1}(\text{sgn}(\mu_i) =  \widehat{sgn}_{i} )|}{ 1 \vee |\cR|} \right] 
$ for a generic procedure $(\widehat{sgn}_{i})_{i \in \cR}$, 
which some consider to be  more illustrative than the ETD. The following observations can be made:

\begin{enumerate} [label=(\alph*)]
\item Throughout, the only methods that can visibly control $\FDRdir$ under the target $q = 0.1$ in all settings are LSFR, $\BHdir$, ZDIRECT and $\STSdir$, precisely the ones with  theoretical guarantees. However, LSFR is not implementable in practice and only has $\FDRdir$ guarantee under the Bayesian formulation in \eqref{prior}. While ZDIRECT and $\STSdir$ still lag considerably  behind in power compared to LFSR in  settings with small $w$ (or small $\pi$, as a frequentist analogue), their power advantage over $\BHdir$ becomes substantial as $w$ approaches $0$. Across the board, aSTS$_\text{dir}$, which is calibrated with a data-driven $\lambda$ and has no theoretical guarantee, displays slightly better power than $\STSdir$, but also violates the $\FDRdir$ target ever so slightly for large $\xi$ when $w = 0.8$. 

\item ASH  is the one practical method that overall matches LFSR closest in power, but
severely violates the desired $\FDRdir$ level in some settings when $w \in \{ 0.2,0.5 \}$. This is not surprising because the unimodal working model in \eqref{halfuniform}, which ASH is based on, is misspecified for many of the multimodal data generating $g(\cdot)$ in \figref{gplot}.

\item  GR visibly violates the $\FDRdir$ target when $w \in \{ 0.5, 0.8 \}$, and severely so for $w = 0.8$. When $w = 0$, the only case where GR can provably control the $\FDRdir$, GR's power is at best comparable to  $\STSdir$ and ZDIRECT when the signal size $\xi$ is small, but inferior to them when $\xi$ is large.

\item  ZDIRECT's power is considerably better than $\STSdir$ when $v = 1$, the most asymmetric setting for $g$.  As discussed in the remarks of \secref{oracle}, by attempting to mimic the operational characteristics of the ODP via estimating the $\lfsr_i$ quantities in \eqref{lfsrestimate},  ZDIRECT has the potential to leverage asymmetry in the distribution of the $z$-values to boost testing power, just like the ODP does. This is even more remarkable, considering that the working model \eqref{halfuniform} is obviously misspecified for the true $\theta$-generating prior in \eqref{simul_prior}, which attests to the practical usefulness of \cite{stephens2017false}'s unimodal assumption when combined with ZDIRECT's data masking mechanism to ensure strong $\FDRdir$ control (\thmref{zdirectcontrol}).

\end{enumerate}

% figure
\begin{figure*}[htp]
\centering
\includegraphics[scale=0.66]{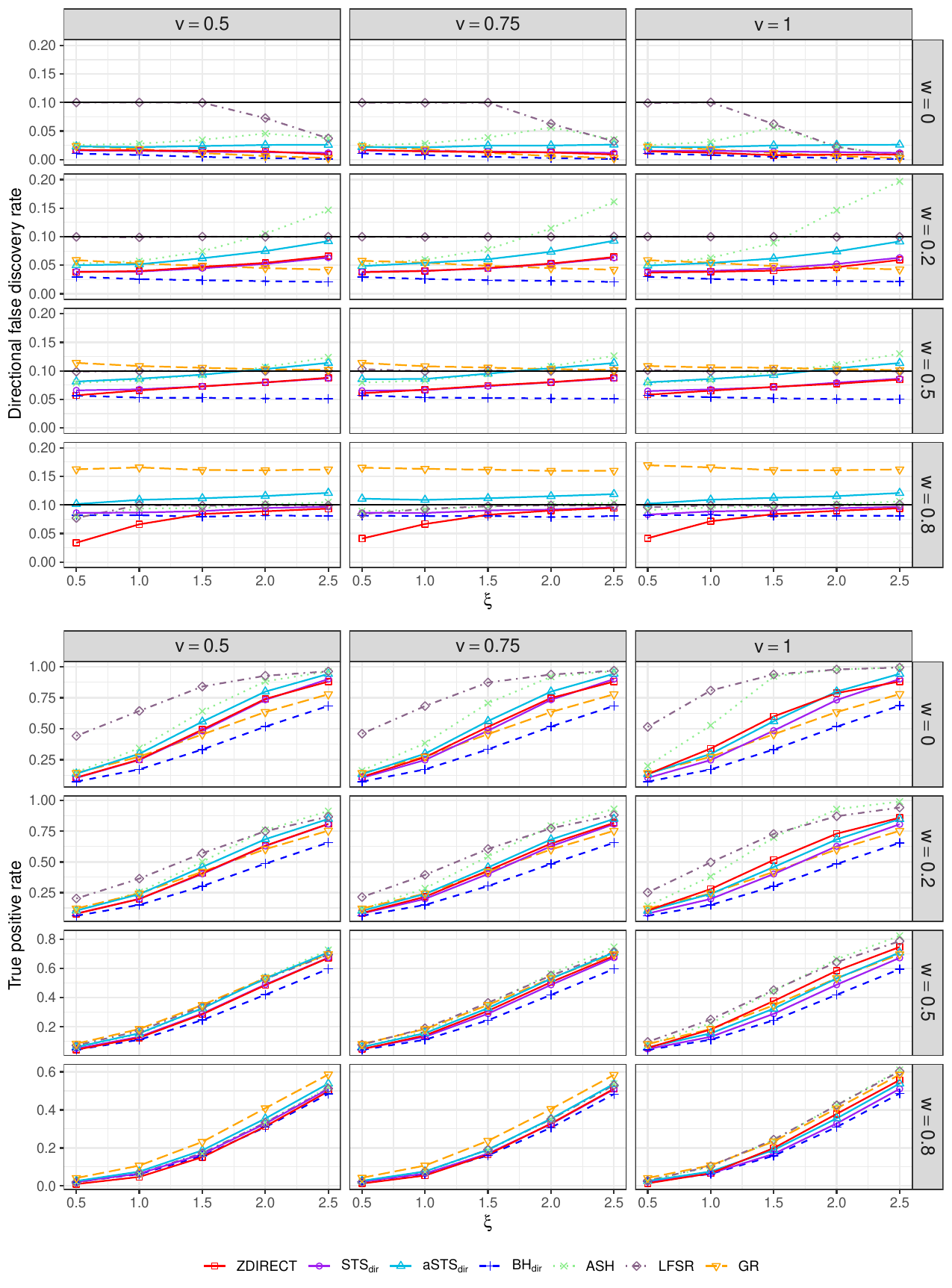}
\caption{
Empirical directional false discovery rate and true positive rates of the seven compared methods for the simulations in \secref{simul}; each method was implemented at a target $\FDRdir$ level $q =0.1$ (black horizontal lines).
% \texttt{densest zeros}, \texttt{denser zeros}, \texttt{sparse zeros} and \texttt{no zeros} correspond to $w = 0.8, 0.5, 0.2, 0$ and $0$ respectively. \texttt{symmetric}, \texttt{more asymmetric} and \texttt{most asymmetric} correspond to $v = 0.5, 0.75, 1$ respectively.
 }
\label{fig:result}
\end{figure*}

\newpage
\section{Discussion} \label{sec:conclude}
We have proved that, under independence and upon augmenting sign declarations, \cite{storey2004strong}'s adaptive procedure and ZDIRECT, a particular implementation of the recently introduced line of adaptive ``data masking'' algorithms, can offer  $\FDRdir$ control in the strong sense. These results are particularly important when the parameter configurations contain little to no true nulls because adaptive procedures precisely reap the most power benefit in such scenarios. Moreover, under ``non-sparse-signal'' settings, $\FDRdir$ is arguably a more meaningful error rate to be controlled than the $\FDR$. Both methods require tuning parameters; in our experience, the simple choice of $\lambda = 0.5$ for $\STSdir$ and $\lambda_k = 0.8$ for ZDIRECT have consistently given us competitive power performance, even for some less instructive simulation setups considered in earlier versions of this paper. For ZDIRECT, while other working models that may  further boost the power can be  deployed, we find the current implementation based on  \citet{stephens2017false}'s unimodal model to be attractive, as the  optimization under the hood is numerically very stable and fast.  While our theory doesn't cover settings where the $z$-values can be dependent, additional simulation results in this vein are included in  \appref{simResultsExtra}.

An interesting dual problem to sign declarations is to construct, for each $i$ in a data-dependent selected subset 
$\mathcal{R} \subseteq [m]$,  
a  confidence interval $CI_i \subseteq \mathbb{R}$   such that
\begin{enumerate}[label=(\alph*)]
\item each $CI_i$ is \emph{sign-determining},
 i.e. $CI_i \subseteq  (- \infty, 0]$ or $CI_i \subseteq  (0, \infty)$, and 
\item  the false coverage rate (FCR)
\[
\mathbb{E} \left[ \frac{ |\{i: \theta_i \in  CI_i\}|}{ 1 \vee |\mathcal{R}|}\right]
\]
is controlled under a desired level $q \in (0, 1)$.
\end{enumerate}
This paradigm of inference has been recently suggested in  \citet{weinstein2020selective}, and they proposed a first procedure  that constructs such selective sign-determining confidence intervals. The adjective ``selective'' indicates that the set $\cR$ is also chosen based on the same data  $\{z_i\}_{i=1}^m$  the $CI_i$'s are constructed with. 
 Note, $\STSdir$ and ZDIRECT  do correspond to such procedures that construct trivially long intervals  
\[
CI_i = 
  \begin{cases} 
 (0, \infty)& \text{if }  z_i > 0 \\
(-\infty, 0)  & \text{if }    z_i < 0
  \end{cases}
\]
for each $i$ in their rejection sets. It will be a challenging but meaningful task to devise non-trivial selective sign-determining confidence intervals where the target set $\cR$ is chosen in a more adaptive manner akin to $\STSdir$ and ZDIRECT, to offer more powerful alternative procedures to \citet[Definition 2]{weinstein2020selective}'s procedure.

%%%%%%%%%%%%%%%%%%%%%%%%%%%%%%%%%%%%%%%%%%%%%%
%% Example with multiple Appendixes:        %%
%%%%%%%%%%%%%%%%%%%%%%%%%%%%%%%%%%%%%%%%%%%%%%
\begin{appendix}
\section{Additional content for \secref{Storeydir}}

\subsection{Proof of \thmref{Storeydircontrol}}\label{app:Storeydir}

We shall first state four intermediate results, which allow us to extend the arguments in \citet[Section 4.3]{storey2004strong} to prove the $\FDRdir$ controlling properties of $\STSdir$ under 
\assumpsref{symm} and \assumpssref{mlr}.

\begin{lemma}\label{lem:mlrexploit}
    Under \assumpref{mlr}, for $z \in \mathbb{R}$, 
    \begin{enumerate}[label=(\alph*)]
        \item  if $\theta_i < 0$, then $\left[ \frac{f_{i,\theta_i} (z) }{ f_{i,0} (z) }  \right] \left[ 1 - F_{i,0}(z) \right] - \left[ 1 - F_{i,\theta_i} \left( z ) \right) \right] \geq 0$;
        \item if $\theta_i > 0$, then $\left[ \frac{f_{i,\theta_i} (z) }{ f_{i,0} (  z) }  \right] F_{i,0}(z)  -  F_{i,\theta_i} ( z )  \geq 0$.
    \end{enumerate}

\end{lemma}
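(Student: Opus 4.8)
The plan is to translate the MLR hypothesis into a monotonicity statement about the single-variable likelihood ratio and then bound a tail (resp. lower) integral by pulling this ratio outside the integral at the endpoint $z$. Concretely, I would introduce the shorthand $r(y) \equiv f_{i,\theta_i}(y)/f_{i,0}(y)$ (well-defined since densities are assumed strictly positive) and first record what \assumpref{mlr} says about its monotonicity. For part (a), set $\theta = \theta_i < 0$ and $\theta^\ast = 0$ in \assumpref{mlr}; since $\theta < \theta^\ast$, the inequality $f_{i,0}(z)/f_{i,\theta_i}(z) \leq f_{i,0}(z^\ast)/f_{i,\theta_i}(z^\ast)$ for $z < z^\ast$ says that $1/r$ is nondecreasing, hence $r$ is \emph{nonincreasing}. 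For part (b), set $\theta = 0$ and $\theta^\ast = \theta_i > 0$; the same inequality now says $r$ itself is \emph{nondecreasing}.

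For part (a), I would write the upper tail as
\[
1 - F_{i,\theta_i}(z) = \int_z^\infty f_{i,\theta_i}(y)\,dy = \int_z^\infty r(y)\, f_{i,0}(y)\,dy,
\]
and then use that $r$ is nonincreasing, so $r(y) \leq r(z)$ for all $y \geq z$. Bounding the integrand gives
\[
1 - F_{i,\theta_i}(z) \leq r(z)\int_z^\infty f_{i,0}(y)\,dy = r(z)\,[1 - F_{i,0}(z)],
\]
which is exactly the claimed inequality after rearranging. For part (b), the symmetric argument applies on the lower tail: write
\[
F_{i,\theta_i}(z) = \int_{-\infty}^z r(y)\, f_{i,0}(y)\,dy,
\]
and since $r$ is now nondecreasing, $r(y) \leq r(z)$ for all $y \leq z$, yielding $F_{i,\theta_i}(z) \leq r(z)\,F_{i,0}(z)$, i.e.\ the desired bound after substituting $r(z) = f_{i,\theta_i}(z)/f_{i,0}(z)$.

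The computation itself is short; the only point demanding care—and what I would flag as the main obstacle—is bookkeeping the \emph{direction} of monotonicity of $r$, which flips between the two cases according to $\sgn(\theta_i)$, and correspondingly matching it with the right half-line of integration (upper tail for $\theta_i<0$, lower tail for $\theta_i>0$) so that the endpoint value $r(z)$ dominates the integrand throughout the region of integration. Getting the ``$\theta<\theta^\ast$'' orientation in \assumpref{mlr} correct when $\theta_i$ is negative is the easiest place to slip; once that is pinned down, each inequality follows from a single application of the endpoint bound on $r$ together with the fact that $f_{i,0}$ integrates to the corresponding tail probability.
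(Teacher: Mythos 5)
Your proof is correct and is essentially the paper's own argument: the paper likewise instantiates \assumpref{mlr} with $0$ as one of the two parameter values, multiplies the resulting pointwise ratio inequality by $f_{i,0}$, and integrates over the appropriate half-line (upper tail for $\theta_i<0$, lower tail for $\theta_i>0$), which is exactly your endpoint bound $r(y)\leq r(z)$ on the region of integration. The only difference is presentational—you phrase it via monotonicity of $r(y)\equiv f_{i,\theta_i}(y)/f_{i,0}(y)$, while the paper integrates the two-point MLR inequality directly—so there is nothing substantive to flag.
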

\begin{proof}[Proof of \lemref{mlrexploit}]
    To prove statement (a), let $\theta_i < 0$ and let $z_0, z_1 \in \mathbb{R}$ such that $z_0 < z_1$. By \assumpref{mlr}, we have that
    \begin{align*}
         \frac{f_{i,\theta_i} (z_0) }{ f_{i,0} (z_0) } \geq \frac{f_{i,\theta_i} (z_1) }{ f_{i,0} (z_1) }.  
    \end{align*}
    Multiplying both sides by $f_{i,0}(z_1)$ and integrating over $z_1$ from $z_0$ to $\infty$ yields
    \begin{align*}
         \left[ \frac{f_{i,\theta_i} (z_0) }{ f_{i,0} (z_0) }  \right] \left[ 1 - F_{i,0}(z_0) \right] \geq \left[ 1 - F_{i,\theta_i} ( z_0 ) \right], 
    \end{align*}
    which proves statement (a). The proof of statement (b) follows analogously to that of (a). 
\end{proof}

In the lemma below, the probability operator $P_{\theta_i = 0}(\cdot)$ emphasizes the law is driven by a  value of $\theta_i$ equal to zero;  the operators $P_{\theta_i < 0}(\cdot)$  and $P_{\theta_i > 0}(\cdot)$ have similar meanings. 
\begin{lemma}\label{lem:condprobs}
    Under \assumpsref{symm} and \assumpssref{mlr}, for $0 < s \leq t \leq 1$,
    \begin{enumerate}[label=(\roman*)]
    \item $P_{\theta_i = 0} \left( p_i \leq s |p_i \leq t , z_i \neq 0    \right) = s/t$;
    \item $P_{\theta_i < 0 } \left( p_i \leq s |p_i \leq t , z_i \geq 0   \right) \leq s/t$;
    \item $P_{\theta_i > 0}  \left( p_i \leq s |p_i \leq t , z_i \leq 0
    \right) \leq s/t$.
    \end{enumerate}
\end{lemma}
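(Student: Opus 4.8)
The plan is to translate every event about the two-sided $p$-value $p_i$ into an event about a one-sided tail of $z_i$, and then reduce each conditional probability to a ratio of survival (or distribution) functions that can be controlled by the hazard-rate comparisons already packaged in \lemref{mlrexploit}. For $s \in (0,1]$ I would set $a_s \equiv F_{i,0}^{-1}(1 - s/2) = -F_{i,0}^{-1}(s/2) > 0$, where the second equality uses the symmetry in \assumpref{symm}; since $F_{i,0}$ is strictly increasing, $s \mapsto a_s$ is strictly decreasing, so $s \le t$ forces $a_s \ge a_t > 0$. Directly from $p_i = 2F_{i,0}(-|z_i|)$ one obtains $\{p_i \le s\} = \{|z_i| \ge a_s\} = \{z_i \ge a_s\} \cup \{z_i \le -a_s\}$, and under the null $1 - F_{i,0}(a_s) = F_{i,0}(-a_s) = s/2$.

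For part (i), conditioning on $z_i \neq 0$ is vacuous because $F_{i,0}$ is continuous, and the two disjoint tails each carry mass $s/2$ under the symmetric null, so $P_{\theta_i = 0}(p_i \le s) = s$; hence $P_{\theta_i=0}(p_i \le s \mid p_i \le t, z_i \neq 0) = s/t$.

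For part (ii), intersecting with $\{z_i \ge 0\}$ removes the lower tail, so $\{p_i \le s, z_i \ge 0\} = \{z_i \ge a_s\}$ and the target ratio becomes $\frac{1 - F_{i,\theta_i}(a_s)}{1 - F_{i,\theta_i}(a_t)}$. Because $s/2 = 1 - F_{i,0}(a_s)$, the assertion $\le s/t$ is exactly $g(a_s) \le g(a_t)$ for $g(a) \equiv \frac{1 - F_{i,\theta_i}(a)}{1 - F_{i,0}(a)}$. Since $a_s \ge a_t$, it suffices to show $g$ is non-increasing on $(0,\infty)$; examining the sign of $g'$ reduces, after clearing positive denominators, precisely to the inequality $\frac{f_{i,\theta_i}(a)}{f_{i,0}(a)}[1 - F_{i,0}(a)] \ge 1 - F_{i,\theta_i}(a)$, which is \lemref{mlrexploit}(a) for $\theta_i < 0$. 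Part (iii) is the mirror image: intersecting with $\{z_i \le 0\}$ leaves $\{z_i \le -a_s\}$, the ratio becomes $\frac{F_{i,\theta_i}(-a_s)}{F_{i,\theta_i}(-a_t)}$, and the claim is $h(-a_s) \le h(-a_t)$ for $h(z) \equiv \frac{F_{i,\theta_i}(z)}{F_{i,0}(z)}$; since $-a_s \le -a_t$, one shows instead that $h$ is non-decreasing, whose derivative condition is exactly \lemref{mlrexploit}(b) for $\theta_i > 0$.

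The only genuinely delicate points, and where I would concentrate, are bookkeeping rather than conceptual: keeping the directions straight (that $s \le t$ forces $a_s \ge a_t$, and that the monotonicity of $g$ runs opposite to that of $h$ because one involves survival functions and the other distribution functions), and verifying that after differentiating and clearing positive denominators the two resulting inequalities match the two parts of \lemref{mlrexploit} verbatim. No estimate beyond that lemma and the symmetry of the null is required.
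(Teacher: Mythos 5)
Your proposal is correct and takes essentially the same route as the paper's proof: the same reduction of each conditional probability to a one-sided tail of $z_i$, with the $s/t$ bound recast as monotonicity of a ratio whose derivative-sign condition is precisely \lemref{mlrexploit}(a)/(b). The only cosmetic difference is that you verify monotonicity of $g$ and $h$ directly on the $z$-scale, whereas the paper applies the mean value theorem to the same ratio parametrized on the $p$-value scale (e.g.\ $x \mapsto \bigl[1 - F_{i,\theta_i}\bigl(-F^{-1}_{i,0}(x/2)\bigr)\bigr]/x$), which amounts to the identical computation after your change of variables $a_s = -F^{-1}_{i,0}(s/2)$.
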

\begin{proof}[Proof of \lemref{condprobs}]
For any $x \in (0, 1]$, we first rewrite
\begin{align}
 P_{\theta_i = 0} \left( p_i \leq x, z_i \neq 0   \right) &=  \left[ 1 - F_{i,0} \left( - F^{-1}_{i,0}(x/2)  \right) \right]  + F_{i,0} \left( F^{-1}_{i,0}(x/2)  \right) = x; \label{prob0} \\
    P_{\theta_i < 0}  \left( p_i \leq x, z_i \geq 0 \right) &=  1 - F_{i,\theta_i} \left( - F^{-1}_{i,0}(x/2)  \right); \label{prob-} \\
    P_{\theta_i >0} \left( p_i \leq x, z_i \leq 0  \right) & = F_{i,\theta_i} \left( F^{-1}_{i,0}(x/2)  \right). \label{prob+}
\end{align}

$(i)$ follows from \eqref{prob0}  since
$
    P_{\theta_i = 0} \left( p_i \leq s |p_i \leq t , z_i \neq 0  \right) = \frac{P_{\theta_i = 0} \left( p_i \leq s, z_i \neq 0   \right)}{P_{ \theta_i = 0} \left( p_i \leq t, z_i \neq 0   \right)} = \frac{s}{t}
$.

 $(ii)$ is obvious  when $s = t$; when $ s < t$, by applying the mean value theorem on $\frac{1 - F_{i,\theta} \left( - F^{-1}_{i,0}(x/2)  \right)}{x}$ as a function in $x$  in light of \eqref{prob-}, there exists  $c \in (s,t)$   such that, for $y \equiv F^{-1}_{i,0} (c/2)$,
 \begin{align}
    \frac{ \left[ \tfrac{P_{\theta_i <0} \left( p_i \leq t,  z_i \geq 0      \right)}{t} \right] - \left[ \tfrac{P_{\theta_i <0} \left( p_i \leq s, z_i \geq 0     \right)}{s} \right] }{t - s}  \notag
%     =  \frac{\left[ \frac{f_{i,\theta_i} ( - F^{-1}_{i,0} (c/2)) }{ 2 f_{i,0} ( F^{-1}_{i,0} (c/2)) }  \right] c - \left[ 1 - F_{i,\theta_i} \left( - F^{-1}_{i,0}(c/2)  \right) \right]  }{c^2}   
    &= \frac{\left[ \frac{f_{i,\theta_i} ( - y) }{ f_{i,0} (  y ) }  \right] F_{i,0}(y) - \left[ 1 - F_{i,\theta_i} \left( - y  \right) \right]  }{4 F_{i,0}(y)^2}  \\
    &= \frac{\left[ \frac{f_{i,\theta_i} ( - y) }{ f_{i,0} ( - y ) }  \right] \left[ 1 - F_{i,0}(- y) \right] - \left[ 1 - F_{i,\theta_i} \left( - y ) \right) \right]  }{4 F_{i,0}(y)^2}  \label{last_line_for_i},
\end{align}
where the last equality follows from the symmetry of $F_{i, 0}$ in \assumpref{symm}. Since $\theta_i < 0$, by applying \lemref{mlrexploit}$(a)$ to the numerator in \eqref{last_line_for_i}, we get that
\[
\left[ \frac{P_{\theta_i < 0} \left( p_i \leq t,  z_i \geq 0      \right)}{t} \right] - \left[ \frac{P_{\theta_i < 0} \left( p_i \leq s, z_i \geq 0      \right)}{s} \right] \geq 0,
\]
which is equivalent to $\frac{P_{\theta_i < 0 } \left( p_i \leq s, z_i \geq 0     \right)}{P_{\theta_i < 0 } \left( p_i \leq t, z_i \geq 0     \right)} \leq \frac{s}{t}
$, and $(ii)$ is proved.
%\begin{align*}
%     \frac{ \left[ \frac{P \left( p_i \leq t,  z_i \geq 0    ;\theta_i < 0  \right)}{t} \right] - \left[ \frac{P \left( p_i \leq s, z_i \geq 0    ;\theta_i < 0  \right)}{s} \right] }{t-s}\geq 0 \text{ which implies } \frac{P \left( p_i \leq s, z_i \geq 0    ;\theta_i < 0  \right)}{P \left( p_i \leq t, z_i \geq 0    ;\theta_i < 0  \right)} \leq \frac{s}{t}
%\end{align*}
%which finishes the proof of statement (ii).
 The proof for $(iii)$ is analogous to that of $(ii)$, using the mean value theorem on $\frac{F_{i,\theta_i} \left( F^{-1}_{i,0}(x/2)  \right)}{x}$, \eqref{prob+} and \lemref{mlrexploit}$(b)$.
\end{proof}

\begin{lemma}\label{lem:supermartingale} Let $\mathcal{S}(t)$ be defined as in \thmref{Storeydircontrol}.
Under \assumpsref{symm} and \assumpssref{mlr}, as well as the independence among $z_1, \dots, z_m$, $|\mathcal{S}(t)|/t$ for $0  \leq t < 1$ is a  supermartingale with time running backward, with respect to the filtration 
$$\mathcal{F}_t = \sigma \left( \left\{ \textbf{1}(p_i \leq x), \textbf{1}(p_i \leq x, \text{sgn}(z_i) \neq \text{sgn}(\theta_i)) \right\}_{i \in [m]}: 0 < t \leq x \leq 1 \right).$$
That is, for $0 < s \leq t \leq 1$,
$
    \mathbb{E}_{\boldsymbol{\theta}} \left[ \left. \frac{|\mathcal{S}(s)|}{s} \right| \mathcal{F}_t \right] \leq \frac{|\mathcal{S}(t)|}{t}.
$
\end{lemma}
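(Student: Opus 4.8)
The plan is to exploit the independence of the $z_i$'s together with the three conditional-probability bounds in \lemref{condprobs}. First I would record that $|\mathcal{S}(t)| = \sum_{i \in [m]} \textbf{1}(p_i \leq t,\ \text{sgn}(z_i) \neq \text{sgn}(\theta_i))$ is $\mathcal{F}_t$-measurable, since each summand is exactly the generator $\textbf{1}(p_i \leq x, \text{sgn}(z_i) \neq \text{sgn}(\theta_i))$ evaluated at $x = t$. Hence $|\mathcal{S}(t)|/t$ is $\mathcal{F}_t$-measurable, and the claim reduces to the one-sided bound for $s < t$ (the case $s = t$ being a trivial equality).

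The set identity $\mathcal{S}(s) = \{ i \in \mathcal{S}(t) : p_i \leq s\}$ holds for $s \leq t$ directly from the definitions, since the sign-error status of an index does not depend on the threshold; consequently $|\mathcal{S}(s)| = \sum_{i \in \mathcal{S}(t)} \textbf{1}(p_i \leq s)$, where every index with $p_i > t$ or without a sign error contributes zero. The role of the filtration is to control the conditional law: I would verify that for a fixed index $i \in \mathcal{S}(t)$, all generators $\textbf{1}(p_i \leq x)$ and $\textbf{1}(p_i \leq x, \text{sgn}(z_i) \neq \text{sgn}(\theta_i))$ with $x \geq t$ are identically $1$, so that $\mathcal{F}_t$ carries no information about $p_i$ finer than the event $\{p_i \leq t,\ \text{sgn}(z_i) \neq \text{sgn}(\theta_i)\}$, while for indices with $p_i > t$ it pins down the exact value of $p_i$ (whose contribution to $\mathcal{S}(s)$ is zero anyway).

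Next I would invoke independence. Since the $z_i$'s are independent and $\mathcal{S}(t)$ is $\mathcal{F}_t$-measurable,
\[
\mathbb{E}_{\boldsymbol{\theta}}\!\left[\left.\frac{|\mathcal{S}(s)|}{s}\right|\mathcal{F}_t\right]
= \frac{1}{s}\sum_{i \in \mathcal{S}(t)} \mathbb{E}_{\boldsymbol{\theta}}\!\left[\left.\textbf{1}(p_i \leq s)\right|\mathcal{F}_t\right],
\]
and by the information content just described the conditional law of $\textbf{1}(p_i \leq s)$ given $\mathcal{F}_t$ depends only on the index-$i$ event $\{p_i \leq t,\ \text{sgn}(z_i) \neq \text{sgn}(\theta_i)\}$. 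I would then split into three cases according to $\text{sgn}(\theta_i)$: if $\theta_i = 0$ a sign error is automatic (as $z_i \neq 0$ almost surely), and \lemref{condprobs}$(i)$ gives the conditional probability exactly $s/t$; if $\theta_i < 0$ the sign-error event is $\{z_i \geq 0\}$ and \lemref{condprobs}$(ii)$ gives at most $s/t$; if $\theta_i > 0$ it is $\{z_i \leq 0\}$ and \lemref{condprobs}$(iii)$ gives at most $s/t$. In every case the conditional probability is bounded by $s/t$, so the displayed sum is at most $\frac{1}{s}\,|\mathcal{S}(t)|\,\frac{s}{t} = |\mathcal{S}(t)|/t$, which is precisely the backward supermartingale inequality.

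The main obstacle I anticipate is the bookkeeping in the second step: one must argue carefully that conditioning on the entire filtration $\mathcal{F}_t$ collapses, for each surviving index $i \in \mathcal{S}(t)$, to conditioning on the single event $\{p_i \leq t,\ \text{sgn}(z_i) \neq \text{sgn}(\theta_i)\}$ with no finer leakage about $p_i$, and that this sign-error event aligns exactly with the conditioning events $\{z_i \geq 0\}$ or $\{z_i \leq 0\}$ required by \lemref{condprobs}. Once this measurability-and-independence reduction is secured, the inequality follows immediately by assembling the three cases of \lemref{condprobs}.
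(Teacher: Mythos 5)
Your proposal is correct and follows essentially the same route as the paper: both decompose $|\mathcal{S}(s)|$ over the three cases $\theta_i = 0$, $\theta_i < 0$, $\theta_i > 0$ and invoke \lemref{condprobs}$(i)$--$(iii)$ to bound each conditional inclusion probability by $s/t$. The only cosmetic difference is that the paper packages this as stochastic domination of $|\mathcal{S}(s)|$ given $\mathcal{F}_t$ by a $\text{Binomial}(|\mathcal{S}(t)|, s/t)$ distribution (a form it reuses later with $s=\lambda$, $t=1$), whereas you sum the per-index conditional probabilities directly, with the measurability and independence bookkeeping made more explicit.
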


\begin{proof}[Proof of \lemref{supermartingale}] Since $|\mathcal{S}(s)|$ can be written as
\begin{align}\label{Ssum}
    |\mathcal{S}(s)| &= \sum_{i: \theta_i = 0} \textbf{1}( p_i \leq s, z_i \neq 0) +  \sum_{i: \theta_i > 0} \textbf{1}( p_i \leq s, z_i \leq 0) + \sum_{i: \theta_i < 0} \textbf{1}( p_i \leq s, z_i \geq 0),
\end{align}
it is not difficult to observe from statements $(i)$-$(iii)$ of \lemref{condprobs} that $|\mathcal{S}(s)|$ given $\mathcal{F}_t$ is stochastically dominated by the $\text{Binomial}(|\mathcal{S}(t)|,s/t)$ distribution. Hence, $\mathbb{E}_{\boldsymbol{\theta}}[|\mathcal{S}(s)||\mathcal{F}_t] \leq |\mathcal{S}(t)| \cdot (s/t)$.
\end{proof}

\begin{lemma}\label{lem:exresult}
If $Y \sim \text{Binomial}(n,\lambda)$, then for any $\lambda \in (0,1)$ and $n \in \mathbb{N}$,
    \[
    D_n(\lambda) \equiv \mathbb{E} \left[ \frac{Y}{n - Y + 1} \right] = \frac{(1 - \lambda^n) \lambda}{1 - \lambda},
    \]
where $D_n(\lambda)$ is strictly increasing in $\lambda$.
\end{lemma}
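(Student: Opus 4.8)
The plan is to compute $D_n(\lambda)$ directly from the binomial probability mass function, obtain the stated closed form, and then read off strict monotonicity from a simplified expression. Starting from
\[
D_n(\lambda) = \mathbb{E}\left[\frac{Y}{n-Y+1}\right] = \sum_{k=0}^n \frac{k}{n-k+1}\binom{n}{k}\lambda^k (1-\lambda)^{n-k},
\]
the $k=0$ term vanishes, so the sum effectively runs from $k=1$ to $n$. The one step that requires care is the combinatorial identity
\[
\frac{k}{n-k+1}\binom{n}{k} = \frac{n!}{(k-1)!\,(n-k+1)!} = \binom{n}{k-1},
\]
which follows immediately upon expanding $\binom{n}{k}$ in factorials. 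This is the crux of the argument, though it is entirely routine.

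With this identity in hand, I would reindex by $j = k-1$ to get $\sum_{j=0}^{n-1}\binom{n}{j}\lambda^{j+1}(1-\lambda)^{n-j-1}$, then factor out $\tfrac{\lambda}{1-\lambda}$ so that each summand becomes a genuine binomial term $\binom{n}{j}\lambda^j(1-\lambda)^{n-j}$. Since the full sum over $j = 0, \dots, n$ equals $1$ and the omitted $j=n$ term is $\lambda^n$, the remaining sum equals $1 - \lambda^n$. This yields
\[
D_n(\lambda) = \frac{\lambda}{1-\lambda}\bigl(1-\lambda^n\bigr) = \frac{(1-\lambda^n)\lambda}{1-\lambda},
\]
as claimed. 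The only subtlety here is the off-by-one bookkeeping in the reindexing and the exponent shift when factoring out $\tfrac{\lambda}{1-\lambda}$.

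For strict monotonicity, rather than differentiate the quotient directly, I would exploit the geometric-series identity $\tfrac{1-\lambda^n}{1-\lambda} = 1 + \lambda + \cdots + \lambda^{n-1}$, valid for $\lambda \in (0,1)$, to rewrite
\[
D_n(\lambda) = \sum_{k=1}^n \lambda^k.
\]
In this form the claim is transparent: the derivative $\sum_{k=1}^n k\lambda^{k-1}$ is strictly positive for $\lambda \in (0,1)$, so $D_n$ is strictly increasing there. I do not expect any genuine obstacle in this proof; it is elementary, and the only points demanding attention are the coefficient identity and recognizing the geometric sum at the end to make the monotonicity immediate.
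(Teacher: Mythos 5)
Your proposal is correct, and its two halves compare differently against the paper. The closed-form computation is essentially identical to the paper's: both rest on the identity $\frac{k}{n-k+1}\binom{n}{k} = \binom{n}{k-1}$, a shift of index, and recognizing the truncated binomial sum as $1-\lambda^n$, so there is nothing to distinguish there. For strict monotonicity, however, you take a genuinely different and cleaner route. The paper differentiates the quotient, obtaining
\[
D_n'(\lambda) = \frac{\lambda^n(n\lambda - n - 1) + 1}{(\lambda-1)^2},
\]
and then proves positivity of the numerator by an induction on $n$ that requires a small algebraic manipulation at the inductive step. You instead expand the geometric factor to write
\[
D_n(\lambda) = \lambda\,\frac{1-\lambda^n}{1-\lambda} = \sum_{k=1}^{n} \lambda^k,
\]
from which strict monotonicity on $(0,1)$ is immediate term by term (indeed, your form shows $D_n$ is strictly increasing on all of $(0,\infty)$, slightly more than the paper's argument delivers, though only the interval $(0,1)$ is needed downstream in the proof of Theorem~\ref{thm:Storeydircontrol}). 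Your version buys brevity and transparency at no cost; the paper's calculus-plus-induction argument establishes the same fact with more machinery than necessary.
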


\begin{proof}[Proof of \lemref{exresult}]
By direct computation, 
\begin{multline*}
   D_n(\lambda) = \sum_{i = 1}^n {n \choose i} (1 - \lambda)^{n-i} \lambda^i \cdot \frac{i}{n-i+1} \\
   = \frac{\lambda}{1 - \lambda} \sum_{i = 1}^n {n \choose i-1} (1 - \lambda)^{n-i+1} \lambda^{i-1} = \frac{\lambda}{1 - \lambda} \cdot (1 - \lambda^n)
\end{multline*}
which proves the expectation result. Taking the derivative of $D_n(\lambda)$ with respect to $\lambda$ yields
\begin{align*}
    D'_n(\lambda) = \frac{\lambda^n (n \lambda - n - 1) + 1}{(\lambda - 1)^2}.
\end{align*}
Let the numerator of $D'_n(\lambda)$ be denoted as $\dot{D}'_n(\lambda) \equiv \lambda^n (n \lambda - n - 1) + 1$.
To prove that $D_n(\lambda)$ is strictly increasing in $\lambda$, we will show that $\dot{D}'_{n}(\lambda) > 0$ by induction. Suppose $\dot{D}'_{N}(\lambda) > 0$ is true for \textit{some} fixed $N \in \mathbb{N}$. Then
\begin{align*}
    \dot{D}'_{N+1}(\lambda) &= \lambda^{N+1} ((N+1)\lambda-(N+1)-1) + 1 \\
    &= \lambda (\dot{D}'_N(\lambda) + \lambda^{N} ( \lambda - 1) - 1) + 1 \\
    &> \lambda ( \lambda^N (\lambda - 1) - 1) + 1\\
    &= (1 - \lambda) ( 1- \lambda^{N+1})  \\
    &> 0
\end{align*}
where the first inequality stems from the inductive condition $\dot{D}'_{N}(\lambda) > 0$. Since $\dot{D}'_{1}(\lambda) = (\lambda-1)^2 > 0$, it follows  that $\dot{D}'_{n}(\lambda) > 0$ for any $\lambda \in (0,1)$ and $n \in \mathbb{N}$.
\end{proof}

Now we can begin the proof of \thmref{Storeydircontrol}. First, we can write
\begin{align*}
    \mathbb{E}_{\boldsymbol{\theta}} \left[ \frac{|\mathcal{S}(t^{\lambda}_{q})|}{ |\mathcal{R}(t^{\lambda}_{q})| \vee 1  } \right]= \mathbb{E}_{\boldsymbol{\theta}} \left[ \frac{|\mathcal{S}(t^{\lambda}_{q})|}{ |\mathcal{R}(t^{\lambda}_{q})| \vee 1  } ; \widehat{\text{FDR}}_{\lambda}(\lambda) \geq q \right] + \mathbb{E}_{\boldsymbol{\theta}} \left[ \frac{|\mathcal{S}(t^{\lambda}_{q})|}{ |\mathcal{R}(t^{\lambda}_{q})| \vee 1  } ; \widehat{\text{FDR}}_{\lambda}(\lambda) < q \right].
\end{align*}
If $\widehat{\text{FDR}}_{\lambda}(\lambda) \geq q$, then $t^{\lambda}_q \leq \lambda$. Hence, $|\mathcal{R}(t^{\lambda}_{q})| \vee 1  \geq \hat{\pi}_0(\lambda) t^{\lambda}_{q} m / q$ and so
\begin{align*}
    \mathbb{E}_{\boldsymbol{\theta}} \left[ \frac{|\mathcal{S}(t^{\lambda}_{q})|}{ |\mathcal{R}(t^{\lambda}_{q})| \vee 1  } ; \widehat{\text{FDR}}_{\lambda}(\lambda) \geq q \right] &\leq \mathbb{E}_{\boldsymbol{\theta}} \left[ q \frac{1 - \lambda}{|\{i:p_i > \lambda\}| + 1}  \frac{|\mathcal{S}(t^{\lambda}_{q})|}{ t^{\lambda}_{q} } ; \widehat{\text{FDR}}_{\lambda}(\lambda) \geq q \right] \\
    &= \mathbb{E}_{\boldsymbol{\theta}} \left[ q \frac{1 - \lambda}{|\{i:p_i > \lambda \}| + 1}  \mathbb{E}_{\boldsymbol{\theta}} \left[ \left. \frac{|\mathcal{S}(t^{\lambda}_{q})|}{ t^{\lambda}_{q} }  \right| \mathcal{F}_{\lambda} \right] ; \widehat{\text{FDR}}_{\lambda}(\lambda) \geq q \right] \\
    &\leq \mathbb{E}_{\boldsymbol{\theta}} \left[ q \frac{1 - \lambda}{|\{i:p_i > \lambda \}| + 1}  \frac{|\mathcal{S}(\lambda)|}{ \lambda }  ; \widehat{\text{FDR}}_{\lambda}(\lambda) \geq q \right]
\end{align*}
where the last step follows by \lemref{supermartingale} and the optional stopping  theorem since $t^{\lambda}_{q}$ is a stopping time with respect to $\mathcal{F}_t$ with time running backward. If $\widehat{\text{FDR}}_{\lambda}(\lambda) < q$, then $t^{\lambda}_q = \lambda$ and so 
\begin{equation*}
    \mathbb{E}_{\boldsymbol{\theta}} \left[ \frac{|\mathcal{S}(t^{\lambda}_{q})|}{ |\mathcal{R}(t^{\lambda}_{q})| \vee 1  } ; \widehat{\text{FDR}}_{\lambda}(\lambda) < q \right] \leq \mathbb{E}_{\boldsymbol{\theta}} \left[ q \frac{1 - \lambda}{|\{i:p_i > \lambda\}| + 1}  \frac{|\mathcal{S}(\lambda)|}{ \lambda }  ; \widehat{\text{FDR}}_{\lambda}(\lambda) < q \right].
\end{equation*}
Hence,
\begin{equation*}
    \mathbb{E}_{\boldsymbol{\theta}} \left[ \frac{|\mathcal{S}(t^{\lambda}_{q})|}{ |\mathcal{R}(t^{\lambda}_{q})| \vee 1  } \right]\leq \mathbb{E}_{\boldsymbol{\theta}} \left[ q \frac{1 - \lambda}{|\{i:p_i > \lambda\}| + 1}  \frac{|\mathcal{S}(\lambda)|}{ \lambda }  \right] 
     \leq \mathbb{E}_{\boldsymbol{\theta}} \left[ q \frac{ |\mathcal{S}(\lambda)|}{|\mathcal{S}(1)| - |\mathcal{S}(\lambda)| + 1}  \frac{ 1 - \lambda }{ \lambda }  \right].
\end{equation*}
By taking $s = \lambda$ and $t=1$ in statements $(i)$-$(iii)$ of \lemref{condprobs}, in light of the equality in \eqref{Ssum}, it is not difficult to see that $|\mathcal{S}(\lambda)|$ given $|\mathcal{S}(1)|$ is stochastically dominated by the $\text{Binomial}(|\mathcal{S}(1)|, \lambda)$ distribution. Hence,
\begin{equation} \label{almost_there}
\mathbb{E}_{\boldsymbol{\theta}} \left[ q \frac{ |\mathcal{S}(\lambda)|}{|\mathcal{S}(1)| - |\mathcal{S}(\lambda)| + 1}  \frac{ 1 - \lambda }{ \lambda }  \right] \leq   \Big(1 - \mathbb{E}_{\boldsymbol{\theta}}[\lambda^{|\mathcal{S}(1)|}]\Big) q
    \end{equation}
    by \lemref{exresult}. Combining \eqref{almost_there} with
  $
    1 -  \mathbb{E}_{\boldsymbol{\theta}}[\lambda^{|\mathcal{S}(1)|}] \leq  1 - \lambda^{\mathbb{E}_{\boldsymbol{\theta}}[|\mathcal{S}(1)|]} \leq 1
     $, a consequence of  Jensen's inequality, \thmref{Storeydircontrol} is proved.

\subsection{Automatic $\lambda$ selection procedure}\label{app:auto_lambda}

Two inputs are required for this procedure: $B$, the number of bootstrap samples; and $\Lambda$, a set of candidate values for $\lambda$. Our recommendations are $B = 1000$ and $\Lambda = \{ 0.05, 0.10, \dots, 0.95 \}$. The procedure is summarized in the following algorithm.

\begin{enumerate}[label=(\arabic*)]
    \item Compute $\hat{\pi}(\lambda')$ for each $\lambda' \in \Lambda$.
    \item For each $\lambda' \in \Lambda$, construct $B$ bootstrap estimates $\{ \hat{\pi}^{b}(\lambda') \}^B_{b=1}$ by bootstrap sampling the $p$-values.
    \item Compute $\hat{\pi}^{\text{min}} \equiv \min_{\lambda' \in \Lambda}\ \hat{\pi}(\lambda')$.
    \item For each $\lambda' \in \Lambda$, compute
    \begin{align*} \label{mse} 
        \widehat{MSE}(\lambda') = \frac{1}{B} \sum^B_{b=1} [\hat{\pi}^{b} (\lambda') - \hat{\pi}^{\text{min}} ]^2.
    \end{align*}
  \item Output the estimated optimal tuning parameter $\hat{\lambda} = \text{argmin}_{\lambda' \in \Lambda}\{  \widehat{MSE}(\lambda') \}$.
\end{enumerate}

 The above algorithm is nearly identical to that of \cite{storey2004strong}'s automatic $\lambda$ selection algorithm (Section 6), except that \cite{storey2004strong} omit the additive factor  ``1'' in the numerator of all the estimators for $\pi$ involved, but we retain it to
% we used $\hat{\pi}^{\text{min}} \equiv \min_{\lambda' \in \Lambda}  \frac{1+|\{i:p_i > \lambda'\}| }{(1 - \lambda')m} $ instead of \cite{storey2004strong}'s $\min_{\lambda' \in \Lambda}  \frac{|\{i:p_i > \lambda'\}| }{(1 - \lambda')m} $ in Step 3, to
  robustify the $\FDRdir$ controlling property of the resulting procedure. Regardless, the intuition behind is  the same, i.e. choose a $\lambda$ which minimizes an estimated mean square error.

% new appendix
\section{Additional content for \secref{zdirect}}

\subsection{Computation of the component loglikelihoods} \label{app:compLike}

We discuss computations of the likelihoods in \eqref{component_log_lik} when $f_{i, \theta}$ belongs to the normal family $N(\theta, \sigma_i^2)$ or the noncentral $t$-distributional family $NCT(\theta, \nu_i)$.

\begin{enumerate} [label=(\roman*)]
\item
$N(\theta, \sigma_i)$: In this case, each  $ l_{k, i, t}$ in  \eqref{component_log_lik} has the explicit analytic form
\[
 l_{k, i, t} = 
  \begin{cases} 
   \frac{ \Phi(z_i /\sigma_i ) - \Phi((z_i - a_k)/\sigma_i ) }{a_k}  +   \frac{ \Phi(\widecheck{z}_i/\sigma_i  ) - \Phi( (\widecheck{z}_i - a_k)/\sigma_i ) }{a_k} & \text{if }  i \in \mathcal{M}_t  \text{ and } k \geq 1\\
    \frac{ \Phi(z_i /\sigma_i ) - \Phi( (z_i - a_k)/\sigma_i ) }{a_k}  & \text{if }    i \in [m] \backslash \mathcal{M}_t  \text{ and } k \geq 1\\
     \frac{ \Phi((z_i  - a_k)/\sigma_i ) - \Phi(z_i /\sigma_i ) }{- a_k} +  \frac{ (\Phi(\widecheck{z}_i - a_k)/\sigma_i ) - \Phi(\widecheck{z}_i /\sigma_i ) }{- a_k} & \text{if }  i \in \mathcal{M}_t  \text{ and } k \leq -  1\\
   \frac{ \Phi( (z_i  - a_k)/\sigma_i ) - \Phi(z_i /\sigma_i ) }{- a_k}  & \text{if }    i \in [m] \backslash \mathcal{M}_t  \text{ and } k \leq -1
  \end{cases}.
\]

\item $NCT(\theta, \nu_i)$: Without sophisticated numerical integration methods, it may be hard to obtain good numerical values for the quantities in \eqref{component_log_lik}. However,  approximation methods can be potentially leveraged; in what follows we assume the  common use case where $\nu_i = \nu$ for all $i \in [m]$. In a variance-stabilizing manner,  \citet[Section 2]{laubscher1960normalizing} suggests that, if $z_i$ is a noncentral $t$-distributed random variable with noncentrality parameter $\theta$ and degree $\nu \geq 4$, by \emph{bijectively} transforming $z_i$ to the variable 
\[
\xi_i \equiv \alpha \sinh^{-1} (\beta z_i),
\]
where $\alpha = \alpha(\nu)$ and $\beta = \beta(\nu)$ are positive numbers depending only on $\nu$, 
\[
\text{$\xi_i$ is approximately distributed as $N(\gamma , 1)$}
\] 
for the mean
\[
\gamma \equiv \alpha \sinh^{-1} \Bigg( \beta \cdot \theta  \cdot \frac{  \Gamma(\nu/2- 1/2) \sqrt{\nu/2} }{\Gamma (\nu/2)}  \Bigg),
\]
which is also strictly increasing in $\theta$.
 Hence, by also letting $\widecheck{\xi_i} \equiv \alpha \sinh^{-1} (\beta \widecheck{z}_i)$, one can alternatively implement ZDIRECT by replacing the component likelihoods in \eqref{component_log_lik} with
\[
 l_{k, i, t}' = 
  \begin{cases} 
   \frac{ \Phi(\xi_i ) - \Phi(\xi_i - a_k) }{a_k}  +   \frac{ \Phi(\widecheck{\xi}_i ) - \Phi(\widecheck{\xi}_i - a_k) }{a_k} & \text{if }  i \in \mathcal{M}_t  \text{ and } k \geq 1\\
    \frac{ \Phi(\xi_i ) - \Phi(\xi_i - a_k) }{a_k}  & \text{if }    i \in [m] \backslash \mathcal{M}_t  \text{ and } k \geq 1\\
     \frac{ \Phi(\xi_i  - a_k) - \Phi(\xi_i ) }{- a_k} +  \frac{ \Phi(\widecheck{\xi}_i - a_k) - \Phi(\widecheck{\xi}_i ) }{- a_k} & \text{if }  i \in \mathcal{M}_t  \text{ and } k \leq -  1\\
   \frac{ \Phi(\xi_i  - a_k) - \Phi(\xi_i ) }{- a_k}  & \text{if }    i \in [m] \backslash \mathcal{M}_t  \text{ and } k \leq -1
  \end{cases}.
\]
In doing so, we have essentially imposed the prior  $g(\cdot, {\bf w})$ in \eqref{halfuniform} on $\gamma$ instead of $\theta$, but it doesn't change things in the grand scheme as it still approximates a unimodal density about zero on $\theta$; note that $\sinh^{-1}(0) = 0$. Importantly, we are still working with the partial data $\{\tilde{z}_{i, t}\}_{i \in [m]}$ so strong $\FDRdir$ control is guaranteed by virtue of \thmref{zdirectcontrol}. Other such  strategies may also be explored, possibly based on ideas from \citet{kraemer1979central} and other references therein. 

\end{enumerate}

\subsection{Proof of \thmref{zdirectcontrol}} \label{app:FDRdirpf}

We first  quote \citet[Lemma 2]{lei2018adapt}, a  fundamental tool for developing data-masking algorithms. 

\begin{lemma} \label{lem:leiResult}
  Suppose that, conditionally on the $\sigma$-field $\mathcal{G}_{-1}$, $b_1,\ldots,b_n$ are independent Bernoulli random variables with $P(b_i = 1 \mid \mathcal{G}_{-1}) = \rho_i \geq \rho > 0$, almost surely. Also suppose that $[n] \supseteq \mathcal{C}_0 \supseteq \mathcal{C}_1 \supseteq \cdots$, with each subset $\mathcal{C}_{t+1}$ measurable with respect to
  \[
 \mathcal{G}_t = \sigma\left\{\mathcal{G}_{-1}, \mathcal{C}_t, (b_i)_{i \notin \mathcal{C}_t}, \sum_{i \in \mathcal{C}_t} b_i\right\}.
  \]

  If $\hat{t}$ is an almost-surely finite stopping time with respect to the filtration $(\mathcal{G}_t)_{t \geq 0}$, then
  \[
 \mathbb{E}\left[\frac{1 + |\mathcal{C}_{\hat{t}}|}{1 + \sum_{i\in \mathcal{C}_{\hat{t}}} b_i} \mid  \mathcal{G}_{-1}\right]  \leq \rho^{-1}.
  \]
\end{lemma}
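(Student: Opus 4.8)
The statement is the standard ``masking martingale'' engine behind interactive testing procedures, so the plan is to exhibit a supermartingale in the masked filtration and close by optional stopping. Writing $N_t \equiv |\mathcal{C}_t|$ and $V_t \equiv \sum_{i \in \mathcal{C}_t} b_i$, I would study the ratio process
\[
M_t \equiv \frac{1 + N_t}{1 + V_t},
\]
and aim to show that (i) $(M_t)_{t \geq 0}$ is a supermartingale with respect to $(\mathcal{G}_t)_{t \geq 0}$, and (ii) $\mathbb{E}[M_0 \mid \mathcal{G}_{-1}] \leq \rho^{-1}$. Granting these, since $1 \leq M_t \leq 1 + |\mathcal{C}_0| \leq 1 + n$ is bounded and $\hat{t}$ is almost surely finite, the optional stopping theorem for supermartingales yields $\mathbb{E}[M_{\hat{t}} \mid \mathcal{G}_{-1}] \leq \mathbb{E}[M_0 \mid \mathcal{G}_{-1}] \leq \rho^{-1}$, which is exactly the claim.

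The initial bound (ii) is the easy half. Conditionally on $\mathcal{G}_{-1}$, $V_0 = \sum_{i \in \mathcal{C}_0} b_i$ is a sum of independent Bernoulli variables with success probabilities $\rho_i \geq \rho$, hence stochastically dominates a $\mathrm{Binomial}(N_0, \rho)$ variable. Since $M_0$ is decreasing in $V_0$, I would combine this dominance with the elementary identity $\mathbb{E}[(1 + \mathrm{Binomial}(N,\rho))^{-1}] = \frac{1 - (1-\rho)^{N+1}}{(N+1)\rho}$ to obtain $\mathbb{E}[M_0 \mid \mathcal{G}_{-1}] \leq (1 + N_0)\cdot \frac{1 - (1-\rho)^{N_0+1}}{(N_0+1)\rho} = \frac{1 - (1-\rho)^{N_0+1}}{\rho} \leq \rho^{-1}$.

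For the supermartingale step (i), because each $\mathcal{C}_{t+1}$ is a $\mathcal{G}_t$-measurable subset of $\mathcal{C}_t$, I would first reduce to the case of removing a \emph{single} index per step by refining the filtration (a batch removal is a composition of one-index removals, each intermediate choice still being adapted). Conditioning on $\mathcal{G}_t$, the only fresh randomness revealed in passing to $\mathcal{G}_{t+1}$ is the label $b_j$ of the removed index $j$, so a short computation reduces the one-step inequality $\mathbb{E}[M_{t+1}\mid\mathcal{G}_t]\leq M_t$ to verifying that $P(b_j = 1 \mid \mathcal{G}_t) \leq V_t/N_t$. When the $b_i$ are exchangeable (equal success probabilities), the conditional law of the ones inside $\mathcal{C}_t$ given the internal sum $V_t$ is uniform over placements, so $P(b_j = 1 \mid \mathcal{G}_t) = V_t/N_t$ exactly and $M_t$ is in fact an honest martingale.

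The main obstacle is precisely the general, non-exchangeable regime $\rho_i \geq \rho$. Since $\mathcal{G}_{-1}$, and hence the individual $\rho_i$, is available to the procedure, it may preferentially unmask indices with large $\rho_i$, for which $P(b_j = 1 \mid \mathcal{G}_t)$ can exceed the empirical fraction $V_t/N_t$; thus the naive per-step supermartingale property can genuinely fail. I would resolve this not step-by-step but globally, by coupling the process to its equal-$\rho$ counterpart, in which $M_t$ is a martingale, and arguing that depressing every success probability to the common value $\rho$ can only enlarge the stopped ratio, so that the exchangeable bound $\frac{1-(1-\rho)^{N_0+1}}{\rho}\leq\rho^{-1}$ transfers. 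The delicacy is that the algorithm's trajectory depends on the realized labels, so the coupling must be monotone in a manner compatible with the adaptive, masked updating of $\mathcal{C}_t$; making this domination rigorous, or, alternatively, controlling the conditional-Bernoulli marginal $P(b_j = 1 \mid \mathcal{G}_t)$ directly and absorbing its excess into the slack left by $\rho_i \geq \rho$ in the initial bound, is the crux of the argument.
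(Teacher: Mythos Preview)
The paper does not prove this lemma: it is quoted from \citet[Lemma~2]{lei2018adapt} and used as a black box in the proof of \thmref{zdirectcontrol}, so there is no in-paper argument to compare against. Your plan is the standard one and is complete in the exchangeable case $\rho_i\equiv\rho$: then the labels $(b_i)_{i\in\mathcal{C}_t}$ are exchangeable given their sum $V_t$, so $P(b_j=1\mid\mathcal{G}_t)=V_t/N_t$ for any $\mathcal{G}_t$-measurable choice of $j$, $M_t$ is an honest martingale, and optional stopping together with your step~(ii) finishes.

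The genuine gap is precisely where you locate it, and you have not closed it. A two-point instance shows $M_t$ is \emph{not} a supermartingale in general: with $n=2$, $\rho_1=0.99$, $\rho_2=0.01$, and a procedure that (using $\rho_i\in\mathcal{G}_{-1}$) always removes index~$1$, one gets $P(b_1=1\mid V_0=1)\approx 0.9999$ and hence $\mathbb{E}[M_1\mid\mathcal{G}_0,\,V_0=1]\approx 2>3/2=M_0$. Both repairs you sketch remain programmatic. The monotone coupling $\tilde b_i=\textbf{1}(U_i\le\rho)\le b_i$ does give the pointwise bound $M_{\hat t}\le(1+N_{\hat t})/(1+\sum_{i\in\mathcal{C}_{\hat t}}\tilde b_i)$, but the trajectory $(\mathcal{C}_t)$ and the stopping time $\hat t$ are adapted to a filtration built from $\sum_{i\in\mathcal{C}_t}b_i$, not $\sum_{i\in\mathcal{C}_t}\tilde b_i$; you have not shown that the exchangeable martingale/optional-stopping argument applies to $\tilde b$ under this mismatched filtration, and that is exactly the nontrivial step. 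The alternative of ``absorbing the excess into the slack $\rho_i\ge\rho$'' is not an argument either, since the per-step excess $P(b_j=1\mid\mathcal{G}_t)-V_t/N_t$ can be arbitrarily close to~$1$ while the slack in the initial bound is fixed. As written, the proposal stops short of a proof in the heterogeneous case; for the resolution you should consult Lei and Fithian's original argument.
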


In  \lemref{leiResult}, we remark that $(\mathcal{G}_t)_{t \geq 0}$ defines a filtration precisely because $\mathcal{C}_{t+1} \in \mathcal{G}_t$. 
\begin{proof}[Proof of \thmref{zdirectcontrol}]
 The arguments below are inspired by those from \citet{barber2019knockoff} for establishing the $\FDRdir$ controlling property of the knockoff filter for variable selection in linear regressions.
To begin our proof, write the directional false discovery proportion as
\[
\text{FDP}_{\text{dir}}(\hat{t}) = \frac{|\{i : \text{sgn}(\theta_i) \neq  \text{sgn}(z_i)    \text{ and }  i \in \mathcal{R}_{\hat{t}} \}|}{ 1 \vee |\mathcal{R}_{\hat{t}}|} =   \frac{|\{i : \text{sgn}(\theta_i) \neq  \text{sgn}(z_i)    \text{ and }  i \in \mathcal{R}_{\hat{t}} \}|}{ 1+ |\mathcal{A}_{\hat{t}}|}  \frac{ 1+    |\mathcal{A}_{\hat{t}}|  }{ 1 \vee |\mathcal{R}_{\hat{t}}|}.
\]
Since  $\FDRdirhat (\hat{t})  =  \frac{ 1+  |\mathcal{A}_{\hat{t}}|  }{  1 \vee |\mathcal{R}_{\hat{t}}|} \leq q$ by definition,
% we have 
%$
%\mathbb{E}_{{\boldsymbol \theta}}[dFDP(\hat{t})   ] \leq q  \mathbb{E}_{{\boldsymbol \theta}}\left[ \frac{|\{i : \text{sgn}(\theta_i) \neq  \text{sgn}(z_i)    \text{ and }  i \in \mathcal{R}_{\hat{t}} \}|}{ 1+ |\mathcal{A}_{\hat{t}}|}   \right],
%$
%and hence 
it suffices to show that 
\begin{equation} \label{suffice}
 \mathbb{E}_{{\boldsymbol \theta}}\left[ \frac{|\{i : \text{sgn}(\theta_i) \neq  \text{sgn}(z_i)    \text{ and }  i \in \mathcal{R}_{\hat{t}} \}|}{ 1+  |\mathcal{A}_{\hat{t}}|}  \right] \leq 1.
\end{equation}

For each $i \in [m]$, we  define the variables
\[
b_i \equiv {\bf 1}\Big(u_i \in  (0.25, 0.75) \Big)
\]
and
\[
E_i \equiv 1 - 2b_i = \begin{cases} 
   +1   & \text{if } |z_i| \geq  |\widecheck{z}_i|\\
-1& \text{if } |\widecheck{z}_i| > | z_i|
  \end{cases},
\]
where, by the symmetry of $f_{i, 0}(\cdot)$ from \assumpref{symm}, the latter  is equal to $+1$ if $u_i$ is at least as close  as $\widecheck{u}_i$  to the  endpoints of the unit interval $[0,1]$, or equal to $-1$ otherwise. In particular, since 
\[
\cR_{\hat{t}} \equiv \{i: i \in \cM_{\hat{t}} \text{ and } u_i\in (0, 0.25]  \cup  [0.75, 1) \},
\]
 any $i \in \cR_{\hat{t}} $ must have its corresponding $u_i$ at least as close  to the two endpoints of $[0,1]$ as its reflection $\widecheck{u}_i \in [0.25, 0.75]$. As such, 
 it must always be that 
 \begin{equation} \label{rej_set_must_live_in}
 \mathcal{R}_{\hat{t}} \subseteq \{i: E_i = +1\},
 \end{equation} i.e., an element $i$ can possibly be a discovery only if $E_i = +1$. Moreover, 
define 
\[
S_i  = \text{sgn}(z_i) E_i,
\]
%\[
%S_i= \begin{cases} 
%    \text{sgn}(z_i) E_i   & \text{if } z_i \neq 0\\
%-1& \text{if } z_i = 0
%  \end{cases},
%\]
which will take on the same sign as $z_i$ if $E_i = +1$ and $z_i \neq 0$, as well as the set
%$u_i$ is at least as close as $\widecheck{u}_i$ to the endpoints of the unit interval  and let ${\bf S} = (S_1, \dots, S_m)$. We will define the set 
\[
\hat{\mathcal{H}}_0 \equiv  \{i: S_i \neq \text{sgn} (\theta_i)\}.
\]
Here, $\hat{\mathcal{H}}_0$ can act like a ``random null set'' since a false discovery precisely amounts to declaring a non-zero sign for any $i \in \hat{\mathcal{H}}_0 \cap \{i: E_i = +1\}$, in light of \eqref{rej_set_must_live_in} being always true. Hence, 
\begin{align*}
\mathbb{E}_{{\boldsymbol \theta}}\left[ \frac{|\{i : \text{sgn}(\theta_i) \neq  \text{sgn}(z_i)    \text{ and }  i \in \mathcal{R}_{\hat{t}} \}|}{ 1+  |\mathcal{A}_{\hat{t}}|} \right] 
&=
\mathbb{E}_{{\boldsymbol \theta}}\left[ \frac{| \mathcal{R}_{\hat{t}} \cap \hat{\mathcal{H}}_0 |}{ 1+  |\mathcal{A}_{\hat{t}}|} \right] \leq 
\mathbb{E}_{{\boldsymbol \theta}}\left[ \frac{| \mathcal{R}_{\hat{t}} \cap \hat{\mathcal{H}}_0 |}
{ 1+  | \mathcal{A}_{\hat{t}}\cap \hat{\mathcal{H}}_0|}  \right].
\end{align*}
The last inequality in the preceding display implies that \eqref{suffice} can be proved if it is true  that 
$
\mathbb{E}_{{\boldsymbol \theta}}\left[ \frac{|  \mathcal{R}_{\hat{t}} \cap \hat{\mathcal{H}}_0 |}
{ 1+  |\mathcal{A}_{\hat{t}}\cap \hat{\mathcal{H}}_0|} \right] \leq 1,
$
which, in turn, we will prove by showing
\begin{equation} \label{condH0}
\mathbb{E}_{{\boldsymbol \theta}}\left[ \frac{|\mathcal{R}_{\hat{t}}  \cap  \hat{\mathcal{H}}_0   |}
{ 1+  |\mathcal{A}_{\hat{t}}\cap \hat{\mathcal{H}}_0|}\bigg| \sigma \{\hat{\mathcal{H}}_0\} \right] \leq 1.
\end{equation}
%Before finishing proving \eqref{condH0}, we emphasize that knowledge of the set $ \hat{\mathcal{H}}_0$ does \emph{not} imply knowing the parameters $\theta_1, \dots, \theta_m$, even though $\hat{\mathcal{H}}_0$ is defined with them. 
%Moreover, the expectation operator $\mathbb{E}_{\boldsymbol \theta}[\cdot]$ simply suggests that the  $\theta_1, \dots, \theta_m$ are considered fixed (but still unknown), as opposed to the random parameter formulation in \eqref{gdist}. 

The rest of the proof proceeds by  setting the scene to apply \lemref{leiResult}. First,  recall $\cM_0 = [m]$ 
and let 
$
\mathcal{G}_{-1} \equiv  \sigma \{\hat{\mathcal{H}}_0  \}$.
%, where 
%\begin{equation} \label{z_prime_def}
%u' \equiv  F_{i,0}(z_i') = 
%  \begin{cases} 
%u_i\wedge \widecheck{u}_i    & \text{if } u_i , \widecheck{u}_i  \in (0, 0.5]\\
%  u_i\vee \widecheck{u}_i    & \text{if } u_i , \widecheck{u}_i  \in (0.5, 1)
%  \end{cases},
%\end{equation}
%for $z'_i$ defined in \eqref{z_prime_def},  is the  value between $u_i$ and $\widecheck{u}_i$ closer to the two ends of the unit interval. 
For $t = 0, 1, \dots$,
define 
\begin{equation*}\label{Ct_def}
\mathcal{C}_t
% = \{i: i \in \hat{\mathcal{H}}_0 \text{ and }   u_i \in (0,  s_{i, l,  t})  \cup   (1 - s_{i, r, t}, 1) \cup  (0.5 - s_{i, l, t}, 1.5 - s_{i, r, t}) \}
  \equiv \hat{\mathcal{H}}_0 \cap \mathcal{M}_t,
\end{equation*}
and the filtrations
\[
\mathcal{G}_t \equiv \sigma
 \left\{
\mathcal{G}_{-1},   \mathcal{C}_t, (b_i)_{i \not\in \mathcal{C}_t}, \sum_{i \in \mathcal{C}_t} b_i
\right\};
\]
note that $\mathcal{C}_{t+1} \in \mathcal{G}_t$ since $\hat{\mathcal{H}}_0 \in \mathcal{G}_{-1}$ and $\cM_{t+1} \in \mathcal{G}_t$ by respecting the condition (C1). 
%$\mathcal{G}_{-1} \subset \mathcal{G}_{0} \subset \mathcal{G}_{1} \cdots$ because $\mathcal{C}_{t+1}$ is measurable w.r.t. $\mathcal{G}_t$ since $\mathcal{M}_{t+1}$ is so. 
By the definitions of $\cA_t$ and $\cR_t$,  we must have that
\begin{equation} \label{cardinality_representation}
|\mathcal{A}_t  \cap \hat{\mathcal{H}}_0| = \sum_{i \in \mathcal{C}_t} b_i \text{ and } |\mathcal{R}_t \cap \hat{\mathcal{H}}_0| =  |\mathcal{C}_t| - \sum_{i \in \mathcal{C}_t} b_i. 
\end{equation}
Writing
\begin{align*}
|\mathcal{A}_t| &= |\mathcal{A}_t  \cap \hat{\mathcal{H}}_0| + | \{i:  i \not \in  \hat{\mathcal{H}}_0, i \in \mathcal{M}_t \text{ and } u_i \in  (0.25, 0.75)\}| \\
&=  |\mathcal{A}_t  \cap \hat{\mathcal{H}}_0| + | \{i:  i  \in ([m] \backslash \hat{\mathcal{H}}_0) \cap \mathcal{M}_t \text{ and } b_i = 1\}|
\end{align*}
and 
\begin{align*}
|\mathcal{R}_t| &= |\mathcal{R}_t  \cap \hat{\mathcal{H}}_0| + | \{i:  i \not \in  \hat{\mathcal{H}}_0, i \in \mathcal{M}_t   \text{ and }  u_i \in (0,  0.25]  \cup   [0.75, 1)\}| \\
&=  |\mathcal{R}_t  \cap \hat{\mathcal{H}}_0| + | \{i:   i  \in ([m] \backslash \hat{\mathcal{H}}_0) \cap \mathcal{M}_t \text{ and } b_i = 0\}|,
\end{align*}
one can see that $|\mathcal{A}_t|, |\mathcal{R}_t| \in \mathcal{G}_t$ for two reasons: First, $|\mathcal{A}_t  \cap \hat{\mathcal{H}}_0|$ and $|\mathcal{R}_t  \cap \hat{\mathcal{H}}_0|$ belong to $\mathcal{G}_t$ because of \eqref{cardinality_representation}. Second, for any   $i \in   ([m] \backslash \hat{\mathcal{H}}_0) \cap  \cM_t$, it must also be true that $i \not \in \mathcal{C}_t$ (by the definition of $\mathcal{C}_t$), which implies that $b_i$ belongs to $\mathcal{G}_t$; as such, both $| \{i:  i  \in ([m] \backslash \hat{\mathcal{H}}_0) \cap \mathcal{M}_t \text{ and } b_i = 1\}|$ and $| \{i:   i  \in ([m] \backslash \hat{\mathcal{H}}_0) \cap \mathcal{M}_t \text{ and } b_i = 0\}|$ are measurable with respect to $\mathcal{G}_t$. Hence, $\hat{t}$ is a stopping time with respect to $(\mathcal{G}_t )_{t \geq 0}$,  and is almost surely finite because ZDIRECT guarantees to terminate in light of the condition (C2). 

Lastly, by writing
\[
\mathbb{E}_{{\boldsymbol \theta}}\left[ \frac{|  \mathcal{R}_{\hat{t}} \cap  \hat{\mathcal{H}}_0 |}
{ 1+  |\mathcal{A}_{\hat{t}}\cap \hat{\mathcal{H}}_0|}\bigg|\mathcal{G}_{-1} \right] 
= \mathbb{E}_{{\boldsymbol \theta}}\left[ \frac{  |\mathcal{C}_t| -\sum_{i \in \mathcal{C}_t} b_i}
{ 1+ \sum_{i \in \mathcal{C}_t} b_i   }\bigg| \mathcal{G}_{-1} \right] 
= \mathbb{E}_{{\boldsymbol \theta}}\left[ \frac{ 1+ |\mathcal{C}_t|}
{ 1+ \sum_{i \in\mathcal{C}_t} b_i   } \bigg| \mathcal{G}_{-1} \right] -1, 
\] 
in light of \lemref{leiResult}, 
one only need to show that 
\[
P(b_i = 1 \mid \mathcal{G}_{-1})  \geq 0.5 \text{ for each }  i \in \hat{\mathcal{H}}_0 
\]
to wrap up the proof of \eqref{condH0}. We can break this into three cases; in what follows we also use the operator symbols $P_{\theta_i = 0}(\cdot)$, $P_{\theta_i > 0}(\cdot)$ and $P_{\theta_i < 0}(\cdot)$ defined immediately before \lemref{condprobs} to emphasize the underlying value of $\theta_i$ driving the law:
\begin{itemize}
\item Case 1, $\theta_i >0$:
%if $i \in \mathcal{H}_0$ and $u_i' < 0.5$:  In this case, one can write
%\[
%P(b_i = 1 \mid \mathcal{G}_{-1}) = \frac{f_{i, \theta_i}(F^{-1}_{i, 0}  (0.5 - u_i'))}{f_{i, \theta_i}(F^{-1}_{i, 0} (u_i')) + f_{i, \theta_i}(F^{-1}_{i, 0} (0.5 - u_i'))}.
%\]
%Since $ - \infty < F^{-1}_{i, 0} (u_i') < F^{-1}_{i, 0} (0.5 - u_i') < 0$
Since $i \in \hat{\mathcal{H}}_0$, under $\theta_i >0$ it must be that $S_i = -1$ or $0$. This can be true with either  $u_i \in [0.5, 0.75)$ or $u_i \in (0, 0.25 ]$, only the former of which can give $b_i = 1$. 
These two events respectively have probabilities 
\begin{multline*}
P_{\theta_i > 0} (u_i \in [0.5, 0.75) )= \int_{[F^{-1}_{i, 0} (0.5), F^{-1}_{i, 0} (0.75))} f_{i, \theta_i} (z) d z\\
 =   \int_{[F^{-1}_{i, 0} (0.5), F^{-1}_{i, 0} (0.75))} \frac{ f_{i, \theta_i} (z)}{f_{i, 0} (z)} f_{i, 0} (z) d z 
\end{multline*}
and
\begin{multline*}
P_{\theta_i > 0}(u_i \in (0, 0.25 ]) = \int_{( - \infty, F^{-1}_{i, 0} (0.25)]} f_{i, \theta_i} (z) d z \\
=   \int_{( - \infty, F^{-1}_{i, 0} (0.25)]}  \frac{ f_{i, \theta_i} (z)}{f_{i, 0} (z)} f_{i, 0} (z)  dz.
\end{multline*}
By the MLR property in   \assumpref{mlr}, $P_{\theta_i > 0}(u_i \in [0.5, 0.75)) \geq P_{\theta_i > 0}(u_i \in (0, 0.25])$ and hence
\[
P_{\theta_i > 0}(b_i = 1 \mid \mathcal{G}_{-1})  = \frac{P_{\theta_i > 0}(u_i \in [0.5, 0.75))} {P_{\theta_i > 0}(u_i \in [0.5, 0.75))+  P_{\theta_i > 0}(u_i \in (0, 0.25])} \geq 0.5.
\]
% it must be the case that $P(b_i = 1 \mid \mathcal{G}_{-1})  \geq 0.5$. 
\item Case 2, $\theta_i <  0$: The derivations are completely analogous to that of Case 1.

\item Case 3,  $\theta_i = 0$: In that case, $S_i$ can be $+1$ or $-1$. Since $u_i$ is uniformly distributed under $\theta_i = 0$, it is easy to see that
$
P_{\theta_i = 0}( b_i = 1 \mid \mathcal{G}_{-1} ) = 0.5
$
\end{itemize}
(We remark that the arguments above work precisely because $\mathcal{G}_{-1}$ only provides the meager knowledge of $\hat{\mathcal{H}}_0$, without any other knowledge about the data $\{z_i\}_{i=1}^m$.)

\end{proof}

\subsection{Proof of \thmref{ODP}} \label{app:lfsrOMTpf}

For two procedures $(\widehat{sgn}_{i}^{(1)})_{i \in \cR^{(1)}}$ and $(\widehat{sgn}_{i}^{(2)})_{i \in \cR^{(2)}}$, $(\widehat{sgn}_{i}^{(2)})_{i \in \cR^{(2)}}$  is said to improve upon $(\widehat{sgn}_{i}^{(1)})_{i \in \cR^{(1)}}$ if 
$\ETD [(\widehat{sgn}_{i}^{(2)})_{i \in \cR^{(2)}}] \geq \ETD[(\widehat{sgn}_{i}^{(1)})_{i \in \cR^{(1)}}]$ 
and 
$\FDRdir [ (\widehat{sgn}_{i}^{(2)})_{i \in \cR^{(2)}}] \leq \FDRdir [ (\widehat{sgn}_{i}^{(1)})_{i \in \cR^{(1)}}]$.

 \ \

Let ${\bf z}= (z_1, \dots, z_m)$, and let $ (\widehat{sgn}_{i})_{i \in \cR}$ be a certain procedure with $\FDRdir[(\widehat{sgn}_{i})_{i \in \cR}] \leq q$. We also write $\mathcal{R} = \mathcal{R}({\bf z})$ and $\widehat{sgn}_{i} = \widehat{sgn}_{i} ({\bf z})$ to emphasize that both are functions in ${\bf z}$. It suffices to show  that the two statements below are true:

%
%it is possible to construct an improved procedure $\{\cR', (\widehat{sgn}_{i}')_{i \in \cR'}\Bigr\}$ still with $ETP \Bigl\{\cR', (\widehat{sgn}_{i})_{i \in \cR'}\Bigr\} \geq ETP \Bigl\{\cR, (\widehat{sgn}_{i})_{i \in \cR}\Bigr\}$ and $\FDRdir \Bigl\{\cR', (\widehat{sgn}_{i})_{i \in \cR'}\Bigr\} \leq q$, if  either one or both of the followings are true:
\begin{itemize}
\item  Statement 1: If there exists  $j \in [m]$   such that one or both of the disjoint events
\[
% simulate displayed equation
\mathcal{Z}_j^{(1)}\equiv\left\lbrace {\bf z}\;\middle|\;
  \begin{varwidth}{\linewidth}
     $j \in \mathcal{R} ({\bf z})$;\\
    $P(\theta_j \leq 0 |z_j) < P(\theta_j \geq 0 |z_j)$;\\
    $ \widehat{sgn}_j ({\bf z})= -1$.
   \end{varwidth}
  \right\rbrace
  and \ \
  % simulate displayed equation
\mathcal{Z}_j^{(2)}\equiv\left\lbrace {\bf z}\;\middle|\;
  \begin{varwidth}{\linewidth}
     $j \in \mathcal{R} ({\bf z})$;\\
    $P(\theta_j \leq 0 |z_j) > P(\theta_j \geq 0 |z_j)$;\\
    $ \widehat{sgn}_j ({\bf z})= 1$.
   \end{varwidth}
  \right\rbrace
\]
have non-zero probabilities, the procedure  $ (\widehat{sgn}_{i}')_{i \in \cR'}$ defined by 
\[
\mathcal{R}'({\bf z}) = \mathcal{R}({\bf z}) \text{ for all } {\bf z}
\]
and, for $i \in \mathcal{R} = \mathcal{R}'$,
\[
\widehat{sgn}'_{i}({\bf z})  = 
  \begin{cases} 
 1   & \text{if } i = j \text{ and } {\bf z} \in \mathcal{Z}_j^{(1)} \\
 -1   & \text{if } i = j \text{ and } {\bf z} \in \mathcal{Z}_j^{(2)} \\
 \widehat{sgn}_{i}({\bf z})  & \text{if otherwise}
   \end{cases}
\]
improves upon $ (\widehat{sgn}_{i})_{i \in \cR}$.
%and/or 
%\[
%% simulate displayed equation
%\mathcal{Z}_j=\left\lbrace {\bf z}\;\middle|\;
%  \begin{varwidth}{\linewidth}
%     $j \in \mathcal{R} ({\bf z})$;\\
%    $P(\theta_j \leq 0 |z_j) < P(\theta_j \geq 0 |z_j)$;\\
%    $ \widehat{sgn}_j ({\bf z})= -1$;
%   \end{varwidth}
%  \right\rbrace
%\]
%
%\begin{equation} \label{sit1}
%P(\{ {\bf z}: P(\theta_j \leq 0 |z_j) < P(\theta_j \geq 0 |z_j), j \in \mathcal{R} \text{ and } \widehat{sgn}_j ( {\bf z}) = -1 \}) > 0, 
%\end{equation}
%and/or 
%\begin{equation}\label{sit2}
%P(\{ {\bf z}: P(\theta_j \leq 0 |z_j) > P(\theta_j \geq 0 |z_j), j \in \mathcal{R} \text{ and } \widehat{sgn}_j( {\bf z})  = 1 \}) > 0.
%\end{equation}
  \item Statement  2:  If there exist two distinct  $j,  l \in [m]$ such that the event 
  \[
\mathcal{Z}_{jl}=   \{{\bf z}: \lfsr_j < \lfsr_l , l\in \mathcal{R} \text{ and } j\not \in \mathcal{R}\}
  \] has non-zero probability, then it is possible to construct an improved procedure $(\widehat{sgn}_{i}')_{i \in \cR'}$ with the property that
 \[
\mathcal{R}' ({\bf z})  = 
  \begin{cases} 
 (\mathcal{R}({\bf z})\backslash \{l\})\cup \{j\}   & \text{if } {\bf z} \in  \mathcal{Z}_{jl} \\
   \mathcal{R}({\bf z}) &  \text{if } {\bf z} \not\in  \mathcal{Z}_{jl}  \end{cases}.
\]
\end{itemize}

Suppose both statements can be shown. Then any procedure can be improved by repeatedly applying Statement 2, until we end up with a procedure for which $P(\mathcal{Z}_{jl}) = 0$ for all $(j, l)$ pairs. We can then further improve this procedure by applying Statement 1, and end up with a procedure for which $P(\mathcal{Z}_{j}^{(1)}) = P(\mathcal{Z}_{j}^{(2)}) = 0$ for all $j$, and hence satisfying the conditions $(i)$ and $(ii)$ in \thmref{ODP}; since the ODP cannot be improved, it must  have the latter two conditions satisfied.

\begin{proof} [Proof of Statement 1] %
%
%Suppose only $\mathcal{Z}_j^{(1)}$ has non-zero probability. Moreover, since $\widehat{sgn}_{i}$ is only defined when $i \in \cR$, we extend the definition by letting $\widehat{sgn}_{i} ({\bf z}) = - \infty$ if $i \not \in \cR ({\bf z})$. %\mathcal{Z} = \{ {\bf z}: 
%%P(\theta_j \leq 0 |z_j) < P(\theta_j \geq 0 |z_j) \text{ and } \widehat{sgn}_j = -1 \}.
%%\] 
%Define $\cR' ({\bf z})= \cR ({\bf z})$ for all ${\bf z}$ and let
%
%
We write
\begin{align*}
&\ETD [ (\widehat{sgn}_{i}')_{i \in \cR'}] - \ETD[(\widehat{sgn}_{i})_{i \in \cR}]\\
&=\int \sum_{i \in \mathcal{R}'({\bf z})}  P(\widehat{sgn}'_{i}({\bf z})   = \text{sgn}(\theta_i) | {\bf z}) P({\bf z}) d{\bf z} - \int \sum_{i \in \mathcal{R}({\bf z})}  P(\widehat{sgn}_{i}({\bf z})   = \text{sgn}( \theta_i) | {\bf z}) P({\bf z}) d{\bf z} \\
&=\int_{\mathcal{Z}_j^{(1)} \cup \mathcal{Z}_j^{(2)}} \sum_{i \in \mathcal{R}'({\bf z})} [ P(\widehat{sgn}'_{i}({\bf z})   = \text{sgn}(\theta_i) | {\bf z}) -  P(\widehat{sgn}_{i}({\bf z})   = \sgn(\theta_i) | {\bf z})]P({\bf z}) d{\bf z} \\
&= \int_{\mathcal{Z}_j^{(1)} \cup \mathcal{Z}_j^{(2)}} [ P(\widehat{sgn}'_j({\bf z})   = \sgn(\theta_j) | {\bf z}) -  P(\widehat{sgn}_j({\bf z})   = \sgn(\theta_j) | {\bf z})]P({\bf z}) d{\bf z} \\
&= \int_{\mathcal{Z}_j^{(1)}}  [ P(\widehat{sgn}'_j({\bf z})   = \sgn(\theta_j) | {\bf z}) -  P(\widehat{sgn}_j({\bf z})   = \sgn(\theta_j) | {\bf z})]P({\bf z}) d{\bf z} +\\
&\quad  \int_{\mathcal{Z}_j^{(2)}}  [ P(\widehat{sgn}'_j({\bf z})   = \sgn(\theta_j) | {\bf z}) -  P(\widehat{sgn}_j({\bf z})   = \sgn(\theta_j) | {\bf z})]P({\bf z}) d{\bf z}\\
 &= \int_{\mathcal{Z}_j^{(1)}}  [ P(\theta_j > 0 |z_j) -  P(\theta_j <0 |z_j)]P({\bf z}) d{\bf z} +
 \int_{\mathcal{Z}_j^{(2)}}  [ P(\theta_j <0 |z_j)) -  P(\theta_j > 0 |z_j)]P({\bf z}) d{\bf z} > 0,
\end{align*}
where  the second and third equalities come  from the fact that $ (\widehat{sgn}_{i}')_{i \in \cR'}$ and $ (\widehat{sgn}_{i})_{i \in \cR}$ differ only on $\mathcal{Z}_j^{(1)} \cup \mathcal{Z}_j^{(2)}$ and for $j$, the fourth equality is from the disjointness of  $\mathcal{Z}_j^{(1)}$ and $\mathcal{Z}_j^{(2)}$, and the last equality is from how $\widehat{sgn}_j'$ is defined on  $\mathcal{Z}_j^{(1)}$ and $\mathcal{Z}_j^{(2)}$ as well as the independence across all $i = 1, \dots, m$.  Similarly, 
\begin{align*}
&\FDRdir[(\widehat{sgn}_{i})_{i \in \cR}]-\FDRdir[ (\widehat{sgn}_{i}')_{i \in \cR'}]  \\
&= 
 \int \frac{ \sum_{i \in \mathcal{R}({\bf z})}  P(\widehat{sgn}_{i}({\bf z})   \neq \sgn(\theta_i) | {\bf z}) }{ 1 \vee |\mathcal{R}({\bf z})|}    P({\bf z}) d{\bf z} 
-\int \frac{\sum_{i \in \mathcal{R}'({\bf z})}  P(\widehat{sgn}'_{i}({\bf z})   \neq \sgn(\theta_i) | {\bf z}) }{1 \vee |\mathcal{R}'({\bf z})|}P({\bf z}) d{\bf z}
\\
&= 
 \int_{\mathcal{Z}_j^{(1)}}  \left[ \frac{   P(\widehat{sgn}_j({\bf z})   \neq \sgn(\theta_j) | {\bf z}) }{ 1 \vee |\mathcal{R}({\bf z})|}  
- \frac{  P(\widehat{sgn}'_j({\bf z})   \neq \sgn(\theta_j) | {\bf z}) }{1 \vee |\mathcal{R}'({\bf z})|}\right] P({\bf z}) d{\bf z} + \\
&\quad \int_{\mathcal{Z}_j^{(2)}}  \left[ \frac{   P(\widehat{sgn}_j({\bf z})   \neq \sgn(\theta_j) | {\bf z}) }{1 \vee  |\mathcal{R}({\bf z})|}  
- \frac{  P(\widehat{sgn}'_j({\bf z})   \neq \sgn(\theta_j) | {\bf z}) }{1 \vee |\mathcal{R}'({\bf z})|}\right] P({\bf z}) d{\bf z}\\
&= 
 \int_{\mathcal{Z}_j^{(1)}}  \left[ \frac{   P(\theta_j \geq 0 |z_j) }{ 1\vee |\mathcal{R}({\bf z})|}  
- \frac{  P(\theta_j \leq 0 |z_j) }{1 \vee |\mathcal{R}'({\bf z})|}\right] P({\bf z}) d{\bf z} + 
 \int_{\mathcal{Z}_j^{(2)}}  \left[ \frac{   P(\theta_j \leq 0 |z_j) }{1 \vee | \mathcal{R}({\bf z})|}  
- \frac{   P(\theta_j \geq 0 |z_j) }{1 \vee |\mathcal{R}'({\bf z})|}\right] P({\bf z}) d{\bf z} \geq 0.
\end{align*} 
so $ (\widehat{sgn}'_{i})_{i \in \cR'}$ improves upon  $(\widehat{sgn}_{i})_{i \in \cR}$. 
\end{proof}
%If only $\mathcal{Z}_j^{(2)}$ has non-zero probability, one can define the  improved procedure $\{\cR', (\widehat{sgn}_{i}')_{i \in \cR'}\Bigr\}$ completely analogously. Moreover, when both $\mathcal{Z}_j^{(1)}$ and $\mathcal{Z}_j^{(2)}$ have non-zero probabilities, since the two sets are disjoint an improved procedure can obviously also be formed. Hence Case 1 is addressed.

\begin{proof} [Proof of Statement 2]

The proof for Statement 2 follows in a similar vein.  We first define 
\[
% simulate displayed equation
\mathcal{Z}_{jl}^{(1)}\equiv\left\lbrace {\bf z}\;\middle|\;
  \begin{varwidth}{\linewidth}
${\bf z} \in \mathcal{Z}_{jl}$;\\
    $P(\theta_j \leq 0 |z_j) < P(\theta_j \geq 0 |z_j)$.
   \end{varwidth}
  \right\rbrace
  and \ \
  % simulate displayed equation
\mathcal{Z}_{jl}^{(2)}\equiv\left\lbrace {\bf z}\;\middle|\;
  \begin{varwidth}{\linewidth}
${\bf z} \in \mathcal{Z}_{jl}$;\\
    $P(\theta_j \leq 0 |z_j) > P(\theta_j \geq 0 |z_j)$.
   \end{varwidth}
  \right\rbrace.
\]
Since $\mathcal{R}'$ has been defined, we only have to define $\widehat{sgn}'_{i}$ for each $i \in \mathcal{R}'$, as
\[
\widehat{sgn}'_{i}({\bf z})  = 
  \begin{cases} 
 1   & \text{if } i = j \text{ and } {\bf z} \in \mathcal{Z}_{jl}^{(1)} \\
 -1   & \text{if } i = j \text{ and } {\bf z} \in \mathcal{Z}_{jl}^{(2)} \\
 \widehat{sgn}_{i}({\bf z})  & \text{if otherwise}
   \end{cases}.
\]
One can then write 
\begin{align}
&\ETD[ (\widehat{sgn}'_{i})_{i \in \cR'}] - \ETD[(\widehat{sgn}_{i})_{i \in \cR}] \notag\\
&= \int_{\mathcal{Z}_{jl}^{(1)}}  [ P(\widehat{sgn}'_j({\bf z})   = \text{sgn}(\theta_j) | {\bf z}) -  P(\widehat{sgn}_l({\bf z})   = \text{sgn}(\theta_l) | {\bf z})]P({\bf z}) d{\bf z} \quad  + \notag\\
&\quad 
 \int_{\mathcal{Z}_{jl}^{(2)}}  [ P(\widehat{sgn}'_j({\bf z})   = \sgn(\theta_j) | {\bf z}) -  P(\widehat{sgn}_l({\bf z})   = \sgn(\theta_l) | {\bf z})]P({\bf z}) d{\bf z} \notag\\
 &= \int_{\mathcal{Z}_{jl}^{(1)}}  [ P(\theta_j > 0|{\bf z}) -  P(\widehat{sgn}_l({\bf z})   = \text{sgn}(\theta_l) | {\bf z})]P({\bf z}) d{\bf z} \quad + \label{finaltag1}\\
 &\quad \int_{\mathcal{Z}_{jl}^{(2)}}  [ P(\theta_j < 0|{\bf z}) -  P(\widehat{sgn}_l({\bf z})   =\text{sgn}( \theta_l) | {\bf z})]P({\bf z}) d{\bf z}. \label{finaltag2}
\end{align}
by a similar train of equalities as in the proof for Statement 1. For ${\bf z} \in \mathcal{Z}_{jl}^{(1)}$, $P(\theta_j \leq 0 |{\bf z}) = \lfsr_j < \lfsr_l \leq P(\widehat{sgn}_l({\bf z})   \neq \text{sgn}(\theta_l) | {\bf z})$, where the last inequality comes from the fact that $\lfsr_l = P( \theta_l \leq 0 |{\bf z})\wedge P( \theta_l \geq 0 |{\bf z})$ is the smallest conditional probability of making a false discovery that can possibly be achieved by $\widehat{sgn}_l ({\bf z})$. This in turns implies  $ P(\theta_j > 0 |{\bf z})  > P(\widehat{sgn}_l({\bf z})   = \text{sgn}(\theta_l) | {\bf z})$, which means that the term in \eqref{finaltag1} is greater than 0. Similarly one can show that the term in \eqref{finaltag2} is also greater than 0, which gives $\ETD[ (\widehat{sgn}'_{i})_{i \in \cR'}] - \ETD[(\widehat{sgn}_{i})_{i \in \cR}] > 0$.

We can also show that $\FDRdir [ (\widehat{sgn}'_{i})_{i \in \cR'}] \leq \FDRdir[(\widehat{sgn}_{i})_{i \in \cR}]$ similarly; to avoid repetitions, we leave the details to the reader. 
\end{proof}

\newpage
% new appendix
\section{Additional content for \secref{simul}} 

\subsection{ Implementation of the LSFR  procedure} \label{app:imple}
An exact implementation of the ODP described in \thmref{ODP}  involves solving a rather complex infinite integer programming problem  \citep{heller2021optimal} to determine a threshold for the local false sign rates. As an alternative, in \secref{simul},  LFSR is  a similar  oracle procedure with an attractively  simpler implementation first suggested by \citet{sun2007oracle}, and it suffices to serve as an oracle benchmark for the power of our compared procedures. Suppose we denote this procedure as $(\widehat{sgn}_i^{SC})_{i \in \mathcal{R}_{SC}}$  in our notation. Then the signs $\widehat{sgn}_i^{SC}$ are  declared as in \tabref{OptSign}, i.e., $\widehat{sgn}_i^{SC} = \widehat{sgn}_i^{ODP}$, and the discovery set  is defined by 
\[
 \mathcal{R}_{SC}\equiv \{i: \lfsr_i \leq \lfsr_{(j)}\}
\]
with the index 
  \begin{equation*}\label{oraclej}
j = j(q) \equiv \max \left\{ i' \in  [m] :  \frac{\sum_{i = 1}^{i'}  \lfsr_{(i)} }{i'}  \leq q \right\},
\end{equation*}
where  $\lfsr_{(1)} \leq  \dots \leq \lfsr_{(m)}$ are the order statistics of true local false sign rates, and $ \mathcal{R}_{SC}$ is the empty set if $j$ is not well-defined. The ratio $\frac{\sum_{i = 1}^{i'}  \lfsr_{(i)} }{i'}$ in the definition of $j$ is precisely the conditional $\FDRdir$
\[
\mathbb{E}\left[ \frac{|\{i: \widehat{sgn}_i^{SC} \neq \text{sgn}(\theta_i) \text{ and } \lfsr_i \leq \lfsr_{(i')} \}|}{i'} \Big| \{z_i\}_{i \in [m]}\right]
\]
 of  optimally declaring the signs for the subset $\{i: \lfsr_i \leq \lfsr_{(i')}\}$ given the data, which also implies the $\FDRdir$ of $(\widehat{sgn}_i^{SC})_{i \in \mathcal{R}_{SC}}$,
 \begin{multline*}
 \mathbb{E} \left[   \frac{\widehat{sgn}_i^{SC} \neq \text{sgn}(\theta_i) }{ 1 \vee |\mathcal{R}_{SC}|} \right] = 
  \mathbb{E} \left[   \mathbb{E} \left[   \frac{\widehat{sgn}_i^{SC} \neq \text{sgn}(\theta_i) }{ 1 \vee |\mathcal{R}_{SC}|}\Big| \{z_i\}_{i \in [m]} \right]\right] \\
=    \mathbb{E} \left[  \frac{\sum_{i = 1}^{j}  \lfsr_{(i)} }{j} \Big|   \mathcal{R}_{SC} \neq \varnothing \right] P(\mathcal{R}_{SC} \neq \varnothing),
 \end{multline*}
 is less than $q$ by how $j$ was defined, under the Bayesian formulation \eqref{prior}.

\subsection{Other simulation results} \label{app:simResultsExtra}

An R package for our methods is available at 
\url{https://github.com/ninhtran02/zdirect}. We also conducted additional simulation studies to evaluate the performance of different methods under dependent $z$-values. Specifically, given $\btheta = (\theta_1, \dots, \theta_m)$ generated exactly as described in \secref{simulation_setup_prototype} using the same simulation parameters, we generate $\bz = (z_1, \dots, z_m)$ with a multivariate normal distribution $N(\btheta, \Sigma)$ and an autoregressive covariance structure
$
\Sigma_{ij} = \rho^{|i -j|}
$ for $1 \leq i, j, \leq m$. The following values of $\rho$ are experimented with:

\subsubsection{Positive autoregressive dependence} \label{app:positive_ar1}
Weak and strong positive dependence with $\rho = 0.5$ and $\rho = 0.8$.

\subsubsection{Negative autoregressive dependence}  \label{app:negative_ar1}
 Weak and strong negative dependence with $\rho = - 0.5$ and $\rho = -  0.8$.

\subsubsection{Brief summary}  \label{app:summary_additional_simulations}

The performances of different methods are included in \figsref{result_ar1_small_positive}  and \figssref{result_ar1_large_positive} for the positively dependent settings in \appref{positive_ar1}, and \figsref{result_ar1_small_negative} and \figssref{result_ar1_large_negative} for the negatively dependent settings in  \appref{negative_ar1}. Note that, in addition to the existing methods  from \secref{methods_compared_sim},  we have included an extra method called ``$\dBHdir$'' in our results. The term ``dBH'' refers to the \emph{dependence-adjusted BH procedure}, a recent advancement in FDR testing under arbitrary dependence proposed by \citet{dBH_paper}. It serves as a theoretically valid and more powerful alternative to the widely recognized but very conservative BY procedure \citep{benjamini2001control}. We note that dBH requires prior knowledge of the underlying dependence structure, distinguishing it from the BY procedure. $\dBHdir$ is a variant of dBH designed specifically for multivariate normal $z$-values and $\FDRdir$ control. It is implemented through the function \texttt{dBH\_mvgauss} in the R package \texttt{dbh}, where the \texttt{gamma} parameter is set to $0.9$ following the recommendation by \citet{dBH_paper} for two-sided testing. Its exact $\FDRdir$ control for our settings with dependent $z$-values  is established by \citet[Corollary 7]{dBH_paper}. 

The $\FDRdir$ and power of each method under the dependent settings outlined in \appref{positive_ar1} and \appref{negative_ar1} are similar to those under the independent setting outlined in \secref{simul}. Nevertheless, subtle differences emerge. For strong autoregressive dependence $\rho \in \{-0.8,0.8 \}$, when $w = 0.8$ and $\xi \in \{ 0.5, 1\}$, the methods $\text{STS}_{\text{dir}}$, $\text{aSTS}_{\text{dir}}$, $\BHdir$, LFSR, GR and $\dBHdir$ exhibited slight $\FDRdir$ decreases by approximately $0.01$ to $0.02$. Conversely, ZDIRECT displayed slight $\FDRdir$ increases by approximately $0.01$. Despite these increases in $\FDRdir$ under strong autoregressive dependence, ZDIRECT consistently maintained empirical control of $\FDRdir$ below the designated level of $q = 0.10$ throughout our additional simulations.

We observed that $\dBHdir$ is slightly more conservative in $\FDRdir$ control and less powerful than $\BHdir$ across our additional simulations. This difference in performance may be attributed to the recommended \texttt{gamma} parameter choice of $0.9$ by \cite{dBH_paper}, chosen to reduce the likelihood of obtaining a randomly ``pruned'' rejection set; see \citet[Section 2.2]{dBH_paper} for an explanation of why it is preferable to avoid the randomized pruning step built into dBH-type methods. This cautious parameter choice may compromise any potential power advantage $\dBHdir$ could have over $\BHdir$ in the presence of autoregressive dependence.

\begin{figure*}[htp]
\centering
\includegraphics[scale=0.66]{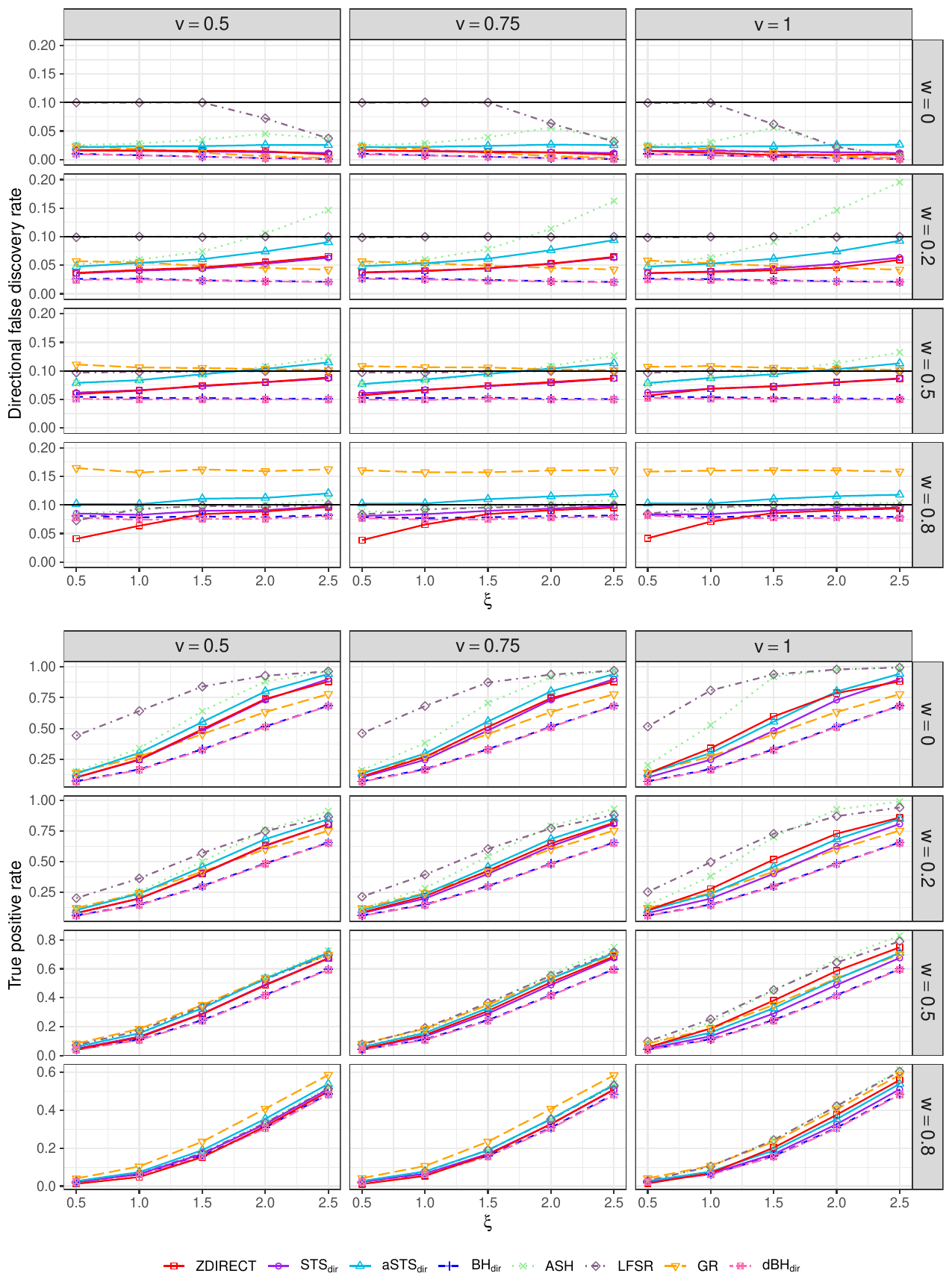}
\caption{
Empirical directional false discovery rate and true positive rates of the eight compared methods for the simulations in \appref{positive_ar1} with $\rho = 0.5$; each method was implemented at a target $\FDRdir$ level $q =0.1$ (black horizontal lines).
 }
\label{fig:result_ar1_small_positive}
\end{figure*}

\begin{figure*}[htp]
\centering
\includegraphics[scale=0.66]{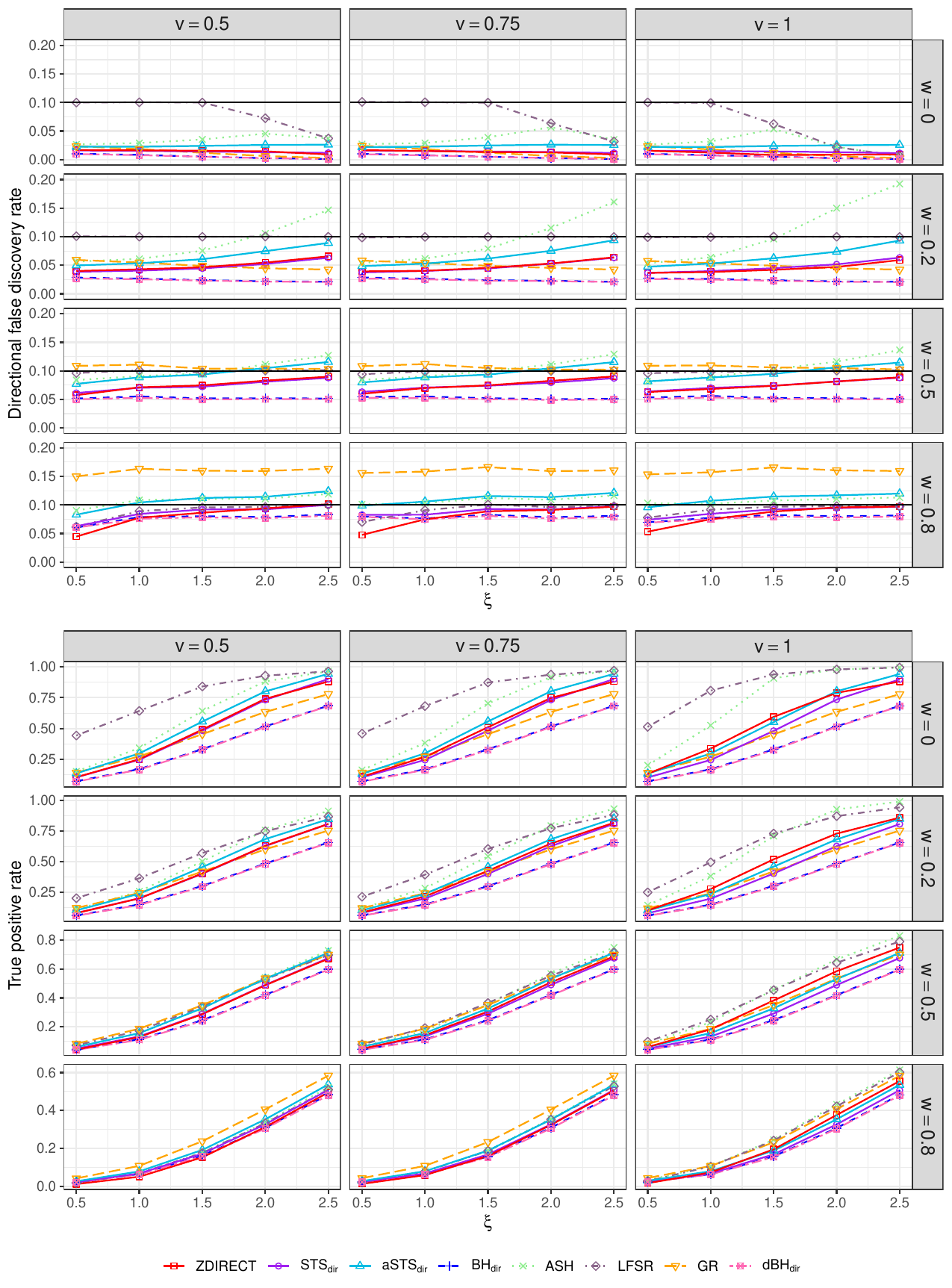}
\caption{
Empirical directional false discovery rate and true positive rates of the eight compared methods for the simulations in \appref{positive_ar1} with $\rho = 0.8$; each method was implemented at a target $\FDRdir$ level $q =0.1$ (black horizontal lines).
 }
\label{fig:result_ar1_large_positive}
\end{figure*}

\begin{figure*}[htp]
\centering
\includegraphics[scale=0.66]{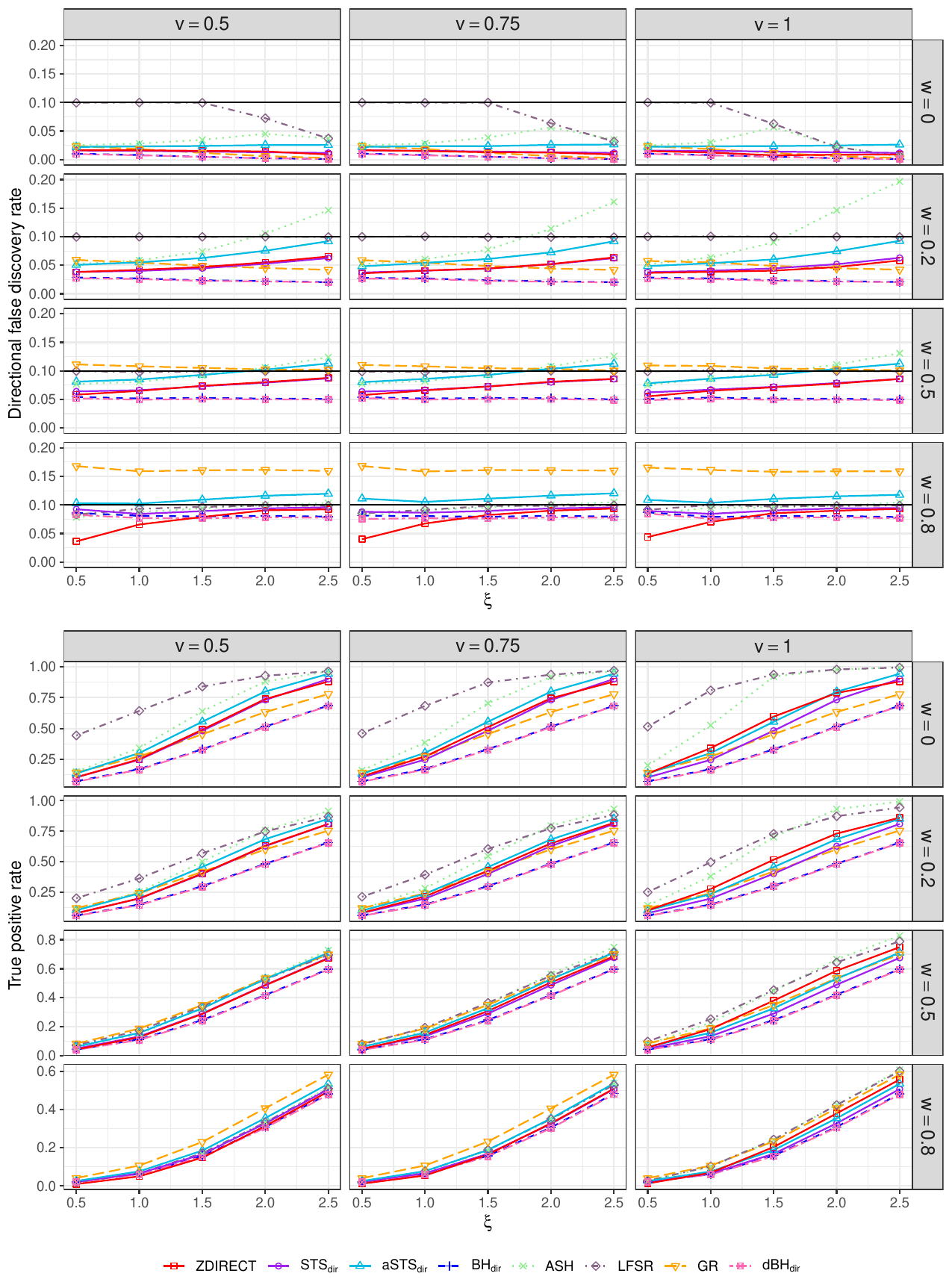}
\caption{
Empirical directional false discovery rate and true positive rates of the eight compared methods for the simulations in \appref{negative_ar1} with $\rho = - 0.5$; each method was implemented at a target $\FDRdir$ level $q =0.1$ (black horizontal lines).
 }
\label{fig:result_ar1_small_negative}
\end{figure*}

\begin{figure*}[htp]
\centering
\includegraphics[scale=0.66]{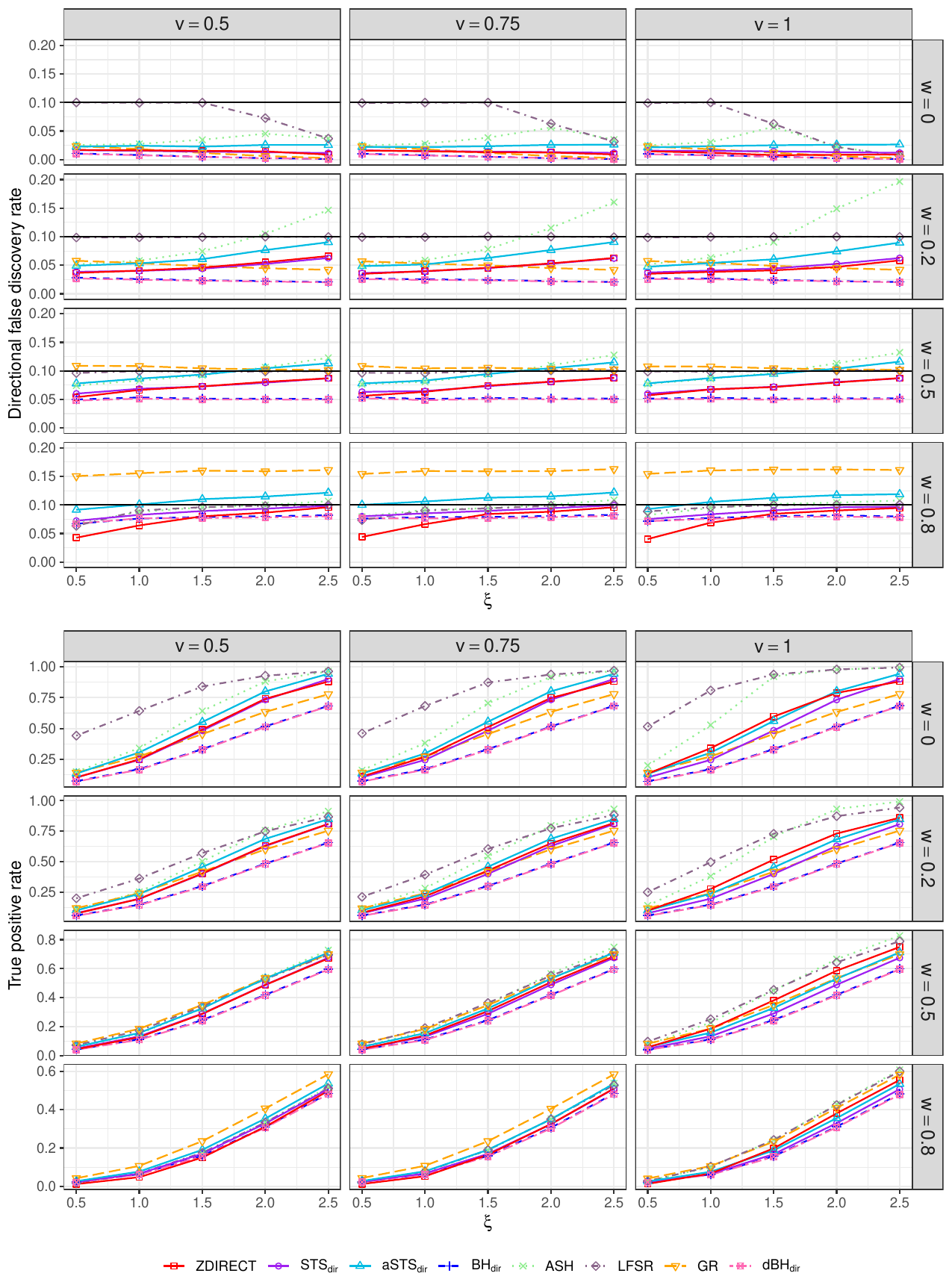}
\caption{
Empirical directional false discovery rate and true positive rates of the eight compared methods for the simulations in \appref{negative_ar1} with $\rho = - 0.8$; each method was implemented at a target $\FDRdir$ level $q =0.1$ (black horizontal lines).
 }
\label{fig:result_ar1_large_negative}
\end{figure*}

\end{appendix}

%%%%%%%%%%%%%%%%%%%%%%%%%%%%%%%%%%%%%%%%%%%%%%
%% Acknowledgements                         %%
%% should be provided in the                %%
%% Acknowledgements section.                %%
%%%%%%%%%%%%%%%%%%%%%%%%%%%%%%%%%%%%%%%%%%%%%%
\begin{acks}[Acknowledgments]
The authors would like to thank the anonymous referees and an Associate
Editor  for their constructive comments that improved the
quality of this paper.
\end{acks}

\bibliographystyle{imsart-nameyear.bst} % Style BST file (imsart-number.bst or imsart-nameyear.bst)
\bibliography{dFDRref.bib}       % Bibliography file (usually '*.bib')

\begin{thebibliography}{33}
% BibTex style file: imsart-nameyear.bst, 2017-11-03
% Default style options (sort=1,type=nameyear).
% Used options (sort=1,type=nameyear).

\bibitem[\protect\citeauthoryear{ApS}{2022}]{mosek}
\begin{bmanual}[author]
\bauthor{\bsnm{ApS},~\bfnm{MOSEK}\binits{M.}}
(\byear{2022}).
\btitle{The MOSEK Rmosek package. Version 9.3.20}.
\end{bmanual}
\endbibitem

\bibitem[\protect\citeauthoryear{Barber et~al.}{2015}]{barber2015controlling}
\begin{barticle}[author]
\bauthor{\bsnm{Barber},~\bfnm{Rina~Foygel}\binits{R.~F.}},
  \bauthor{\bsnm{Cand{\`e}s},~\bfnm{Emmanuel~J}\binits{E.~J.}} \betal{et~al.}
(\byear{2015}).
\btitle{Controlling the false discovery rate via knockoffs}.
\bjournal{The Annals of Statistics}
\bvolume{43}
\bpages{2055--2085}.
\end{barticle}
\endbibitem

\bibitem[\protect\citeauthoryear{Barber et~al.}{2019}]{barber2019knockoff}
\begin{barticle}[author]
\bauthor{\bsnm{Barber},~\bfnm{Rina~Foygel}\binits{R.~F.}},
  \bauthor{\bsnm{Cand{\`e}s},~\bfnm{Emmanuel~J}\binits{E.~J.}} \betal{et~al.}
(\byear{2019}).
\btitle{A knockoff filter for high-dimensional selective inference}.
\bjournal{The Annals of Statistics}
\bvolume{47}
\bpages{2504--2537}.
\end{barticle}
\endbibitem

\bibitem[\protect\citeauthoryear{Benjamini and
  Hochberg}{1995}]{benjamini1995controlling}
\begin{barticle}[author]
\bauthor{\bsnm{Benjamini},~\bfnm{Yoav}\binits{Y.}} \AND
  \bauthor{\bsnm{Hochberg},~\bfnm{Yosef}\binits{Y.}}
(\byear{1995}).
\btitle{Controlling the false discovery rate: a practical and powerful approach
  to multiple testing}.
\bjournal{Journal of the Royal statistical society: series B (Methodological)}
\bvolume{57}
\bpages{289--300}.
\end{barticle}
\endbibitem

\bibitem[\protect\citeauthoryear{Benjamini and
  Hochberg}{2000}]{benjamini2000adaptive}
\begin{barticle}[author]
\bauthor{\bsnm{Benjamini},~\bfnm{Yoav}\binits{Y.}} \AND
  \bauthor{\bsnm{Hochberg},~\bfnm{Yosef}\binits{Y.}}
(\byear{2000}).
\btitle{On the adaptive control of the false discovery rate in multiple testing
  with independent statistics}.
\bjournal{Journal of educational and Behavioral Statistics}
\bvolume{25}
\bpages{60--83}.
\end{barticle}
\endbibitem

\bibitem[\protect\citeauthoryear{Benjamini, Krieger and
  Yekutieli}{2006}]{benjamini2006adaptive}
\begin{barticle}[author]
\bauthor{\bsnm{Benjamini},~\bfnm{Yoav}\binits{Y.}},
  \bauthor{\bsnm{Krieger},~\bfnm{Abba~M}\binits{A.~M.}} \AND
  \bauthor{\bsnm{Yekutieli},~\bfnm{Daniel}\binits{D.}}
(\byear{2006}).
\btitle{Adaptive linear step-up procedures that control the false discovery
  rate}.
\bjournal{Biometrika}
\bvolume{93}
\bpages{491--507}.
\end{barticle}
\endbibitem

\bibitem[\protect\citeauthoryear{Benjamini and
  Yekutieli}{2001}]{benjamini2001control}
\begin{barticle}[author]
\bauthor{\bsnm{Benjamini},~\bfnm{Yoav}\binits{Y.}} \AND
  \bauthor{\bsnm{Yekutieli},~\bfnm{Daniel}\binits{D.}}
(\byear{2001}).
\btitle{The control of the false discovery rate in multiple testing under
  dependency}.
\bjournal{Annals of statistics}
\bpages{1165--1188}.
\end{barticle}
\endbibitem

\bibitem[\protect\citeauthoryear{Benjamini and
  Yekutieli}{2005}]{benjamini2005false}
\begin{barticle}[author]
\bauthor{\bsnm{Benjamini},~\bfnm{Yoav}\binits{Y.}} \AND
  \bauthor{\bsnm{Yekutieli},~\bfnm{Daniel}\binits{D.}}
(\byear{2005}).
\btitle{False discovery rate--adjusted multiple confidence intervals for
  selected parameters}.
\bjournal{Journal of the American Statistical Association}
\bvolume{100}
\bpages{71--81}.
\end{barticle}
\endbibitem

\bibitem[\protect\citeauthoryear{Chao and Fithian}{2021}]{chao2021}
\begin{barticle}[author]
\bauthor{\bsnm{Chao},~\bfnm{Patrick}\binits{P.}} \AND
  \bauthor{\bsnm{Fithian},~\bfnm{William}\binits{W.}}
(\byear{2021}).
\btitle{AdaPT-GMM: Powerful and robust covariate-assisted multiple testing}.
\bjournal{arXiv preprint arXiv:2106.15812}.
\end{barticle}
\endbibitem

\bibitem[\protect\citeauthoryear{Dempster, Laird and
  Rubin}{1977}]{dempster1977maximum}
\begin{barticle}[author]
\bauthor{\bsnm{Dempster},~\bfnm{Arthur~P}\binits{A.~P.}},
  \bauthor{\bsnm{Laird},~\bfnm{Nan~M}\binits{N.~M.}} \AND
  \bauthor{\bsnm{Rubin},~\bfnm{Donald~B}\binits{D.~B.}}
(\byear{1977}).
\btitle{Maximum likelihood from incomplete data via the EM algorithm}.
\bjournal{Journal of the royal statistical society: series B (methodological)}
\bvolume{39}
\bpages{1--22}.
\end{barticle}
\endbibitem

\bibitem[\protect\citeauthoryear{Feller}{1971}]{feller_vol2_1971}
\begin{bbook}[author]
\bauthor{\bsnm{Feller},~\bfnm{William}\binits{W.}}
(\byear{1971}).
\btitle{An introduction to probability theory and its applications. {V}ol.
  {II}},
\bedition{Second} ed.
\bpublisher{John Wiley \& Sons, Inc., New York-London-Sydney}.
\end{bbook}
\endbibitem

\bibitem[\protect\citeauthoryear{Fithian and Lei}{2022}]{dBH_paper}
\begin{barticle}[author]
\bauthor{\bsnm{Fithian},~\bfnm{William}\binits{W.}} \AND
  \bauthor{\bsnm{Lei},~\bfnm{Lihua}\binits{L.}}
(\byear{2022}).
\btitle{Conditional calibration for false discovery rate control under
  dependence}.
\bjournal{Ann. Statist.}
\bvolume{50}
\bpages{3091--3118}.
\end{barticle}
\endbibitem

\bibitem[\protect\citeauthoryear{Gelman and Tuerlinckx}{2000}]{gelman2000type}
\begin{barticle}[author]
\bauthor{\bsnm{Gelman},~\bfnm{Andrew}\binits{A.}} \AND
  \bauthor{\bsnm{Tuerlinckx},~\bfnm{Francis}\binits{F.}}
(\byear{2000}).
\btitle{Type S error rates for classical and Bayesian single and multiple
  comparison procedures}.
\bjournal{Computational statistics}
\bvolume{15}
\bpages{373--390}.
\end{barticle}
\endbibitem

\bibitem[\protect\citeauthoryear{Guo and Romano}{2015}]{guo2015stepwise}
\begin{barticle}[author]
\bauthor{\bsnm{Guo},~\bfnm{Wenge}\binits{W.}} \AND
  \bauthor{\bsnm{Romano},~\bfnm{Joseph~P}\binits{J.~P.}}
(\byear{2015}).
\btitle{On stepwise control of directional errors under independence and some
  dependence}.
\bjournal{Journal of Statistical Planning and Inference}
\bvolume{163}
\bpages{21--33}.
\end{barticle}
\endbibitem

\bibitem[\protect\citeauthoryear{Guo, Sarkar and
  Peddada}{2010}]{guo2010controlling}
\begin{barticle}[author]
\bauthor{\bsnm{Guo},~\bfnm{Wenge}\binits{W.}},
  \bauthor{\bsnm{Sarkar},~\bfnm{Sanat~K}\binits{S.~K.}} \AND
  \bauthor{\bsnm{Peddada},~\bfnm{Shyamal~D}\binits{S.~D.}}
(\byear{2010}).
\btitle{Controlling false discoveries in multidimensional directional
  decisions, with applications to gene expression data on ordered categories}.
\bjournal{Biometrics}
\bvolume{66}
\bpages{485--492}.
\end{barticle}
\endbibitem

\bibitem[\protect\citeauthoryear{Heller and Rosset}{2021}]{heller2021optimal}
\begin{barticle}[author]
\bauthor{\bsnm{Heller},~\bfnm{Ruth}\binits{R.}} \AND
  \bauthor{\bsnm{Rosset},~\bfnm{Saharon}\binits{S.}}
(\byear{2021}).
\btitle{Optimal control of false discovery criteria in the two-group model}.
\bjournal{Journal of the Royal Statistical Society: Series B (Statistical
  Methodology)}
\bvolume{83}
\bpages{133--155}.
\end{barticle}
\endbibitem

\bibitem[\protect\citeauthoryear{Kraemer and Paik}{1979}]{kraemer1979central}
\begin{barticle}[author]
\bauthor{\bsnm{Kraemer},~\bfnm{Helena~Chmura}\binits{H.~C.}} \AND
  \bauthor{\bsnm{Paik},~\bfnm{Minja}\binits{M.}}
(\byear{1979}).
\btitle{A central t approximation to the noncentral t distribution}.
\bjournal{Technometrics}
\bvolume{21}
\bpages{357--360}.
\end{barticle}
\endbibitem

\bibitem[\protect\citeauthoryear{Kruskal}{1954}]{kruskal1954}
\begin{barticle}[author]
\bauthor{\bsnm{Kruskal},~\bfnm{William}\binits{W.}}
(\byear{1954}).
\btitle{The Monotonicity of the Ratio of Two Noncentral t Density Functions}.
\bjournal{Journal of educational and Behavioral Statistics}
\bvolume{25}
\bpages{162--165}.
\end{barticle}
\endbibitem

\bibitem[\protect\citeauthoryear{Laubscher}{1960}]{laubscher1960normalizing}
\begin{barticle}[author]
\bauthor{\bsnm{Laubscher},~\bfnm{Nico~F}\binits{N.~F.}}
(\byear{1960}).
\btitle{Normalizing the Noncentral $ t $ and $ F $ Distributions}.
\bjournal{The Annals of Mathematical Statistics}
\bvolume{31}
\bpages{1105--1112}.
\end{barticle}
\endbibitem

\bibitem[\protect\citeauthoryear{Lehmann, Romano and
  Casella}{2005}]{lehmann2005testing}
\begin{bbook}[author]
\bauthor{\bsnm{Lehmann},~\bfnm{Erich~Leo}\binits{E.~L.}},
  \bauthor{\bsnm{Romano},~\bfnm{Joseph~P}\binits{J.~P.}} \AND
  \bauthor{\bsnm{Casella},~\bfnm{George}\binits{G.}}
(\byear{2005}).
\btitle{Testing statistical hypotheses}
\bvolume{3}.
\bpublisher{Springer}.
\end{bbook}
\endbibitem

\bibitem[\protect\citeauthoryear{Lei and Fithian}{2018}]{lei2018adapt}
\begin{barticle}[author]
\bauthor{\bsnm{Lei},~\bfnm{Lihua}\binits{L.}} \AND
  \bauthor{\bsnm{Fithian},~\bfnm{William}\binits{W.}}
(\byear{2018}).
\btitle{AdaPT: an interactive procedure for multiple testing with side
  information}.
\bjournal{Journal of the Royal Statistical Society: Series B (Statistical
  Methodology)}
\bvolume{80}
\bpages{649--679}.
\end{barticle}
\endbibitem

\bibitem[\protect\citeauthoryear{Lei, Ramdas and Fithian}{2021}]{lei2021}
\begin{barticle}[author]
\bauthor{\bsnm{Lei},~\bfnm{Lihua}\binits{L.}},
  \bauthor{\bsnm{Ramdas},~\bfnm{Aaditya}\binits{A.}} \AND
  \bauthor{\bsnm{Fithian},~\bfnm{William}\binits{W.}}
(\byear{2021}).
\btitle{A general interactive framework for false discovery rate control under
  structural constraints}.
\bjournal{Biometrika}
\bvolume{108}
\bpages{253--267}.
\end{barticle}
\endbibitem

\bibitem[\protect\citeauthoryear{Leung and Sun}{2022}]{leung2022zap}
\begin{barticle}[author]
\bauthor{\bsnm{Leung},~\bfnm{Dennis}\binits{D.}} \AND
  \bauthor{\bsnm{Sun},~\bfnm{Wenguang}\binits{W.}}
(\byear{2022}).
\btitle{ZAP: Z-value Adaptive Procedures for False Discovery Rate Control with
  Side Information}.
\bjournal{Journal of the Royal Statistical Society Series B}
\bvolume{84}
\bpages{1886--1946}.
\end{barticle}
\endbibitem

\bibitem[\protect\citeauthoryear{Stephens}{2017}]{stephens2017false}
\begin{barticle}[author]
\bauthor{\bsnm{Stephens},~\bfnm{Matthew}\binits{M.}}
(\byear{2017}).
\btitle{False discovery rates: a new deal}.
\bjournal{Biostatistics}
\bvolume{18}
\bpages{275--294}.
\end{barticle}
\endbibitem

\bibitem[\protect\citeauthoryear{Storey, Dai and
  Leek}{2007}]{storeyleekdai2007optimal}
\begin{barticle}[author]
\bauthor{\bsnm{Storey},~\bfnm{John~D}\binits{J.~D.}},
  \bauthor{\bsnm{Dai},~\bfnm{James~Y}\binits{J.~Y.}} \AND
  \bauthor{\bsnm{Leek},~\bfnm{Jeffrey~T}\binits{J.~T.}}
(\byear{2007}).
\btitle{The optimal discovery procedure for large-scale significance testing,
  with applications to comparative microarray experiments}.
\bjournal{Biostatistics}
\bvolume{8}
\bpages{414--432}.
\end{barticle}
\endbibitem

\bibitem[\protect\citeauthoryear{Storey, Taylor and
  Siegmund}{2004}]{storey2004strong}
\begin{barticle}[author]
\bauthor{\bsnm{Storey},~\bfnm{John~D}\binits{J.~D.}},
  \bauthor{\bsnm{Taylor},~\bfnm{Jonathan~E}\binits{J.~E.}} \AND
  \bauthor{\bsnm{Siegmund},~\bfnm{David}\binits{D.}}
(\byear{2004}).
\btitle{Strong control, conservative point estimation and simultaneous
  conservative consistency of false discovery rates: a unified approach}.
\bjournal{Journal of the Royal Statistical Society: Series B (Statistical
  Methodology)}
\bvolume{66}
\bpages{187--205}.
\end{barticle}
\endbibitem

\bibitem[\protect\citeauthoryear{Sun and Cai}{2007}]{sun2007oracle}
\begin{barticle}[author]
\bauthor{\bsnm{Sun},~\bfnm{Wenguang}\binits{W.}} \AND
  \bauthor{\bsnm{Cai},~\bfnm{T~Tony}\binits{T.~T.}}
(\byear{2007}).
\btitle{Oracle and adaptive compound decision rules for false discovery rate
  control}.
\bjournal{Journal of the American Statistical Association}
\bvolume{102}
\bpages{901--912}.
\end{barticle}
\endbibitem

\bibitem[\protect\citeauthoryear{Tian, Liang and Li}{2021}]{tian2021}
\begin{barticle}[author]
\bauthor{\bsnm{Tian},~\bfnm{Zhaoyang}\binits{Z.}},
  \bauthor{\bsnm{Liang},~\bfnm{Kun}\binits{K.}} \AND
  \bauthor{\bsnm{Li},~\bfnm{Pengfei}\binits{P.}}
(\byear{2021}).
\btitle{A powerful procedure that controls the false discovery rate with
  directional information}.
\bjournal{Biometrics}
\bvolume{77}
\bpages{212--222}.
\end{barticle}
\endbibitem

\bibitem[\protect\citeauthoryear{Tukey}{1962}]{tukey1962future}
\begin{barticle}[author]
\bauthor{\bsnm{Tukey},~\bfnm{John~W}\binits{J.~W.}}
(\byear{1962}).
\btitle{The future of data analysis}.
\bjournal{The annals of mathematical statistics}
\bvolume{33}
\bpages{1--67}.
\end{barticle}
\endbibitem

\bibitem[\protect\citeauthoryear{Tukey}{1991}]{tukey1991philosophy}
\begin{barticle}[author]
\bauthor{\bsnm{Tukey},~\bfnm{John~W}\binits{J.~W.}}
(\byear{1991}).
\btitle{The philosophy of multiple comparisons}.
\bjournal{Statistical science}
\bpages{100--116}.
\end{barticle}
\endbibitem

\bibitem[\protect\citeauthoryear{Weinstein and
  Yekutieli}{2020}]{weinstein2020selective}
\begin{barticle}[author]
\bauthor{\bsnm{Weinstein},~\bfnm{Asaf}\binits{A.}} \AND
  \bauthor{\bsnm{Yekutieli},~\bfnm{Daniel}\binits{D.}}
(\byear{2020}).
\btitle{Selective sign-determining multiple confidence intervals with FCR
  control}.
\bjournal{Statistica Sinica}
\bvolume{30}
\bpages{531--555}.
\end{barticle}
\endbibitem

\bibitem[\protect\citeauthoryear{Yurko et~al.}{2020}]{yurko2020}
\begin{barticle}[author]
\bauthor{\bsnm{Yurko},~\bfnm{Ronald}\binits{R.}},
  \bauthor{\bsnm{G’Sell},~\bfnm{Max}\binits{M.}},
  \bauthor{\bsnm{Roeder},~\bfnm{Kathryn}\binits{K.}} \AND
  \bauthor{\bsnm{Devlin},~\bfnm{Bernie}\binits{B.}}
(\byear{2020}).
\btitle{A selective inference approach for false discovery rate control using
  multiomics covariates yields insights into disease risk}.
\bjournal{Proceedings of the National Academy of Sciences}
\bvolume{117}
\bpages{15028--15035}.
\end{barticle}
\endbibitem

\bibitem[\protect\citeauthoryear{Zhao and Fung}{2018}]{zhao2018controlling}
\begin{barticle}[author]
\bauthor{\bsnm{Zhao},~\bfnm{Haibing}\binits{H.}} \AND
  \bauthor{\bsnm{Fung},~\bfnm{Wing~Kam}\binits{W.~K.}}
(\byear{2018}).
\btitle{Controlling mixed directional false discovery rate in multidimensional
  decisions with applications to microarray studies}.
\bjournal{Test}
\bvolume{27}
\bpages{316--337}.
\end{barticle}
\endbibitem

\end{thebibliography}

\end{document}